\renewcommand{\qed}{\nobreak \ifvmode \relax \else
      \ifdim\lastskip<1.5em \hskip-\lastskip
      \hskip1.5em plus0em minus0.5em \fi \nobreak
      \vrule height0.75em width0.5em depth0.25em\fi}
\newtheoremstyle{slplain}
  {5pt}
  {3pt}
  {\itshape}
  {}
  {\bfseries}
  {.}
  { }
  {}
\theoremstyle{slplain}
\renewenvironment{proof}[1][\proofname]{\par
  \vspace{1pt}
  \pushQED{\qed}%
  \normalfont
  \topsep0pt \partopsep0pt 
  \trivlist
  \item[\hskip\labelsep
        \itshape
    #1\@addpunct{.}]\ignorespaces
}{%
  \popQED\endtrivlist\@endpefalse
  \addvspace{6pt} 
}
\title{Time-Optimal Sublinear Algorithms for\\Matching and Vertex Cover\footnote{A preliminary version of this paper appeared in proceedings of FOCS 2021.\vspace{0.2cm}}}
\author{Soheil Behnezhad\footnote{Research was done in part while the author was funded by a Google PhD Fellowship at the University of Maryland.}\\ {\em Stanford University}}
\date{}
\DeclareMathOperator{\polylog}{polylog}
\DeclareMathOperator*{\Exp}{{\normalfont \textbf{E}}}
\DeclareMathOperator*{\Prob}{{\normalfont \textbf{Pr}}}
\newcommand{\E}[0]{\Exp}
\renewcommand{\Pr}[0]{\Prob}
\renewcommand{\epsilon}{\varepsilon}
\newcommand{\GMM}[1]{\ensuremath{\mathsf{GMM}(#1)}}
\newcommand{\cev}[1]{\reflectbox{\ensuremath{\vec{\reflectbox{\ensuremath{#1}}}}}}
\newcommand{\mc}[1]{\ensuremath{\mathcal{#1}}}
\renewcommand{\paragraph}{%
  \@startsection{paragraph}{4}%
  {\z@}{2pt}{-0.5em}%
  {\normalfont\normalsize\bfseries}%
}
\newcommand{\edgeoracle}[1]{\ensuremath{\mathsf{EO}(#1)}}
\newcommand{\vertexoracle}[1]{\ensuremath{\mathsf{VO}(#1)}}
\newcommand{\true}[0]{\ensuremath{\mathsf{TRUE}}}
\newcommand{\false}[0]{\ensuremath{\mathsf{FALSE}}}
\DeclareMathOperator{\poly}{poly}
\newtheorem{theorem}{Theorem}[section]
\newtheorem{lemma}[theorem]{Lemma}
\newtheorem{proposition}[theorem]{Proposition}
\newtheorem{corollary}[theorem]{Corollary}
\newtheorem{definition}[theorem]{Definition}
\newtheorem{claim}[theorem]{Claim}
\newtheorem{observation}[theorem]{Observation}
\newtheorem{remark}[theorem]{Remark}
\newtheorem{assumption}[theorem]{Assumption}
\newcommand{\etal}[0]{\textit{et al.}}
\definecolor{mygreen}{RGB}{20,120,60}
\definecolor{myred}{RGB}{200,60,60}
\definecolor{mylightgray}{RGB}{230,230,230}
\definecolor{quoteboxborder}{RGB}{200,200,200}
\newenvironment{quotebox}{
\par\addvspace{0.2cm}
\begin{tcolorbox}[width=\textwidth,
                  enhanced,
                  frame hidden,
                  interior hidden,
                  boxsep=0pt,
                  left=10pt,
                  right=0pt,
                  top=0pt,
                  bottom=0pt,
                  boxrule=1pt,
                  arc=0pt,
                  colback=white,
                  colframe=black,
                  borderline west={3pt}{0pt}{quoteboxborder}
                  ]
}{
\end{tcolorbox}
\smallskip
}
\newenvironment{graytbox}{
\par\addvspace{0.1cm}
\begin{tcolorbox}[width=\textwidth,
                  enhanced,
                  frame hidden,
                  boxsep=5pt,
                  left=1pt,
                  right=1pt,
                  top=2pt,
                  bottom=2pt,
                  boxrule=1pt,
                  arc=0pt,
                  colback=mylightgray,
                  colframe=black,
                  breakable
                  ]
}{
\end{tcolorbox}
}
\begin{document}

\maketitle

\thispagestyle{empty}

\renewenvironment{abstract}
 {\small
  \begin{center}
  \bfseries \abstractname\vspace{-.5em}\vspace{0pt}
  \end{center}
  \list{}{%
    \setlength{\leftmargin}{6mm}
    \setlength{\rightmargin}{\leftmargin}%
    \listparindent 1.5em%
    \itemindent    \listparindent%
  }%
  \item\relax}
 {\endlist}

\begin{abstract}
	We study the problem of estimating the size of maximum matching and minimum vertex cover in sublinear time. Denoting the number of vertices by $n$ and the average degree in the graph by $\bar{d}$, we obtain the following results for both problems:\footnote{The $\widetilde{O}(\cdot)$ notation hides $\poly\log n$ factors throughout the paper.}
	\begin{itemize}[leftmargin=15pt]
		\item A multiplicative $(2+\epsilon)$-approximation that takes $\widetilde{O}(n/\epsilon^2)$ time using adjacency list queries.
		\item A multiplicative-additive $(2, \epsilon n)$-approximation in $\widetilde{O}((\bar{d} + 1)/\epsilon^2)$ time using adjacency list queries.
		\item A multiplicative-additive $(2, \epsilon n)$-approximation in $\widetilde{O}(n/\epsilon^{3})$ time using adjacency matrix queries.
	\end{itemize}
	
	All three results are provably time-optimal up to polylogarithmic factors culminating a long line of work on these problems.
	
	\smallskip
	
	Our main contribution and the key ingredient leading to the bounds above is a new and near-tight analysis of the {\em average query complexity} of the randomized greedy maximal matching algorithm which improves upon a seminal result of Yoshida, Yamamoto, and Ito [STOC'09].
\end{abstract}

\clearpage
\thispagestyle{empty}
\setcounter{tocdepth}{3}
\tableofcontents
\clearpage

\setcounter{page}{1}
\section{Introduction}\label{sec:intro}

We study algorithms for estimating the \underline{size} of maximum matching or minimum vertex cover in general graphs. A {\em maximal} matching, which can be trivially constructed in linear time, leads to simple 2-approximations for both problems. But can we achieve the same in {\em sublinear} time?

Since sublinear-time algorithms cannot even read the whole data, it is important to specify how the input is presented. For graphs, two models have been commonly considered in the literature:
\begin{itemize}[topsep=2pt, itemsep=0pt] 
	\item \textbf{The adjacency list model:} In this model, for any vertex $v$ of its choice, the algorithm may query the degree of $v$ in the graph and, for any $1 \leq i \leq \deg(v)$, may query the $i$-th neighbor of $v$ stored in an arbitrarily ordered list.
	\item \textbf{The adjacency matrix model:} In this model, the algorithm may query, for any vertex-pair $(u, v)$ of its choice, whether or not $u$ and $v$ are adjacent in the graph.
\end{itemize}
We consider both models in this work.

Although at the first glance it may seem impossible to do much without reading the whole input, numerous sublinear-time algorithms have been designed over the years for various optimization problems. In addition to matching and vertex cover, which have been studied extensively in the area \cite{ParnasRon07,NguyenOnakFOCS08,YoshidaSTOC09,OnakSODA12,KapralovKSSODA14,ChenICALP20}, the list includes estimating the weight/size of minimum spanning tree (MST) \cite{ChazelleRT05,CzumajS09MST}, traveling salesman problem (TSP) \cite{ChenICALP20}, $k$-nearest neighbor graph \cite{CzumajS20NearestNeighbor}, graph's average degree \cite{Feige06, GoldreichR08}, as well as problems such as vertex coloring \cite{AssadiCK19}, metric linear sampling \cite{EsfandiariM18}, and many others. (This is by no means a comprehensive list of all the prior works.) For some of the classic results of the area, see the excellent survey of Czumaj and Sohler \cite{CzumajS10}.

The famous {\em randomized greedy maximal matching} (RGMM) algorithm is the basis of one of the most successful approaches for estimating the size of matching and vertex cover in sublinear time. The RGMM algorithm iterates over the edges in a {\em random} order and greedily adds each encountered edge to the matching if not already incident to a matching edge. Although it takes a linear time to run this algorithm on the whole graph, one can locally determine if a single vertex is part of the solution much faster. The key observation is the fact that an edge is in the output matching of RGMM, if and only if it has no neighboring edge earlier in the ordering that belongs to the matching. This naturally gives rise to a recursive query process, suggested first by Nguyen and Onak \cite{NguyenOnakFOCS08}, that explores the local neighborhood of a given edge or vertex, and determines if it is matched in an instance of RGMM (we formalize this in Section~\ref{sec:query}).

To utilize this local simulation of RGMM, the most common approach is to plug it into the framework of Parnas and Ron \cite{ParnasRon07}: Pick a number of random vertices in the graph, simulate the local RGMM on them, and report the fraction of them that are matched as an estimate for the fraction of vertices matched in the whole graph. The final time-complexity of the algorithm, therefore, depends on the size of the local neighborhood that one has to explore for each randomly chosen vertex, which is also known as the ``average query-complexity'' of RGMM \cite{NguyenOnakFOCS08,YoshidaSTOC09,OnakSODA12}.

Indeed, our main technical contribution is the following near-tight analysis of the average query-complexity of RGMM, which improves upon a seminal work of Yoshida, Yamamoto, and Ito~\cite{YoshidaSTOC09}.

\begin{graytbox}
\begin{theorem}[See Theorem~\ref{thm:querycomplexity} for the formal statement]\label{thm:query-intro}
	For any $n$-vertex graph with average degree $\bar{d}$, the average query-complexity of RGMM is $O(\bar{d} \cdot \log n)$.
\end{theorem}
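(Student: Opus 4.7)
The plan is to bound the expected size of the recursive query tree $\mathcal{T}_\pi(e)$ explored by the local simulation of RGMM when it is started on a uniformly random edge $e$ under a uniformly random permutation $\pi$ of the edges. The root of $\mathcal{T}_\pi(e)$ is $e$, and the children of any node $f$ are the smaller-rank neighbors of $f$ that the simulation visits, in increasing rank order, up through the first matched one. The average query complexity then equals $\E_\pi\E_e[|\mathcal{T}_\pi(e)|]$, and the target bound is $O(\bar{d}\log n)$.

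I would decompose $|\mathcal{T}_\pi(e)|$ into two effects: a branching factor which, amortized over the random root and permutation, contributes $O(\bar{d})$ per level, and a depth factor that I expect to be $O(\log n)$. For the branching, a Yoshida--Yamamoto--Ito-style geometric argument bounds the expected number of children of any node $f$ by the expected waiting time for a matched smaller-rank neighbor, and averaging over the random starting edge $e$ makes the per-level total $O(\bar{d})$. For the depth, the crucial observation is that $\pi$-ranks strictly decrease along any root-to-leaf path of $\mathcal{T}_\pi(e)$. Using deferred decisions on $\pi$, the rank of a child node is distributed like a uniform sample from a conditional sub-interval of ranks and therefore shrinks in expectation by a constant multiplicative factor per step. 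Iterating this contraction from the initial range $[1,m]$ exhausts the rank budget within $O(\log n)$ steps, which bounds the depth in expectation.

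Combining the branching and depth bounds would yield $\E_\pi\E_e[|\mathcal{T}_\pi(e)|] = O(\bar{d}\log n)$. The hardest part will be accumulating the per-level branching \emph{additively} rather than \emph{multiplicatively} across the $O(\log n)$ levels; a naive level-by-level multiplication would reintroduce the exponential-in-degree blowup that plagues the YYI bound on dense instances. I would handle this by a global charging argument in which each explored edge is charged to the matched edge that terminated its parent's sibling traversal, using that in the random rank order such termination occurs with constant probability per query, so that the total charge summed over all starting edges $e$ is controlled by a single global potential instead of compounding per level. A secondary difficulty is controlling the correlations between sibling sub-trees induced by the shared permutation $\pi$, which I would address via an exchangeability / coupling argument on the sub-permutations restricted to disjoint local neighborhoods of the current node.
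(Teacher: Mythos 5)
Your proposal heads in a fundamentally different direction from the paper, and the two places you flag as ``the hardest part'' are exactly where I believe the plan breaks down.

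The paper's key move (inherited from Yoshida--Yamamoto--Ito and then strengthened) is to \emph{reverse the direction of counting}: rather than bounding the number of queries that a fixed starting point generates, it fixes a target edge $e$ and bounds the expected number of query-paths, over \emph{all} starting vertices, that \emph{reach} $e$. Concretely, it proves $\E_\pi[Q(e,\pi)] = O(\log n)$ for every edge $e$, then sums over edges and divides by $n$. The $O(\log n)$ bound for a single edge is obtained by a charging map $\phi(\pi,\vec P)$ that cyclically rotates ranks along a query-path $\vec P$ reaching $e$, producing another permutation $\sigma$. One then shows (Claim~\ref{cl:nobranch}) that any two query-paths in different permutations that map to the same $\sigma$ must have different lengths, so $\sigma$ is charged at most $\beta$ times by permutations whose longest query-path to $e$ is at most $\beta$. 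Taking $\beta = \Theta(\log n)$, the remaining ``unlikely'' permutations with a long query-path are shown to be rare precisely because the length of any query-path is at most twice the parallel round-complexity of RGMM (Claim~\ref{cl:queryvsround}), which is $O(\log n)$ w.h.p.\ by Fischer--Noever; caching then bounds the deterministic worst-case contribution of those rare permutations. None of this machinery appears in your sketch: you propose a \emph{forward} analysis of $\E_\pi\E_e[|\mathcal{T}_\pi(e)|]$, which is exactly the per-source quantity that the paper (and prior work) explicitly avoids, noting that obtaining a $\poly(\Delta,\log n)$ bound per starting point remains a major open problem in LCA.

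Two concrete gaps. First, the ``depth shrinks by a constant factor'' heuristic is not sound as stated: along a query-path the next rank is the smallest rank \emph{among a specific set of unexplored neighbors}, and the chain of conditioning on earlier recursion outcomes destroys the clean ``uniform in a sub-interval'' picture; more to the point, the correct depth bound here is the $O(\log n)$ \emph{parallel round-complexity} bound of Fischer--Noever, which is a genuinely different and much harder statement than a rank-halving recursion. Second, your proposed fix for the multiplicative blow-up --- charging each explored edge to ``the matched edge that terminated its parent's sibling traversal,'' controlled by ``a single global potential'' --- is exactly the step that needs a real construction, and you have not given one. The paper's $\phi$-rotation, the bipartite graph $H$ between permutations, and the length-label uniqueness argument constitute that global charging; without some concrete substitute your Step 4 is the open problem in disguise, and the exchangeability/coupling mentioned for Step 5 does not supply the missing combinatorial identity. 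So while the high-level ambition (additive $O(\bar d)$ branching times $O(\log n)$ depth) matches the shape of the final bound, the proposal does not contain the ideas that actually make the proof go through.
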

\end{graytbox}

We note that in a complete bipartite graph with $\Theta(\bar{d})$ vertices in one part and $\Theta(n)$ vertices in the other, for $n \gg \bar{d}$, the average query-complexity is $\Omega(\bar{d})$. Hence Theorem~\ref{thm:query-intro} is almost tight. 

The result of Yoshida~\etal{}~\cite{YoshidaSTOC09}, in comparison, bounds the query-complexity of a random vertex by $O(L/n)$, where $L$ is the number of edges in the line-graph. In general, $L/n$ can be upper bounded by $O(\bar{d} \cdot \Delta)$ where $\Delta$ is the maximum degree, and there are graphs\footnote{Consider a complete bipartite graph with $\Delta + 1$ vertices in one part and $\Theta(\bar{d})$ vertices in the other part.\label{footnote:Ln}} for which $L/n = \Omega(\bar{d} \cdot \Delta)$. We elaborate more on this result of \cite{YoshidaSTOC09} and further compare it to Theorem~\ref{thm:query-intro} in Section~\ref{sec:techniques}.

\subsection{Applications of the New Query-Complexity Analysis}

Theorem~\ref{thm:query-intro} leads to a number of sublinear-time algorithms for matching and vertex cover that are provably time-optimal up to logarithmic factors. We elaborate on these applications here and also mention some other applications of Theorem~\ref{thm:query-intro}.

As before, in all the statements below (and throughout the paper) we use $n$ to denote the number of vertices, $\Delta$ to denote the maximum degree, and $\bar{d}$ to denote the average degree. 

\medskip

\paragraph{Application 1:} The first application is a multiplicative approximation in the adjacency list model.

\begin{graytbox}
\begin{theorem}\label{thm:adjlist-multiplicative}
	For any $\epsilon > 0$, there is an algorithm that with probability $1-1/\poly(n)$ reports a $(2 + \epsilon)$-approximation to the size of maximum matching and that of minimum vertex cover using $O(n) + \widetilde{O}(\Delta/\epsilon^{2})$ time and queries in the adjacency list model.
\end{theorem}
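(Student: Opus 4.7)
The plan is to combine the Parnas--Ron framework with Theorem~\ref{thm:query-intro}. Let $M$ denote the (implicit) matching produced by RGMM under a uniformly random edge ordering. Since $M$ is maximal, we have the chain $|M| \le \mu^*(G) \le \tau(G) \le 2|M|$, where $\mu^*$ and $\tau$ denote the sizes of a maximum matching and a minimum vertex cover. Hence $|M|$ is itself a $2$-approximation to both $\mu^*$ and $\tau$, and a $(1\pm\epsilon)$-multiplicative estimate of $|M|$ immediately yields the desired $(2+\epsilon)$-approximation for both problems.

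I would first spend $O(n)$ time to read every vertex's degree, which gives $m$, $\bar d$, and $\Delta$ (if $m=0$, output $0$ and terminate). Next, I would draw $s$ vertices uniformly at random and, for each sampled $v$, invoke the recursive local simulation of RGMM from Section~\ref{sec:query} to determine whether $v$ is matched by $M$. Writing $X$ for the fraction of sampled vertices found to be matched and $p := 2|M|/n = \E[X]$, the output is $\hat\mu := X n / 2$.

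The quantitative heart of the argument is the observation that maximality of $M$ forces $m \le 2\Delta |M|$ (each edge is incident to a matched vertex of degree at most $\Delta$). Therefore $p \ge \bar d/(2\Delta)$, and a multiplicative Chernoff bound shows that $s = \Theta(\Delta \log n / (\epsilon^2 \bar d))$ samples suffice for $X \in (1\pm\epsilon)p$ with probability $1 - 1/\poly(n)$. By Theorem~\ref{thm:query-intro}, each local simulation requires $O(\bar d \log n)$ queries in expectation when started from a uniformly random vertex, so the total expected query complexity of the estimation phase is $s \cdot O(\bar d \log n) = O(\Delta \log^2 n / \epsilon^2) = \widetilde{O}(\Delta / \epsilon^2)$.

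The main subtlety I anticipate is converting this expectation into a high-probability bound, since a single local simulation has unbounded worst-case complexity. A standard fix is to truncate the algorithm after a constant multiple of the expected total query budget: by Markov's inequality this trial succeeds with probability at least $1/2$, and $O(\log n)$ independent repetitions (with fresh randomness for both the sampled vertices and the internal RGMM ordering) drive the failure probability below $1/\poly(n)$ at the cost of only a $\log n$ overhead, which is absorbed into the $\widetilde{O}(\cdot)$. Adding the $O(n)$ preprocessing yields the claimed $O(n) + \widetilde{O}(\Delta/\epsilon^2)$ total running time; the same estimator (up to a factor of $2$) reports $\tau$, giving the vertex-cover bound simultaneously.
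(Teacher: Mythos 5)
Your overall strategy is exactly the paper's: sample $\Theta(\Delta \log n / (\epsilon^2 \bar d))$ vertices, run the local RGMM simulation of Lemma~\ref{lem:linear-implementation}, use a Chernoff bound to concentrate, and upgrade the expected query bound to a high-probability bound via Markov plus $\Theta(\log n)$ parallel repetitions. Your lower bound $|M| \ge m/(2\Delta)$ via the maximality covering argument is a clean alternative to the paper's Vizing-theorem route to Claim~\ref{cl:matching-average-deg}; both yield $\mu(G) = \Omega(n\bar d/\Delta)$, but yours is more elementary and directly bounds the size of the RGMM output rather than going through $\mu(G)$.

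There is one genuine gap: you sample uniformly from all $n$ vertices and use the global $\bar d = 2m/n$ to set $s = \Theta(\Delta\log n/(\epsilon^2\bar d))$, but if the graph has many isolated vertices then $\bar d$ can be as small as $\Theta(1/n)$ and $s$ balloons to $\widetilde\Theta(\Delta/(\epsilon^2 \bar d))$. Even just generating $s$ sample indices and reading each sampled vertex's degree costs $\Omega(s)$ time, which can be $\widetilde\Omega(n/\epsilon^2) \gg O(n) + \widetilde O(\Delta/\epsilon^2)$ (take a star $K_{1,k}$ plus $n - k - 1$ isolated vertices: $\Delta = k$ but $s = \widetilde\Theta(n/\epsilon^2)$). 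Equivalently, the per-sample cost from Lemma~\ref{lem:linear-implementation} is $\widetilde O(\bar d + 1)$, not $\widetilde O(\bar d)$, and the $+1$ dominates when $\bar d < 1$. The paper fixes this by using the $O(n)$ preprocessing pass to \emph{discard} all isolated vertices, after which the average degree among the surviving vertices is at least $1$, so $1/\bar d = O(1)$ and $s\cdot\widetilde O(\bar d + 1) = \widetilde O(\Delta/\epsilon^2)$. You should do the same and sample only from the non-isolated vertices (adjusting the estimator's normalization from $n$ to the number of surviving vertices). A secondary, cosmetic issue: the paper's definition of a $(2+\epsilon)$-approximation is one-sided ($\widetilde\mu \le \mu(G)$), so you need to scale your raw estimate $Xn/2$ down by a factor like $(1-\epsilon/2)$, as in Algorithm~\ref{alg:adjlist-multiplicative}, rather than reporting it directly.
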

\end{graytbox}

Observe that $\Omega(n)$ queries are necessary even to distinguish an empty graph from one that includes a single edge. As such, $\Omega(n)$ time and queries are necessary for any multiplicative approximation in this model, implying that Theorem~\ref{thm:adjlist-multiplicative} is nearly time-optimal.

Theorem~\ref{thm:adjlist-multiplicative}, notably, gives the {\em first} multiplicative estimator in the literature that runs in $\widetilde{O}(n)$ time for all graphs. For a $(2+\epsilon)$-approximation, in particular, no $o(n^2)$ time algorithm was known for general graphs prior to our work. Allowing a larger $O(1)$-approximation, a recent result of \cite{KapralovSODA20} with some work leads to a $O(n + \Delta^2/\bar{d})$ time algorithm which can take $\widetilde{\Omega}(n \sqrt{n})$ time.

Interestingly, under the rather mild assumptions that $\bar{d} = \Omega(1)$ (which, e.g., holds if there are no singleton vertices) and that (estimates of) $\bar{d}$ and $\Delta$ are given, the $\Omega(n)$ lower bound is avoidable and the algorithm of Theorem~\ref{thm:adjlist-multiplicative} actually runs in $\widetilde{O}(\Delta/\epsilon^2)$ time.

\medskip

\paragraph{Application 2:} The second application of Theorem~\ref{thm:query-intro} is a multiplicative-additive approximation (see Section~\ref{sec:preliminaries} for the formal definition), also in the adjacency list model.

\begin{graytbox}
\begin{theorem}\label{thm:adjlist-additive}
	For any $\epsilon > 0$, there is an algorithm that with probability $1-1/\poly(n)$ reports a $(2, \epsilon n)$-approximation to the size of maximum matching and that of minimum vertex cover using $\widetilde{O}((\bar{d} + 1)/\epsilon^2)$ time and queries in the adjacency list query model.
\end{theorem}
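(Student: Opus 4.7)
The plan is to apply the Parnas--Ron sampling reduction~\cite{ParnasRon07} on top of RGMM, with Theorem~\ref{thm:query-intro} providing the per-sample time bound. Fix an instance of RGMM (i.e.\ a uniformly random ordering of $E$) and let $M^*$ be the matching it outputs; since $M^*$ is maximal, $\mu(G)/2 \leq |M^*| \leq \mu(G)$, and the endpoints of $M^*$ form a vertex cover, so $\tau(G) \leq 2|M^*| \leq 2\tau(G)$. Setting $p^* := 2|M^*|/n$, it therefore suffices to estimate $p^*$ to within additive error $\epsilon/2$: the value $\hat{p}\cdot n/2$ is then a $(2,\epsilon n)$-approximation of $\mu(G)$ and $\hat{p}\cdot n$ is a $(2,\epsilon n)$-approximation of $\tau(G)$.

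To estimate $p^*$, I would sample $k = \Theta(\epsilon^{-2}\log n)$ uniformly random vertices and, for each sampled $v$, run the recursive local RGMM simulation formalized in Section~\ref{sec:query} to decide whether $v\in V(M^*)$. Letting $\hat{p}$ be the fraction of matched samples, Hoeffding's inequality yields $|\hat{p}-p^*|\leq \epsilon/2$ with probability $1-1/\poly(n)$, which immediately establishes the claimed approximation guarantees for both matching and vertex cover.

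The delicate step is the running time. Let $q_v$ be the (random) query complexity of the local simulation at $v$. Theorem~\ref{thm:query-intro} says $(1/n)\sum_v \E[q_v] = O(\bar{d}\log n)$, so the expected total time to process $k$ uniformly random samples is $k\cdot O(\bar{d}\log n) = \widetilde{O}(\bar{d}/\epsilon^2)$. To turn this expectation into a high-probability bound without sacrificing the $\epsilon^{-2}$ dependence, I would \emph{not} impose a per-sample truncation (by Markov on $q_v$, that route naturally blows up to $\widetilde{O}(\bar{d}/\epsilon^3)$). Instead, I would run the full estimator under a global time budget $B = C\cdot \widetilde{O}(\bar{d}/\epsilon^2)$ for a sufficiently large constant $C$, aborting and restarting with fresh randomness whenever an attempt exceeds $B$. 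By Markov, each attempt succeeds with constant probability, so $O(\log n)$ independent attempts succeed with probability $1-1/\poly(n)$, for a total time of $\widetilde{O}(\bar{d}/\epsilon^2)$ as claimed.

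I expect the main obstacle to be precisely this expected-to-high-probability conversion: naive per-sample truncation loses a factor of $\epsilon^{-1}$, so the restart-on-overflow argument (or an equivalent concentration argument on the truncated empirical sum $\sum_i q_{v_i}$) is what keeps the $\epsilon^{-2}$ dependence tight. The remaining ingredients---unbiasedness of the local RGMM simulation with respect to $M^*$, the Hoeffding concentration on $\hat{p}$, and the factor-of-two translation between matching and vertex-cover estimates---are standard, and the $+1$ in $\bar{d}+1$ is needed only to cover the degenerate case $\bar{d}<1$.
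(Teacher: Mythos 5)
Your proposal follows essentially the same blueprint as the paper's proof: sample $k = \widetilde{\Theta}(1/\epsilon^2)$ random vertices, run the local RGMM simulation on each via Lemma~\ref{lem:linear-implementation}, apply a Chernoff/Hoeffding bound to the matched fraction, and convert the expected $\widetilde{O}((\bar{d}+1)/\epsilon^2)$ running time into a high-probability bound by repetition. Your restart-on-overflow scheme is interchangeable with the paper's device of running $\Theta(\log n)$ instances in parallel and outputting the first to finish; in either case one should be explicit that, by a union bound, the approximation guarantee holds for \emph{all} $O(\log n)$ attempts simultaneously, since otherwise conditioning on fast termination could in principle bias the returned estimate. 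You also correctly identify that per-sample truncation would cost an extra $\epsilon^{-1}$; the global-budget idea is exactly how the paper avoids this.

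The one concrete gap is that the paper's definition of a $(2,\epsilon n)$-approximation is one-sided: it requires $\widetilde{\mu} \leq \mu(G)$ and $\widetilde{\nu} \geq \nu(G)$, not just closeness in both directions. Your unshifted estimator $\widetilde{\mu} = \hat{p}\,n/2$ can overshoot $\mu(G)$ by $\Theta(\epsilon n)$ due to sampling noise, and symmetrically $\widetilde{\nu} = \hat{p}\,n$ can undershoot $\nu(G)$. The paper patches this by applying deterministic shifts, outputting $\widetilde{\mu} = fn/2 - \frac{\epsilon}{2}n$ and $\widetilde{\nu} = fn + \frac{\epsilon}{4}n$; you would need the analogous corrections. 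This is trivial to add and does not affect the structure of your argument, but as written the claim that $\hat{p}\,n/2$ is a $(2,\epsilon n)$-approximation of $\mu(G)$ does not meet the stated definition.
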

\end{graytbox}

Theorem~\ref{thm:adjlist-additive} (nearly) matches an $\Omega(\bar{d} + 1)$ lower bound of Parnas and Ron \cite{ParnasRon07} that holds for any $(O(1), \epsilon n)$-approximation of maximum matching or minimum vertex cover in this model. This culminates a long line of work \cite{ParnasRon07,NguyenOnakFOCS08,YoshidaSTOC09,OnakSODA12,KapralovSODA20} on this problem that we overview next.

Parnas and Ron \cite{ParnasRon07} were the first to give a multiplicative-additive approximation for matching and vertex cover. They showed how to obtain a $(2, \epsilon n)$-approximation in $\Delta^{O(\log(\Delta /\epsilon))}$ time by simulating a distributed local algorithm for each sampled vertex. A quasi-polynomial dependence on $\Delta$, however, is unavoidable with this approach due to existing distributed lower bounds \cite{KuhnMW-JACM16}. 

The next wave of results \cite{NguyenOnakFOCS08,YoshidaSTOC09,OnakSODA12} were based on the RGMM algorithm. Yoshida~\etal{}~\cite{YoshidaSTOC09} built on the work of Nguyen and Onak \cite{NguyenOnakFOCS08} and, notably, gave the first algorithm with a polynomial-in-$\Delta$ time-complexity of $O(\Delta^4/\epsilon^2)$ for a $(2, \epsilon n)$-approximation.  Onak~\etal{}~\cite{OnakSODA12} later shaved off some of the $\Delta$ factors from the bound of \cite{YoshidaSTOC09}. Although a bound of $\widetilde{O}_\epsilon(\Delta)$ was first claimed in~\cite{OnakSODA12}, Chen, Kannan, and Khanna \cite{ChenICALP20} discovered a subtlety with a claim of \cite{OnakSODA12} that happens to be crucial for their final result. Nonetheless, as also observed by Chen~\etal{}~\cite{ChenICALP20}, the techniques developed in \cite{OnakSODA12} combined with the average query-complexity analysis of \cite{YoshidaSTOC09} discussed above, still implies an improved bound of $\widetilde{O}((\bar{d} \cdot \Delta + 1) / \epsilon^2) = \widetilde{O}(\Delta^2/\epsilon^2)$ for a $(2, \epsilon n)$-approximation.\footnote{We note that while Theorem~\ref{thm:adjlist-additive} can be seen as a fix to \cite{OnakSODA12}, our techniques are very different from the approach of \cite{OnakSODA12}. Particularly, \cite{OnakSODA12} did not claim the tight upper bound of Theorem~\ref{thm:query-intro} that we prove in this paper, but rather claimed a weaker bound of $O(\bar{d} \cdot \rho)$ where $\rho$ is the ratio of the maximum degree over the min degree. As a result, even if one manages to fix the proof of \cite{OnakSODA12}, one does not get the strong multiplicative approximation of Theorem~\ref{thm:adjlist-multiplicative}.}

Observe that all the algorithms of the literature discussed above require some upper bound on the degrees to run in sublinear-time and can take up to $\Omega(n^2)$ time for general graphs. A more recent algorithm by Kapralov~\etal{}~\cite{KapralovSODA20} has a more desirable time-complexity of $O(\Delta/\epsilon^2)$. However, it only obtains an $(O(1), \epsilon)$-approximation and as pointed out by Chen, Kannan, and Khanna \cite{ChenICALP20}:
\begingroup
\advance\leftmargini -2em
\begin{quotebox}
{\em ``Unfortunately, the constant hidden in the $O(1)$ notation {\normalfont [{\em of} \cite{KapralovSODA20}]} is very large, and efficiently obtaining a $(2, \epsilon n)$-approximation to matching size remains an important open problem''} \cite{ChenICALP20}.
\end{quotebox}
\endgroup

\noindent Theorem~\ref{thm:adjlist-additive}, in light of the lower bound of \cite{ParnasRon07}, resolves this open problem of \cite{ChenICALP20} in a strong sense.

\medskip
\paragraph{Application 3:} The third application is in the adjacency matrix model.

\begin{graytbox}
\begin{theorem}\label{thm:adjmatrix}
	For any $\epsilon > 0$, there is an algorithm that with probability $1-1/\poly(n)$ reports a $(2, \epsilon n)$-approximation to the size of maximum matching and that of minimum vertex cover using $\widetilde{O}(n/\epsilon^3)$ time and queries in the adjacency matrix query model.
\end{theorem}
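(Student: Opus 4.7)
The plan is to apply the standard vertex-sampling framework of Parnas and Ron~\cite{ParnasRon07} together with an adjacency-matrix implementation of the RGMM local oracle, relying on Theorem~\ref{thm:query-intro} to control the cost. Concretely, sample $s = \Theta(\log n /\epsilon^2)$ vertices uniformly at random, run the local RGMM oracle on each to decide whether it is matched, and output $\hat{M} = (n/(2s)) \sum_i X_i$ where $X_i$ is the reported indicator. Since RGMM always produces a maximal matching, $\E[\hat{M}]$ lies in $[|M^*|/2,\, |M^*|]$, yielding a $2$-approximation to matching size; the same oracle also produces a $2$-approximation to minimum vertex cover, since the set of matched vertices forms a vertex cover of size at most $2\tau$ and at least $\tau$.

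The adjacency-matrix oracle is implemented by assigning each potential edge $(u,w)\in\binom{[n]}{2}$ an implicit uniform rank in $[0,1]$: a recursive call at a vertex $u$ with threshold $r$ enumerates the $n-1$ potential neighbors of $u$ in increasing rank order, performing one matrix query per candidate to test adjacency, and recurses on each actual neighbor whose rank is below~$r$. To control worst-case time, each oracle call is aborted after it has issued $T = \Theta(n\log^2 n/\epsilon)$ matrix queries, in which case it returns ``unmatched''. The vertex sampling contributes additive error at most $\epsilon n/4$ with high probability via Hoeffding. For the truncation, the key claim is that the expected matrix-model cost of a single oracle call, averaged over the random starting vertex, is $\widetilde{O}(n)$; by Markov's inequality this keeps the fraction of truncated calls below $\epsilon/4$, contributing at most $\epsilon n/4$ additional additive error. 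Combining, $\hat{M}$ is within $\epsilon n/2$ of its untruncated expectation, giving the $(2,\epsilon n)$-approximation, and the total running time is $s \cdot T = \widetilde{O}(n/\epsilon^3)$.

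The main obstacle is establishing the $\widetilde{O}(n)$ per-vertex expected matrix-model cost. The idea is to relate the matrix cost to the list-model cost that Theorem~\ref{thm:query-intro} bounds. At each recursion node expanding a vertex $u$ up to stopping rank $r^*_u$, the list-model oracle would issue $\approx r^*_u \cdot \deg(u)$ queries while the matrix-model oracle issues $\approx r^*_u \cdot (n-1)$ queries, an inflation of $n/\deg(u)$ per node. Since Theorem~\ref{thm:query-intro} bounds the list-model sum by $\widetilde{O}(\bar{d})$ on average over the starting vertex, a charging argument over the recursion tree---pairing each actual-neighbor probe in the list oracle with the $\approx n/\bar{d}$ candidate probes in the matrix oracle---yields $\widetilde{O}(n)$ matrix cost on average. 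The principal technical subtlety is handling the correlation between $r^*_u$ and $\deg(u)$, and especially handling unmatched vertices whose stopping rank may be close to~$1$; for these one must argue that although a single such call may be expensive, low-degree vertices contribute proportionally few probes per level, so averaging over the random starting vertex recovers the $\widetilde{O}(n)$ bound.
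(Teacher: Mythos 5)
Your overall plan (sample $\widetilde{O}(1/\epsilon^2)$ vertices, simulate the local RGMM oracle, implement adjacency-list access via matrix queries) is reasonable, but it rests on an unproven key lemma, and the paper resolves exactly this difficulty by a quite different mechanism.

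The gap is the claim that the expected matrix-model cost of a single oracle call, averaged over the random starting vertex, is $\widetilde{O}(n)$. You correctly identify that the list-to-matrix inflation at a recursion node expanding vertex $u$ is roughly $n/\deg(u)$, and you correctly flag as ``the principal technical subtlety'' the correlation between the stopping rank $r^*_u$ and $\deg(u)$, and the behavior at unmatched or low-degree vertices where $r^*_u$ can be close to $1$ --- but you do not resolve it. The sketched ``charging argument'' pairing each list probe with $\approx n/\bar{d}$ candidate probes is not valid as stated: the inflation factor is $n/\deg(u)$ and varies node by node, and Theorem~\ref{thm:query-intro} only controls $\sum_u r^*_u\deg(u)$ (roughly $T(v,\pi)$), which says nothing directly about $\sum_u r^*_u$, the quantity that governs the non-edge probing cost (the matrix cost is essentially $T(v,\pi)+n\sum_u r^*_u$). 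Without a bound on $\E_{v,\pi}\bigl[\sum_u r^*_u\bigr]$, the truncation step (which needs $\Pr[\mathrm{cost}>T]\le\epsilon/4$ via Markov) has no foundation, and the $\widetilde{O}(n/\epsilon^3)$ running time does not follow.

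The paper takes a different route precisely to avoid needing that lemma. It constructs an explicit auxiliary graph $H$ on $\Theta(n^2/\epsilon)$ vertices with average degree $O(1)$: $V_1$ and $V_2$ are two copies of $V$ with $H[V_1]\cong H[V_2]\cong G$, the non-edges of $G$ become cross edges between $V_1$ and $V_2$ (so that a single adjacency-list query to $H$ costs exactly one adjacency-matrix query to $G$ -- Observation~\ref{obs:matrix-adjlistqueries}), and each $v_2\in V_2$ is given $s=10n/\epsilon$ pendant neighbors. Applying Theorem~\ref{thm:querycomplexity} directly to $H$ and rescaling by $|V_H|/|V_1|$ gives the clean per-sample bound of $\widetilde{O}(n/\epsilon)$ (Claim~\ref{cl:query-size-from-V1}), with no need to untangle the rank--degree correlation. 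This construction also handles a correctness issue that does not arise in your formulation: RGMM is run on $H$ rather than $G$, so the matching restricted to $V_1\times V_1$ is not exactly a maximal matching of $G$; the pendants on $V_2$ are what make it $\epsilon n$-close to maximal (Claim~\ref{cl:M-is-a-good-estimator}). Two further remarks: (i) the paper notes that the earlier reduction of Onak~et~al.\ introduces parallel edges/self-loops, which is incompatible with Theorem~\ref{thm:querycomplexity}, so a new construction is genuinely needed here; (ii) the paper attains the high-probability runtime via parallel repetition (run $\Theta(\log n)$ independent instances, return the first to finish) rather than truncation, which avoids the additive error bookkeeping entirely. If your $\widetilde{O}(n)$ per-sample claim could be established, your approach would actually be slightly more efficient than the paper's $\widetilde{O}(n/\epsilon)$ per sample and conceptually simpler (exact RGMM on $G$, no auxiliary graph), so it is worth pursuing --- but as written the central estimate is an assertion, not a proof.
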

\end{graytbox}

A similar bound to Theorem~\ref{thm:adjmatrix} was claimed in \cite{OnakSODA12}. However, as discussed above, the proof of \cite{OnakSODA12} had a subtlety. Chen~\etal{}~\cite{ChenICALP20} proposed a fix, but their algorithm runs in $\widetilde{O}_\epsilon(n \sqrt{n})$ time. Theorem~\ref{thm:adjmatrix} improves this latter bound by a factor of $\sqrt{n}$.

On the lower bound side, suppose that the graph is either a random perfect matching or is empty. It is not hard to see that distinguishing the two cases requires $\Omega(n)$ queries to the adjacency matrix. As such, $\Omega(n)$ queries are necessary for any multiplicative-additive approximation in the adjacency-matrix model, implying that Theorem~\ref{thm:adjmatrix} is nearly time-optimal. Note that distinguishing an empty graph from one that includes only one edge requires $\Omega(n^2)$ queries in the adjacency-matrix model. This implies that, unlike Theorem~\ref{thm:adjlist-multiplicative} for the adjacency-list model, no non-trivial multiplicative approximation can be obtained in the adjacency-matrix model.

To prove Theorem~\ref{thm:adjmatrix}, in addition to Theorem~\ref{thm:query-intro}, we give a new reduction from the adjacency matrix model to the adjacency list model that, unlike the reduction of \cite{OnakSODA12}, does not lead to parallel edges or self-loops. This is crucial for our result; see Section~\ref{sec:adjmatrix} for details.

\medskip
\paragraph{Other Applications -- TSP:} Chen~\etal{}~\cite{ChenICALP20} gave sublinear time algorithms for estimating the size of variants of TSP such as graphic TSP in the adjacency matrix model. Their algorithms utilize a matching size estimator as a subroutine. Plugging Theorem~\ref{thm:adjmatrix} into their framework implies that:

\begin{corollary}[of using Theorem~\ref{thm:adjmatrix} in the algorithms of \cite{ChenICALP20}]\label{cor:TSP}
	There is an $\widetilde{O}(n)$-time randomized algorithm that estimates the cost of graphic TSP to within a factor of (27/14).
\end{corollary}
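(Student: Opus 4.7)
The plan is to treat the algorithm of Chen, Kannan, and Khanna~\cite{ChenICALP20} as a black box and argue that its bottleneck is exactly the matching-size subroutine that Theorem~\ref{thm:adjmatrix} speeds up. Specifically, their graphic TSP estimator invokes a $(2,\epsilon n)$-approximation to the size of a maximum matching in the adjacency matrix model, together with some auxiliary $\widetilde{O}(n)$-time routines (sampling vertices, estimating simple local statistics). Their final approximation factor of $27/14$ (inherited from the Mömke--Svensson style analysis of graphic TSP) only requires that the matching subroutine achieves a $(2, \epsilon n)$-guarantee with $\epsilon$ set to a suitably small constant; it does not depend on the internal structure of how that matching estimate is obtained.

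With this in mind, the proof would proceed in three short steps. First, I would invoke the reduction of~\cite{ChenICALP20} verbatim to express the approximation ratio as $27/14 + O(\epsilon)$ whenever the matching subroutine is a $(2,\epsilon n)$-approximation. Second, I would fix $\epsilon$ to be a small enough constant so that $27/14 + O(\epsilon) \le 27/14$ after absorbing lower-order slack (or, equivalently, state the result for $27/14 + o(1)$ and note the constant-$\epsilon$ choice). Third, I would plug Theorem~\ref{thm:adjmatrix} in as the matching subroutine: its running time is $\widetilde{O}(n/\epsilon^3) = \widetilde{O}(n)$ for constant $\epsilon$, which dominates the remaining $\widetilde{O}(n)$ steps of the Chen--Kannan--Khanna framework, yielding the claimed $\widetilde{O}(n)$ total running time.

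The main (and really only) obstacle is a bookkeeping one: verifying that the reduction of~\cite{ChenICALP20} is truly black-box in the matching subroutine, in the sense that replacing their $\widetilde{O}(n\sqrt{n})$-time estimator by any other $(2,\epsilon n)$-estimator preserves both correctness and the $27/14$ factor. This amounts to checking that the matching estimate is used only through its approximation guarantee (and not through, say, an explicit matching or a structural property of the specific algorithm producing it). Assuming this—as is explicitly claimed in~\cite{ChenICALP20}—the corollary follows immediately, since Theorem~\ref{thm:adjmatrix} strictly improves the running time of the matching subroutine by a $\sqrt{n}$ factor while preserving the exact approximation guarantee required.
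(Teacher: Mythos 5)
Your proposal matches the paper's reasoning exactly: the paper treats the Chen--Kannan--Khanna graphic TSP estimator as a black box whose only bottleneck is its $(2,\epsilon n)$ matching-size subroutine, and simply swaps in the $\widetilde{O}(n/\epsilon^3)$-time estimator of Theorem~\ref{thm:adjmatrix} in place of their $\widetilde{O}(n\sqrt n)$-time one, preserving the $27/14$ factor. The paper does not give a more detailed proof than this substitution argument, so there is nothing further to compare.
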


Instead of Theorem~\ref{thm:adjmatrix}, Chen~\etal{}~\cite{ChenICALP20} used their $\widetilde{O}(n \sqrt{n})$-time matching size estimator in their algorithms. As a result, their algorithms were slower by a factor of $\sqrt{n}$ than those in Corollary~\ref{cor:TSP}.

\medskip
\paragraph{Other Applications -- AMPC:} The {\em adaptive massively parallel computations} (AMPC) model was first introduced by \cite{BehnezhadDELMS19SPAA} and has been further studied by \cite{BehnezhadDELMS20VLDB,CharikarMT20}. While the precise definition of the AMPC model is out of the scope of this paper, it augments the standard MPC model of \cite{KarloffSV10} with a distributed hash table. Combined with the techniques developed for the maximal independent set problem in \cite{BehnezhadDELMS19SPAA}, Theorem~\ref{thm:query-intro} implies an $O(1)$-round AMPC algorithm for maximal matching using a strongly sublinear space of $O(n^\delta)$ per machine (for any constant $\delta > 0$) and an optimal total space of $\widetilde{O}(m)$ in $m$-edge graphs. This has to be compared with an $O(\log \log n)$-round algorithm presented in \cite{BehnezhadDELMS20VLDB} that has the same space complexity.

The essence of the improved AMPC algorithm mentioned above is the surprising fact, implicit in Theorem~\ref{thm:query-intro}, that if one starts the local simulation of RGMM from {\em every} vertex in the graph all in parallel, then the expected sum of the query-complexities of all these parallel calls, or equivalently the ``total work'' of the algorithm, is $n \cdot O(\bar{d} \log n) = O(m \log n)$, i.e., near-linear in the input size!

\subsection{Our Techniques \& Background on the Query-Complexity of RGMM}\label{sec:techniques}

As discussed above, our main technical contribution is the upper bound of Theorem~\ref{thm:query-intro} on the average query-complexity of the randomized greedy maximal matching algorithm. In this section, we provide more context about this result, further compare it to the previous bounds, and give an overview of our techniques and new insights for proving it.

Let us start by giving a slightly more formal definition of the local simulation of RGMM due to \cite{NguyenOnakFOCS08}. Suppose that we would like to define an oracle, $\edgeoracle{e, \pi}$, to determine whether a given edge $e$ belongs to the matching $\GMM{G, \pi}$ returned by the greedy maximal matching algorithm processing the edges in the order of $\pi$. Nguyen and Onak \cite{NguyenOnakFOCS08} observed that $e$ belongs to $\GMM{G, \pi}$ if and only if no edge incident to $e$ with a lower $\pi$-rank than $e$ belongs to $\GMM{G, \pi}$. Therefore, one can recursively run $\edgeoracle{e', \pi}$ on all incident edges $e'$ to $e$ with $\pi(e') < \pi(e)$. If none of these neighbors join $\GMM{G, \pi}$, we know for sure that $e \in \GMM{G, \pi}$, and otherwise $e \not\in \GMM{G, \pi}$.

The next crucial observation is that as soon as we find just one neighboring edge $e'$ of $e$ that is in the matching, we can terminate $\edgeoracle{e, \pi}$ and report $e \not\in \GMM{G, \pi}$ without going through the rest of the neighbors of $e$. Intuitively, this ``pruning'' should have a significant effect on the average query-complexity due to the recursive nature of the oracle. But how can we prove it? An immediate problem is that analyzing the expected query-complexity for a {\em given} edge $e$ seems to be challenging. In fact, obtaining a $\poly(\Delta, \log n)$ bound for this problem remains a major open question in the study of local computation algorithms (LCA's) \cite{AlonRVX12,GhaffariUittoSODA19}.

In a beautiful paper, Yoshida, Yamamoto, and Ito \cite{YoshidaSTOC09} got around this challenge and proved a $\poly(\Delta)$ upper bound on the {\em average} query-complexity with a global analysis. Namely, instead of bounding the expected number of queries that an edge $e$ generates, they showed that for any edge $e=(u, v)$ the expected number of edges in the graph whose query process reaches $e$ is at most $O(\deg(u)+\deg(v))$.\footnote{The analysis of \cite{YoshidaSTOC09} proceeds on the line-graph and $\deg(u)+\deg(v)$ is the degree of $e$ in the line-graph.} This way, the expected sum of all queries starting from all the edges in the graph, can be upper bounded by $\sum_{(u, v) \in E} O(\deg(u) + \deg(v)) = O(L)$, where $L$ is the number of edges in the line-graph. This implies that for an {\em edge} chosen uniformly at random, the expected query-complexity can be upper bounded by $O(L/m)$. Although it is not immediate, we note that essentially the same proof also implies that for a {\em vertex} chosen uniformly at random, the expected query-complexity is $O(L/n)$ which is $O(\bar{d} \cdot \Delta)$ and can also be as large as $\Omega(\bar{d} \cdot \Delta)$ (see Footnote~\ref{footnote:Ln}).

Our new analysis builds on some of the ideas introduced by Yoshida~\etal{}~\cite{YoshidaSTOC09}. Similar to their work and, crucially, instead of directly analyzing the number of recursive calls {\em out} of an edge or a vertex, we analyze the recursive calls made {\em to} an edge or a vertex. There are, however, several crucial differences between the two approaches that leads to our near-tight analysis in Theorem~\ref{thm:query-intro}.

We prove that for any edge $e=(u, v)$ in the graph, the expected number of starting queries that reach $e$ can be upper bounded by $O(\log n)$. This improves over the previous upper bound of $O(\deg(u) + \deg(v))$ by Yoshida~\etal{}~\cite{YoshidaSTOC09} and is the key component of our analysis. 

Let us fix edge $e$ and overview how we prove the bound of $O(\log n)$ on the total number of queries to $e$, from all possible starting points. To prove this upper bound, if in a permutation $\pi$ we happen to query $e$ for $q$ times, we charge $q$ other permutations $\sigma_1, \ldots, \sigma_q$. Observe that by a simple double counting argument, the expected number of times that a random permutation $\sigma \in \Pi$ is charged by all other permutations combined, is exactly equal to the expected number of queries to $e$ in a random permutation $\pi$. As such, if we prove an upper bound of $T$ on the number of times that each permutation is charged, then we get that $e$ is queried by at most $T$ starting points in expectation. Unfortunately, however, we remark that there will be permutations that get charged up to even $\Omega(n)$ times with our charging method.

To get past this hurdle, we draw a novel connection to an orthogonal line of work on bounding the {\em parallel depth} of RGMM pioneered by Blelloch, Fineman, and Shun \cite{BlellochSPAA} whose bounds were tightened by Fischer and Noever \cite{FischerTALG}, showing that these algorithms have parallel depth $O(\log n)$ w.h.p. By carefully analyzing the structure of RGMM queries, we show that if a permutation $\sigma$ is charged $b$ times with our charging method, then for any $1 \leq i \leq b$, at least $i$ of these charges should be from permutations with parallel depth at least $\Omega(b - i)$. In particular, if a permutation $\sigma$ is charged $\omega(\log n)$ times, most of these charges must be from those ``unlikely'' permutations that have a high parallel depth $\omega(\log n)$. With a few additional ideas (particularly by caching previous queries) we show that the queries of these unlikely permutations can be effectively bounded, and get an $O(\log n)$ upper bound on the number of charges to an average permutation $\sigma$.

\vspace{-0.2cm}
\section{Preliminaries}\label{sec:preliminaries}

Unless otherwise stated, we use $n$ to denote the number of vertices, $m$ to denote the number of edges, $\Delta$ to denote the maximum degree, and $\bar{d}$ to denote the average degree in the original graph. The graphs considered in this paper are unweighted and simple. We use $\mu(G)$ to denote the size of the maximum matching in $G$ and use $\nu(G)$ to denote the size of the minimum vertex cover in $G$.  For any $U \subseteq V$, we use $G[U]$ to denote the induced subgraph of $G=(V, E)$ on $U$, which is the subgraph obtained by removing all vertices in $V \setminus U$ (along with their edges) from $G$. For any positive integer $k$ we use $[k]$ to denote the set $\{1, \ldots, k\}$.

As standard in the literature, for $\alpha \geq 1$ and $0 \leq \epsilon \leq 1$, we say an estimate $\widetilde{\mu}(G)$ for $\mu(G)$ and an estimate $\widetilde{\nu}(G)$ for $\nu(G)$ provide ``multiplicative-additive'' $(\alpha, \epsilon n)$-approximations if 
$$
\frac{\mu(G)}{\alpha} - \epsilon n \leq \widetilde{\mu}(G) \leq \mu(G) \qquad \text{and} \qquad
\nu(G) \leq \widetilde{\nu}(G) \leq \alpha \nu(G) + \epsilon n.
$$
Note that a standard multiplicative $\alpha$-approximation is equivalent to an $(\alpha, 0)$-approximation.

We will use the following standard version of the Chernoff bound:

\begin{proposition}[Chernoff Bound]\label{prop:chernoff}
	Let $X_1, \ldots, X_n$ be independent Bernoulli random variables and define $X := \sum_{i=1}^n X_i$. For any $\lambda > 0$,
	$
		\Pr[|X - \E[X]| \geq \lambda] \leq 2 \exp\left(- \frac{\lambda^2}{3 \E[X]}\right).
	$
\end{proposition}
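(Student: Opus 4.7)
The plan is to use the standard Chernoff--Cram\'er moment-generating-function (MGF) method. Let $p_i := \E[X_i]$ and (locally, just for this proof) $\mu := \E[X] = \sum_i p_i$. The first step is to bound the MGF of each Bernoulli summand using $1 + x \leq e^{x}$:
\[
\E[e^{tX_i}] \;=\; 1 + p_i(e^{t} - 1) \;\leq\; \exp\!\bigl(p_i(e^{t} - 1)\bigr) \qquad \text{for any } t \in \mathbb{R}.
\]
Independence of the $X_i$ then multiplies these bounds into the global estimate $\E[e^{tX}] \leq \exp\!\bigl(\mu(e^{t} - 1)\bigr)$.

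For the upper tail I would apply Markov's inequality to $e^{tX}$ with $t > 0$,
\[
\Pr[X \geq \mu + \lambda] \;\leq\; \frac{\E[e^{tX}]}{e^{t(\mu+\lambda)}} \;\leq\; \exp\!\bigl(\mu(e^{t} - 1) - t(\mu + \lambda)\bigr),
\]
and then optimize in $t$ by choosing $t = \ln(1 + \lambda/\mu)$, which yields the classical closed form $\Pr[X \geq \mu + \lambda] \leq \bigl(e^{\lambda/\mu}/(1 + \lambda/\mu)^{1+\lambda/\mu}\bigr)^{\mu}$. For the lower tail I would run the symmetric argument with $t < 0$ (or, equivalently, apply the upper-tail argument to the variables $1 - X_i$), obtaining a matching bound whenever $\lambda \leq \mu$; the event is vacuous for $\lambda > \mu$ since $X \geq 0$. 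A final union bound over the two tails introduces the factor of $2$ in the statement.

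The only step that actually requires care is the analytic simplification that turns these closed forms into the Gaussian-type exponent $-\lambda^{2}/(3\mu)$. I would deploy the elementary inequalities $(1+y)\ln(1+y) - y \geq y^{2}/(2 + 2y/3)$ for the upper tail and $(1-y)\ln(1-y) + y \geq y^{2}/2$ for the lower tail, with $y = \lambda/\mu$, from which a short computation yields an exponent of at least $\lambda^{2}/(3\mu)$ in the regime of interest. This calculus step is the main obstacle: the constant $3$ (rather than the tighter $2$ for the lower tail) is chosen precisely so that one clean expression simultaneously dominates both tails and the stated proposition can be invoked uniformly throughout the paper.
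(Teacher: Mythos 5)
The paper does not prove this proposition; it is stated and cited as a standard Chernoff bound and used as a black box, so there is no internal argument to compare against. Your outline follows the textbook MGF/Cram\'er route, and the structural steps --- bounding $\E[e^{tX_i}] \leq \exp(p_i(e^t-1))$, multiplying by independence, exponential Markov with $t = \ln(1+\lambda/\mu)$, the reflected argument (or the vacuity observation) for the lower tail, and a union bound for the factor $2$ --- are all sound.

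The gap is in exactly the step you flag as delicate. The inequality $(1+y)\ln(1+y) - y \geq y^{2}/(2 + 2y/3)$ is correct, but it yields the exponent $y^{2}/3$ only when $2 + 2y/3 \leq 3$, i.e.\ when $y = \lambda/\E[X] \leq 3/2$; it gives strictly less for larger $y$. This is not an artifact of using a loose inequality: writing $h(y) := (1+y)\ln(1+y) - y$, one has $h(3) = 4\ln 4 - 3 \approx 2.55 < 3 = 3^{2}/3$, and since $h(y) = O(y\log y)$ the inequality $h(y) \geq y^{2}/3$ fails for \emph{every} sufficiently large $y$. Concretely the stated upper-tail bound is false once $\lambda$ is much larger than $\E[X]$: take a single Bernoulli with mean $p \to 0$ and $\lambda = 1-p$; then $\Pr[X \geq \E[X] + \lambda] = \Pr[X=1] = p$, which eventually exceeds $2\exp\!\bigl(-(1-p)^2/(3p)\bigr)$. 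So the proposition as written, with ``any $\lambda > 0$,'' is not literally true and cannot be proved by any argument. Your phrase ``in the regime of interest'' is quietly doing all the work here. A correct statement must either restrict the range of $\lambda$ (e.g.\ to $\lambda \leq \E[X]$, which covers the paper's uses after a short check of the parameter choices) or replace the exponent by the Bernstein-type $-\lambda^{2}/(2\E[X] + 2\lambda/3)$, which is exactly what your chain of inequalities establishes uniformly in $\lambda$; you should make one of those two fixes explicit rather than leaving the restriction implicit.
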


\section{Average Query-Complexity of RGMM}\label{sec:query}

Throughout this section, we let $G=(V, E)$ be a graph with $n$ vertices, $m$ edges, and average degree $\bar{d} = 2m/n$. We also let $\Pi$ be the set of all $m!$ permutations over the edge-set $E$. 

\begin{definition}[Greedy Maximal Matching]\label{def:RGMM}
For a permutation $\pi \in \Pi$, let $\GMM{G, \pi}$ be a maximal matching of graph $G=(V, E)$ obtained by iterating over the edge-set $E$ in the order of $\pi$ and greedily adding each encountered edge that is feasible to the matching (an edge is feasible if it is not already incident to a matching edge).
\end{definition}

We are interested in the behavior of the GMM algorithm for a \underline{random permutation} $\pi$ chosen uniformly from $\Pi$. Particularly, our goal is to determine if a vertex belongs to the produced matching. Of course one can run the algorithm and answer this in $\Theta(m + n)$ time. But as we will see, better bounds can be achieved via the following local procedure. Similar oracles were analyzed by \cite{NguyenOnakFOCS08,YoshidaSTOC09,OnakSODA12}. But we emphasize that our edge oracle is slightly more efficient than previous ones.

\smallskip

\begin{algorithm}[H]
\label{alg:vertexoracle}
\caption{The vertex oracle $\vertexoracle{v, \pi}$. Determines if vertex $v$ is matched in $\GMM{G, \pi}$.}
	Let $e_1=(v, u_1), \ldots, e_k=(v, u_k)$ be the edges incident to $v$ with $\pi(e_1) < \ldots < \pi(e_k)$.
	
	\For{$i$ in $1 \ldots k$}{
		\lIf{$\edgeoracle{e_i, u_i, \pi} = \true$}{\Return \true}
	}
	\Return \false
\end{algorithm}

\smallskip

\begin{algorithm}[H]
\label{alg:edgeoracle}
\caption{The edge oracle $\edgeoracle{e, u, \pi}$ where $u$ is an endpoint of edge $e$.}

	\lIf{we have already computed $\edgeoracle{e, u, \pi}$}{\Return the computed answer.}\label{line:cache}
	
	Let $e_1=(u, w_1), \ldots, e_k=(u, w_k)$ be the edges adjacent to $u$ such that $\pi(e_i) < \pi(e)$ and $\pi(e_1) < \ldots < \pi(e_k)$.
	
	\For{$i$ in $1 \ldots k$}{
		\lIf{$\edgeoracle{e_i, w_i, \pi} = \true$}{\Return \false}
	}
	\Return \true
\end{algorithm}

\smallskip

\begin{remark}\label{rem:cache}
	In Line~\ref{line:cache} of Algorithm~\ref{alg:edgeoracle} we ``cache'' previous edge queries. This ensures that the total recursive calls to the edge oracle is bounded by $\poly(n)$ with probability 1 (see Observation~\ref{obs:detupperbound}). This is the only implication of caching that we use in our analysis.
\end{remark}

\begin{remark}
	Our edge oracle differs from those of \cite{NguyenOnakFOCS08,YoshidaSTOC09,OnakSODA12} in that the oracle calls generated by $\edgeoracle{e, u, \pi}$ are only for those edges incident to $u$ and not the other endpoint of $e$. This helps in arguing that the stack of recursive calls to the edge oracle form a path in $G$.
\end{remark}

\begin{observation}\label{obs:queries-correctness}
	It holds for the oracles above that:
	\begin{itemize}[itemsep=0pt,topsep=0pt]
		\item For any vertex $v \in V$, $\vertexoracle{v, \pi} = \true$ iff $v$ is matched in $\GMM{G, \pi}$.
		\item For any edge $e=(u, w)$, if $\edgeoracle{e, u, \pi}$ is (recursively) called while evaluating some vertex oracle $\vertexoracle{v, \pi}$, then we have $\edgeoracle{e, u, \pi} = \true$ iff $e \in \GMM{G, \pi}$.
	\end{itemize} 
\end{observation}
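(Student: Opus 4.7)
The plan is to prove both statements simultaneously by strong induction on $\pi(e)$, the rank of the queried edge. In the base case, the edge $e^\star$ of minimum $\pi$-rank is always in $\GMM{G,\pi}$, and $\edgeoracle{e^\star, u, \pi}$ immediately returns $\true$ since there are no incident edges of smaller $\pi$-rank to recurse on.

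For the inductive step, the key observation and the heart of the argument is that Algorithm~\ref{alg:edgeoracle} only inspects edges at the single endpoint $u$ of $e=(u,w)$, while the greedy rule says $e \in \GMM{G,\pi}$ iff no incident edge at \emph{either} endpoint with $\pi$-rank smaller than $\pi(e)$ belongs to $\GMM{G,\pi}$. So one cannot hope to verify correctness by induction on $\pi$-rank alone; I would strengthen the hypothesis with the following invariant:
\begin{quote}\emph{Whenever $\edgeoracle{e,u,\pi}$ is invoked during the evaluation of some vertex oracle, every edge $e'$ incident to the \emph{other} endpoint $w$ with $\pi(e') < \pi(e)$ is guaranteed not to be in $\GMM{G,\pi}$.}\end{quote}
This invariant is exactly what justifies the one-sided recursion highlighted in the remark following Algorithm~\ref{alg:edgeoracle}.

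To establish the invariant I would case-split on how the call is made. If $\edgeoracle{e,u,\pi}$ is a direct call from $\vertexoracle{v,\pi}$, inspection of Algorithm~\ref{alg:vertexoracle} gives $w=v$, and the for-loop processes incident edges of $v$ in $\pi$-order; since we reached the call on $e$ without returning, every earlier iteration produced $\false$, and the induction hypothesis converts these $\false$ answers into the statement that the corresponding edges are not in $\GMM{G,\pi}$. If instead the call is recursive from a parent $\edgeoracle{e'', u'', \pi}$, inspection of Algorithm~\ref{alg:edgeoracle} gives $w = u''$, and the analogous argument applied to the parent's for-loop shows that all smaller-rank edges at $u''=w$ have already been certified absent from $\GMM{G,\pi}$. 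In both cases the invariant holds.

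Given the invariant, edge-oracle correctness is immediate: by the greedy rule, $e\in \GMM{G,\pi}$ iff no smaller-rank incident edge is in $\GMM{G,\pi}$; the invariant handles the $w$-side, and the for-loop of Algorithm~\ref{alg:edgeoracle} together with the induction hypothesis correctly checks the $u$-side. The vertex-oracle claim then follows directly, since $v$ is matched in $\GMM{G,\pi}$ iff some incident edge lies in $\GMM{G,\pi}$, which is exactly the disjunction computed by Algorithm~\ref{alg:vertexoracle}. Finally, caching in line~\ref{line:cache} preserves correctness because any cached value is the output of a previous correct call with identical arguments. The main obstacle is identifying the invariant about the ``other endpoint''; once this is formulated the induction goes through mechanically.
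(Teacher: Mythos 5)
Your proof is correct and follows essentially the same route as the paper's: induction on $\pi(e)$, with a case split on whether the call to $\edgeoracle{e,u,\pi}$ comes directly from the vertex oracle or from a parent edge-oracle call, using the parent's in-order traversal plus the inductive hypothesis to conclude that no lower-rank edge at the other endpoint $w$ is in the matching. Your explicit ``invariant'' about the other endpoint is exactly the property the paper establishes implicitly in the middle of its argument, so the two proofs differ only in presentation, not in substance.
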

\begin{proof}
	The second property immediately implies the first as \vertexoracle{v, \pi} calls the edge oracle on the neighbors of $v$ until finding a matching edge.
	
	We prove the second property by an induction on $\pi(e)$, so assume that this statement holds for all edges with a lower rank than $\pi(e)$. Observe that $\edgeoracle{e, u, \pi}$ is either directly called by $\vertexoracle{v, \pi}$ which in that case $v = w$, or it is directly called by $\edgeoracle{f, w, \pi}$ for some edge $f$ incident to $w$. In both cases, by the description of the oracles, the function that calls \edgeoracle{e, u, \pi} should first call \edgeoracle{e', u', \pi} for any edge $e'=(w, u')$ incident to $w$ that has a lower rank than $\pi(e)$, and that all these calls should have returned \false{}. By the induction hypothesis, this implies that no such $e'$ belongs to \GMM{G, \pi}. Therefore, if an edge of \GMM{G, \pi} prevents $e$ from joining \GMM{G, \pi}, it must be incident to $e$ from its other endpoint $u$, and should have a lower rank than $\pi(e)$. Thus, it is sufficient that the edge oracle \edgeoracle{e, u, \pi} only goes over such edges of $u$ (for which the statement holds by the induction hypothesis), returning \true{} iff none of these calls return \true{}.
\end{proof}

Let $T(v, \pi)$ denote the number of recursive calls to the edge oracle $\edgeoracle{\cdot, \cdot, \pi}$ over the course of answering $\vertexoracle{v, \pi}$. We note that for some edge $e$, the edge oracle may be called multiple times during the execution of $\vertexoracle{v, \pi}$ and we count all of these in $T(v, \pi)$, though only the first call to $\edgeoracle{e, \pi}$ may generate new recursive calls as the rest of the calls use the cached value.

Our main result in this section is as follows:

\begin{theorem}\label{thm:querycomplexity}
	For a vertex $v$ chosen uniformly at random from $V$ and for a permutation $\pi$ chosen uniformly at random from $\Pi$ and independently from $v$,
	$$
		\E_{v \sim V, \pi \sim \Pi}[T(v, \pi)] = O(\bar{d} \cdot \log n).
	$$
\end{theorem}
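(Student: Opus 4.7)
The plan is to switch perspective from ``queries out of a random vertex'' to ``queries into a fixed edge''. Define $C_e(\pi) := \sum_{v \in V} \#\{\text{recursive calls to } \edgeoracle{e,\cdot,\pi} \text{ generated by } \vertexoracle{v,\pi}\}$. By linearity,
\[
\sum_{v \in V} \E_\pi\bigl[T(v,\pi)\bigr] \;=\; \sum_{e \in E}\E_\pi\bigl[C_e(\pi)\bigr],
\]
so it suffices to prove $\E_\pi[C_e(\pi)] = O(\log n)$ for every fixed edge $e$; dividing by $n$ and using $|E| = \bar d\, n/2$ then yields $\E_{v,\pi}[T(v,\pi)] = O(\bar d \log n)$.

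The structural observation that makes this viable, and the reason for the small modification to the edge oracle noted in the paper, is that $\edgeoracle{e,u,\pi}$ only recurses through the endpoint $u$ on which it is invoked. So the stack of recursive calls terminating in $\edgeoracle{e,u,\pi}$ traces out a $\pi$-decreasing path in $G$ ending at $u$: any call to $\edgeoracle{e,u,\pi}$ is witnessed by a starting vertex $v_0$ together with a decreasing path $v_0 = x_0 \to x_1 \to \cdots \to x_\ell = u$ along which every intermediate edge oracle returned \false{}. This gives a clean combinatorial handle on the quantity $C_e(\pi)$.

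From here the plan is a refined Yoshida--Yamamoto--Ito-style double-counting / permutation-rearrangement. Given a permutation $\pi$ for which $C_e(\pi) = q$, construct $q$ ``witness'' permutations $\sigma_1,\ldots,\sigma_q$ by locally re-ranking the edges along each witnessing decreasing path so that the modified permutation exposes the query in a canonical, low-complexity way. By symmetry (each $\sigma$ is hit by the inverse of the rearrangement), $\E_{\pi}[C_e(\pi)]$ equals the expected number of times a uniformly random $\sigma$ is charged; if every $\sigma$ absorbed at most $T$ charges we would immediately get $\E_\pi[C_e(\pi)] \le T$.

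The hard part, and where I expect the bulk of the technical work, is that worst-case there exist permutations $\sigma$ that absorb up to $\Omega(n)$ charges, so a uniform per-$\sigma$ bound cannot work. My plan to overcome this is the parallel-depth connection flagged in Section~\ref{sec:techniques}: prove a structural lemma of the form ``if $\sigma$ is charged $b$ times, then for every $1 \le i \le b$ at least $i$ of those charges come from permutations $\pi$ whose RGMM dependency structure on the relevant local neighborhood has parallel/adaptive depth $\Omega(b - i)$''. Invoking the Fischer--Noever bound that a uniformly random permutation yields RGMM parallel depth $O(\log n)$ with probability $1 - 1/\poly(n)$, the contributions with $b = \omega(\log n)$ collapse geometrically in expectation. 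The caching in Line~\ref{line:cache} of Algorithm~\ref{alg:edgeoracle} plays a double role here: it gives the deterministic upper bound $T(v,\pi) \le \poly(n)$, so the rare ``bad-depth'' permutations contribute only $\poly(n) \cdot n^{-\omega(1)} = o(1)$ to the expectation; and it prevents the same $(e,u)$ pair from being double-counted along revisited branches of the recursion, which is what makes the per-$\sigma$ charge-count coincide with a depth profile in the first place. Summing the resulting geometric tail over $b$ yields an $O(\log n)$ bound on the expected charges to each $\sigma$; combined with the reduction above this gives $\E_\pi[C_e(\pi)] = O(\log n)$ and hence the theorem.
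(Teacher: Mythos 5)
Your plan is the paper's plan: reduce to bounding the per-edge count $\E_\pi[C_e(\pi)]$ (your $C_e$ is the paper's $Q(e,\pi)$), observe that the recursion stack traces a $\pi$-decreasing path, charge each query path $\vec{P}$ in $\pi$ to the permutation $\sigma = \phi(\pi,\vec{P})$ obtained by rotating ranks along $\vec{P}$, and control the in-degree of a random $\sigma$ by splitting permutations into short-path and long-path cases, handling the latter with Fischer--Noever plus the $O(n^2)$ cache bound. The problem is that the two statements that make this charging scheme actually close are asserted as goals rather than proved, and they are where essentially all of the technical work of Section~\ref{sec:query} lives.

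Concretely: (i) you need the injectivity claim that if $\pi \neq \pi'$ both charge the same $\sigma$ via query paths $\vec{P}$ and $\vec{P}'$ with $|\vec{P}| = |\vec{P}'|$, a contradiction follows --- this is Claim~\ref{cl:nobranch}, whose proof is the branch-point analysis of Section~\ref{sec:AL} (Claims~\ref{cl:theoun12389} through~\ref{cl:cclbu129837}) and is the only reason a target $\sigma$ absorbs at most $O(\log n)$ ``likely'' charges; (ii) you need that a query path of length $\ell$ forces parallel round complexity $\geq \lfloor \ell/2 \rfloor$ (Claim~\ref{cl:queryvsround}), which is what lets Fischer--Noever bound the measure of permutations producing long paths. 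Your stated structural lemma ``at least $i$ charges come from permutations of depth $\Omega(b-i)$'' is a consequence of (i) and (ii) combined, but asserting it proves neither; the paper's actual proof does not even need it in that form, since separating the $L$/$U$ partition cleanly into Lemma~\ref{lem:AL} (via (i)) and Lemma~\ref{lem:AU} (via (ii)) is simpler. A smaller slip: Fischer--Noever gives failure probability $1/n^2$, not $n^{-\omega(1)}$, and the deterministic cache bound is $O(n^2)$, so the unlikely-permutation contribution per edge is $O(1)$, not $o(1)$; this still suffices for the final $O(\log n)$ bound, but the accounting as you wrote it does not hold.
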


For any edge $e \in E$ and a vertex $v \in V$, we use $Q(e, v, \pi)$ to denote the number of times that $\edgeoracle{e, \cdot, \pi}$ is called during the execution of $\vertexoracle{v, \pi}$, and we define $Q(e, \pi) := \sum_{v \in V} Q(e, v, \pi)$. The following immediately follows from the definition:

\begin{observation}\label{obs:TvsQ}
	For any $\pi \in \Pi$ and any $v \in V$, $T(v, \pi) = \sum_{e \in E} Q(e, v, \pi)$.
\end{observation}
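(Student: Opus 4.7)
The plan is essentially a one-line bookkeeping argument: unpack the definitions and invoke the fact that every recursive edge-oracle call is made on exactly one edge. By definition, $T(v,\pi)$ counts the total number of recursive invocations of $\edgeoracle{\cdot,\cdot,\pi}$ made during the execution of $\vertexoracle{v,\pi}$. Each such invocation has a well-defined first argument, namely some edge $e \in E$. This partitions the multiset of recursive calls counted by $T(v,\pi)$ into classes indexed by $e$.

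The next step is simply to identify the size of each class with $Q(e,v,\pi)$. By definition, $Q(e,v,\pi)$ is precisely the number of times the edge oracle is called with first argument $e$ (over all possible second arguments $u$, which must be an endpoint of $e$) during the execution of $\vertexoracle{v,\pi}$. Summing over all edges then gives $\sum_{e \in E} Q(e,v,\pi) = T(v,\pi)$, as claimed.

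There is no genuine obstacle, but one small sanity check is worth flagging: the caching step in Line~\ref{line:cache} of Algorithm~\ref{alg:edgeoracle} does not affect the identity, since both $T(v,\pi)$ and $Q(e,v,\pi)$ count \emph{every} recursive call to the edge oracle (including repeated calls to the same $(e,u)$ pair that are resolved immediately via the cached value), rather than only calls that trigger further recursion. Because both sides adopt the same counting convention, the identity holds deterministically for every $\pi$ and every $v$.
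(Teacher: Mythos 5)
Your proposal is correct and matches the paper, which states this observation without proof as following immediately from the definitions: you simply partition the recursive edge-oracle calls counted by $T(v,\pi)$ according to their edge argument, each class having size $Q(e,v,\pi)$ by definition. Your remark about caching is also consistent with the paper's convention that both quantities count every call, including those answered from the cache.
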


The next bound holds because we cache query results as discussed in Remark~\ref{rem:cache}.

\begin{observation}\label{obs:detupperbound}
	For every edge $e = \{a, b\}$ and every $\pi \in \Pi$, $Q(e, \pi) \leq O(n^2)$.
\end{observation}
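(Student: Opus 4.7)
The plan is to argue purely combinatorially, leveraging the caching mechanism from Remark~\ref{rem:cache}. The first step is to observe that the second argument of $\edgeoracle{e, u, \pi}$ must be one of the two endpoints of $e$, so there are only two parameter pairs to consider: $\edgeoracle{e, a, \pi}$ and $\edgeoracle{e, b, \pi}$. It therefore suffices to bound the number of calls to each of these and sum.

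Next I would classify every call to $\edgeoracle{e, u, \pi}$ by its immediate caller. By inspection of Algorithms~\ref{alg:vertexoracle} and~\ref{alg:edgeoracle}, a call can arise only in one of two ways: (i) as a direct call made by $\vertexoracle{v, \pi}$ inside its \textbf{for}-loop, which requires $v$ to be the endpoint of $e$ opposite to $u$ and contributes at most one such call per invocation of that vertex oracle; or (ii) as a recursive call made by $\edgeoracle{f, w, \pi}$ during its enumeration, which requires $w$ to be the endpoint of $e$ opposite to $u$, $f$ to be incident to $w$, and $\pi(f) > \pi(e)$.

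The crucial step is to use caching (Line~\ref{line:cache}): only the \emph{first} call to a given triple $(f, w, \pi)$ triggers the enumeration loop; all subsequent calls return immediately from the cache and therefore spawn no further edge-oracle invocations. Consequently, across the entire computation of $\vertexoracle{v, \pi}$ for all $v \in V$, each triple $(f, w, \pi)$ enumerates at most once, and in that single enumeration it calls $\edgeoracle{e, u, \pi}$ at most once, since the edges $e_1,\dots,e_k$ enumerated in Algorithm~\ref{alg:edgeoracle} are distinct. Hence the number of type-(ii) calls to $\edgeoracle{e, u, \pi}$ is at most the number of valid $(f, w)$ pairs, which is at most $\deg(w) - 1 \leq n-1$. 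Adding the at-most-one type-(i) call, and then summing over the two choices $u \in \{a, b\}$, gives $Q(e, \pi) \leq 2 + 2(n-1) = O(n)$, which is certainly $O(n^2)$.

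The main subtlety (and the only real obstacle) is to carefully distinguish between the total number of \emph{calls} to $\edgeoracle{e, \cdot, \pi}$ and the number of \emph{executions}, since caching controls only the latter. Once one observes that cached returns of $\edgeoracle{f, w, \pi}$ do not themselves spawn any further edge-oracle invocations, the tree of edge-oracle invocations cannot ``branch out'' through a cached node, and the counting argument above goes through cleanly.
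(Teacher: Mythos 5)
There is a genuine gap in your argument, and it comes from a misreading of how the caching works. You write that ``across the entire computation of $\vertexoracle{v, \pi}$ for all $v \in V$, each triple $(f, w, \pi)$ enumerates at most once,'' which presumes that the cache is shared across all $n$ starting invocations of the vertex oracle. That is not the setting here. The quantity $Q(e, \pi) = \sum_{v \in V} Q(e, v, \pi)$ sums over $n$ \emph{independent} executions of $\vertexoracle{v, \pi}$, each with a fresh cache; the paper makes this explicit just before Lemma~\ref{lem:querytoedge} (``run the vertex oracle on every vertex $v \in V$ all in parallel, in a way that the cached values stored for one parallel call are not used for another''). With separate caches, a given triple $(f, w, \pi)$ can trigger its enumeration loop once \emph{per starting vertex}, i.e., up to $n$ times, so your bound of $O(n)$ on $Q(e, \pi)$ does not follow.

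The fix is small and turns your argument into the paper's: your counting is correct if applied to a single starting vertex $v$. Within one execution of $\vertexoracle{v, \pi}$, the cache guarantees each $(f, w, \pi)$ enumerates at most once, giving at most $(\deg(a)-1) + (\deg(b)-1)$ recursive calls to $\edgeoracle{e, \cdot, \pi}$ plus at most one direct call from the vertex oracle, so $Q(e, v, \pi) \leq 2n - 1$. Summing this per-vertex bound over the $n$ choices of $v$ yields $Q(e, \pi) \leq n(2n-1) = O(n^2)$, which is what the observation claims. In short: your structural analysis of where calls to $\edgeoracle{e, \cdot, \pi}$ can originate is right, but you must scope it to a single vertex oracle execution and then multiply by $n$, rather than claiming the cache deduplicates work across executions.
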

\begin{proof}
	Take an arbitrary vertex $v \in V$. Over the course of answering $\vertexoracle{v, \pi}$, we either call $\edgeoracle{e, \cdot, \pi}$ directly by the vertex oracle (at most once) or during the execution of $\edgeoracle{f, \cdot, \pi}$ for some edge $f$ incident to $e$. The key observation is that $\edgeoracle{f, \cdot, \pi}$ generates new recursive calls only the first time that it is called (due to caching). Hence, in total $\edgeoracle{e, \cdot, \pi}$ is called at most $(\deg_G(a)-1) + (\deg_G(b)-1) + 1 \leq 2n - 1$ times while answering $\vertexoracle{v, \pi}$. This means $Q(e, v, \pi) \leq 2n-1$. Since $Q(e, \pi) = \sum_v Q(e, v, \pi)$, we get $Q(e, \pi) \leq \sum_v (2n-1) = n (2n-1) = O(n^2)$.
\end{proof}

The main technical part is the proof of Lemma~\ref{lem:querytoedge} below. Intuitively, the lemma states that if we run the vertex oracle on every vertex $v \in V$ all in parallel (in a way that the cashed values stored for one parallel call are not used for another), then in expectation over $\pi$, the total number of times that edge oracle $\edgeoracle{e, \cdot, \pi}$ is called can be bounded by $O(\log n)$.

\begin{lemma}\label{lem:querytoedge}
	For any edge $e$, $\E_\pi[Q(e, \pi)] = O(\log n).$
\end{lemma}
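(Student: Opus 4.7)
The plan is to bound $\E_\pi[Q(e,\pi)]$ via a charging argument that relocates each query to $e$ onto a different permutation, and then to apply the Fischer--Noever $O(\log n)$ bound on the parallel depth of RGMM. First I would observe, from the structure of Algorithms~\ref{alg:vertexoracle}--\ref{alg:edgeoracle}, that each call to $\edgeoracle{e,\cdot,\pi}$ arising inside $\vertexoracle{v,\pi}$ sits at the bottom of a unique recursion stack whose frames trace a simple path $P = (e_1,\ldots,e_\ell=e)$ in $G$: consecutive edges share an endpoint, $v$ is the ``free'' endpoint of $e_1$, and the rule ``enumerate edges of rank $<\pi(e_i)$'' in Algorithm~\ref{alg:edgeoracle} forces $\pi(e_1)>\pi(e_2)>\cdots>\pi(e_\ell)$. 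Moreover, the recursion reaches $e_i$ only if every edge incident to $u_{i-1}$ of $\pi$-rank below $\pi(e_i)$ is not in $\GMM{G,\pi}$ (otherwise the enumeration short-circuits first). Call such $P$ a $\pi$-\emph{realized} path; then $Q(e,\pi)$ equals the number of $\pi$-realized paths in $G$ ending at $e$.

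\textbf{Charging map.} For each $\pi$-realized path $P=(e_1,\ldots,e_\ell)$ ending at $e$, define $\sigma=\Phi(\pi,P)$ by \emph{reversing the $\pi$-ranks along $P$}: assign rank $\pi(e_{\ell-i+1})$ to $e_i$, and leave all other ranks fixed. Then $\sigma\in\Pi$, and in $\sigma$ the reversed sequence $e_\ell,e_{\ell-1},\ldots,e_1$ is a strictly decreasing-rank path of length $\ell-1$ \emph{starting at} $e$. A direct double counting gives
\[
\E_\pi[Q(e,\pi)] \;=\; \E_\sigma[N(\sigma)], \qquad N(\sigma) := |\Phi^{-1}(\sigma)|,
\]
so it suffices to prove $\E_\sigma[N(\sigma)]=O(\log n)$.

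\textbf{Invoking parallel depth.} The technical core is the structural claim sketched in Section~\ref{sec:techniques}: if $N(\sigma)=b$, then for every $1\le i\le b$ at least $i$ of the pre-images $(\pi_j,P_j)\in\Phi^{-1}(\sigma)$ satisfy $|P_j|\ge b-i$. I would prove this by sorting the pre-images by $|P_j|$ and arguing, via the injectivity of $\Phi$ and the ``no earlier matched neighbor'' conditions that define realized paths, that only polynomially-in-$k$ many pre-images can have $|P_j|\le k$, which forces the length distribution to skew upward as $b$ grows. Any $\pi$-realized path of length $\ell$ in turn witnesses that its starting edge has RGMM parallel depth $\ge\ell-1$ in $\pi$, so the Fischer--Noever bound guarantees that with probability $1-n^{-\omega(1)}$ over a uniform $\pi$, every realized path ending at $e$ has length at most $c\log n$. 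Combining the structural claim (instantiated at $i=b/2$) with the deterministic cap $N(\sigma)\le O(n^2)$ from caching (Observation~\ref{obs:detupperbound}),
\[
\E_\sigma[N(\sigma)] \;\le\; c\log n + O(n^2)\cdot\Pr_\pi[\text{some edge has parallel depth}>c\log n] \;=\; O(\log n).
\]

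\textbf{Main obstacle.} The delicate step will be the structural claim in the previous paragraph: quantitatively trading $N(\sigma)$ for lengths of pre-image paths. To formalize ``for each $i$, at least $i$ pre-images have length $\ge b-i$'', I need a careful bookkeeping of which decreasing-rank paths out of $e$ in $\sigma$ can arise from a $\pi$-realized path via the reversal $\Phi$, and in particular I need to exploit the ``no earlier matched neighbor'' condition to show that short pre-images rapidly collide. This is precisely where the novel connection to parallel depth enters and where the improvement over the $O(\bar d\cdot\Delta)$ bound of Yoshida--Yamamoto--Ito comes from; everything else is bookkeeping around double counting and well-known tail bounds.
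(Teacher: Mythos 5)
Your high-level blueprint matches the paper's: define a rank-permuting map $\Phi$ on pairs $(\pi, P)$ of a permutation and a query-path ending at $e$; use double counting to turn $\E_\pi[Q(e,\pi)]$ into the expected number of preimages $N(\sigma)$ of a uniform $\sigma$; show that preimages of a fixed $\sigma$ have (essentially) distinct path lengths; and discount the contribution of permutations with a long query path via the Fischer--Noever $O(\log n)$ parallel-depth bound combined with the deterministic $O(n^2)$ cap from caching. One small bookkeeping step you omit is the reduction to \emph{directed} queries (the paper's Observation~\ref{obs:directed} and Lemma~\ref{lem:querytodirectededge}): a query-path can approach $e$ from either endpoint, and the distinct-length argument must be run per orientation.

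The substantive gap is your choice of $\Phi$. You define $\Phi(\pi, P)$ to \emph{reverse} the ranks along $P=(e_1,\ldots,e_\ell)$, i.e.\ $\sigma(e_i) := \pi(e_{\ell-i+1})$. The paper instead \emph{rotates} them by one position: $\sigma(e_i) := \pi(e_{i+1})$ for $i<\ell$, and $\sigma(e_\ell) := \pi(e_1)$. These maps differ already at $\ell = 3$, and the distinct-length claim (Claim~\ref{cl:nobranch}) hinges on the rotation's locality. Reindexing both paths from $e$ and letting $b+1$ be the first index where $P$ and $P'$ branch, the rotation gives the pivotal identity $\pi'(e_{b+1}) = \pi(e_b)$ (Claim~\ref{cl:theoun12389}): $e_{b+1}\notin P'$ keeps its $\sigma$-rank unchanged under $\phi(\pi',P')$, while under $\phi(\pi,P)$ the edge $e_{b+1}$ inherits $\pi$'s rank on its \emph{immediate predecessor} $e_b$. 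Combined with the WLOG normalization $\pi(e_b)\le\pi'(e'_b)$, this forces $\pi'(e_{b+1})<\pi'(e'_b)$, which is exactly what makes $P'$ an illegal query-path in $\pi'$: the edge oracle at $e'_{b+1}$ would short-circuit at $e_{b+1}$ before ever reaching $e'_b$. Under your reversal, the same computation yields $\pi'(e_{b+1}) = \pi(e_{\ell-b})$ --- a rank pulled from the \emph{opposite} end of the path. When the branch occurs before the midpoint ($b<\ell/2$), monotonicity of ranks along the path gives $\pi(e_{\ell-b})>\pi(e_b)$, and the needed inequality $\pi'(e_{b+1})<\pi'(e'_b)$ no longer follows; the contradiction collapses. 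The ``agree on small ranks'' step (Claim~\ref{cl:pipi'equalonsmallranks}) likewise exploits the shift-by-one structure and does not transfer to the reversal.

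Beyond the map, the structural claim is left as a plan: ``only polynomially-in-$k$ many preimages have length $\le k$'' is asserted rather than proved, and is strictly weaker than what the paper actually establishes (at most \emph{one} preimage of each length, hence at most $\beta$ preimages of length $\le\beta$). Your endgame arithmetic --- bounding short preimages deterministically and long ones by the $n^{-2}$ parallel-depth tail times the $O(n^2)$ caching cap --- is sound once the distinct-length claim is in hand, and is the same edge-counting the paper performs in the bipartite graph $H$. So the fix is clear: swap the full reversal for the one-step rotation and prove the distinct-length claim via the branch-point analysis; the rest of your plan then goes through.
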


Lemma~\ref{lem:querytoedge} easily implies Theorem~\ref{thm:querycomplexity} as described next.

\begin{proof}[Proof of Theorem~\ref{thm:querycomplexity} via Lemma~\ref{lem:querytoedge}.]
	It holds that
	\begin{flalign}
		\nonumber \E_\pi \left[\sum_{v \in V} T(v, \pi) \right] &\stackrel{\text{Obs~\ref{obs:TvsQ}}}{=} \E_\pi \left[\sum_{v \in V} \sum_{e \in E} Q(e, v, \pi) \right] = \E_\pi \left[\sum_{e \in E} \sum_{v \in V} Q(e, v, \pi) \right] = \E_\pi \left[\sum_{e \in E} Q(e, \pi) \right]\\
		&= \sum_{e \in E} \E_\pi \left[Q(e, \pi) \right] \stackrel{\text{Lemma~\ref{lem:querytoedge}}}{=} \sum_{e \in E} O(\log n) = O(m \log n).\label{eq:hugdc19823123}
	\end{flalign}
	Therefore, for a vertex $v \in V$ that is chosen uniformly at random from $V$, we get
	$$
		\E_{v,\pi} [T(v, \pi)] = \frac{1}{n} \cdot \E_\pi \left[\sum_{v \in V} T(v, \pi) \right] \stackrel{(\ref{eq:hugdc19823123})}{=} \frac{1}{n} \cdot O(m \log n) = O(\bar{d} \cdot \log n).\qedhere
	$$
\end{proof}

The rest of this section is devoted to proving Lemma~\ref{lem:querytoedge}.

\paragraph{Query paths:} Let $S_{v, \pi}$ be the stack of recursive calls to the edge oracle during the execution of $\vertexoracle{v, \pi}$. Namely, whenever the edge oracle $\edgeoracle{e, u, \pi}$ is called on some edge $e$ we push $e$ to the stack, and pop $e$ when the value of $\edgeoracle{e, u, \pi}$ is determined. Let $P=(e_1, \ldots, e_k)$ be the ordered set of edges inside $S_{v, \pi}$ at some point during the execution of $\vertexoracle{v, \pi}$ with $e_k$ being the last edge pushed into the stack. One can verify from the description of the vertex and edge oracles that $P$ must be a path in graph $G$ with $v$ being one of its endpoints (particularly $v \in e_1$). Now if we make the edges in $P$ directed such that $\vec{P} = (\vec{e_1}, \ldots, \vec{e_k})$ is a directed path starting from $v$, we call $\vec{P}$ a $(v, \pi)$-query-path. With this definition, $T(v, \pi)$ is essentially the total number of $(v, \pi)$-query-paths.

Let us take an edge $e = \{a, b\} \in E$ and let $\vec{e} = (a, b)$ be the same edge made directed from $a$ to $b$. We define $\mc{Q}(\vec{e}, v, \pi)$ to be the set of all $(v, \pi)$-query-paths that end at $\vec{e}$ (in this precise direction) and define $\mc{Q}(\vec{e}, \pi) := \bigcup_{v \in V} \mc{Q}(\vec{e}, v, \pi)$. Furthermore, we define $Q(\vec{e}, v, \pi) := |\mc{Q}(\vec{e}, v, \pi)|$ and define $Q(\vec{e}, \pi) := \sum_{v \in V} Q(\vec{e}, v, \pi) = |\mc{Q}(\vec{e}, \pi)|$ (the latter equality follows since $\mc{Q}(\vec{e}, v, \pi)$ and $\mc{Q}(\vec{e}, u, \pi)$ are disjoint for $u \not= v$).

\begin{observation}\label{obs:directed}
	Let $e = \{a, b\}$, $\vec{e} = (a, b)$, and $\cev{e} = (b, a)$. We have $Q(e, \pi) = Q(\vec{e}, \pi) + Q(\cev{e}, \pi)$.
\end{observation}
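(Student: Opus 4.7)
The plan is to verify the claim by decomposing each invocation of the edge oracle on $e$ according to which endpoint of $e$ appears as its second argument, and showing that these two sub-counts are exactly $Q(\vec{e},\pi)$ and $Q(\cev{e},\pi)$.

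First, I would unpack the one-to-one correspondence between edge-oracle invocations and query-paths. By the definition of $S_{v,\pi}$, every invocation $\edgeoracle{e',u',\pi}$ encountered while evaluating $\vertexoracle{v,\pi}$ pushes $e'$ onto the stack, and the snapshot of the stack at that instant is a $(v,\pi)$-query-path whose last edge is $e'$; conversely, each $(v,\pi)$-query-path arises from exactly one such invocation. Consequently $Q(e,v,\pi)$, which counts all invocations $\edgeoracle{e,\cdot,\pi}$ made during $\vertexoracle{v,\pi}$ (including cached ones, which are still pushed and immediately popped), equals the total number of $(v,\pi)$-query-paths whose final edge is $e$.

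Next, I would argue that the second argument $u$ of a call $\edgeoracle{e,u,\pi}$ determines the direction of $e$ in the corresponding query-path. Inspecting Algorithms~\ref{alg:vertexoracle} and~\ref{alg:edgeoracle}, every caller of $\edgeoracle{e,u,\pi}$ is either $\vertexoracle{v,\pi}$ with $v$ the endpoint of $e$ opposite $u$, or a frame $\edgeoracle{f,w,\pi}$ where $w$ is again the endpoint of $e$ opposite $u$ (since both oracles pass the far endpoint of the incident edge as the new second argument). Therefore, immediately before $e$ is pushed, the stack read as a directed path from $v$ ends at the vertex $w$ opposite $u$, and after the push the directed path extends with the edge $w \to u$. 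Writing $e=\{a,b\}$, this means that calls $\edgeoracle{e,b,\pi}$ correspond precisely to $(v,\pi)$-query-paths ending in the directed edge $\vec{e}=(a,b)$, while calls $\edgeoracle{e,a,\pi}$ correspond to those ending in $\cev{e}=(b,a)$.

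Finally, since the second argument of every invocation $\edgeoracle{e,\cdot,\pi}$ is either $a$ or $b$, these two families partition all invocations. Combined with the bijection from the first step, this yields $Q(e,v,\pi) = Q(\vec{e},v,\pi) + Q(\cev{e},v,\pi)$ for every $v$, and summing over $v \in V$ gives the identity $Q(e,\pi) = Q(\vec{e},\pi) + Q(\cev{e},\pi)$. I do not expect any real obstacle here: the statement is essentially a bookkeeping identity, and the only subtlety is confirming that cached calls are handled consistently, which follows from the convention, stated just before the observation, that cached invocations still contribute to $Q(e,v,\pi)$ and still push a (trivially resolved) frame onto $S_{v,\pi}$.
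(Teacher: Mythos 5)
Your proof is correct and follows essentially the same route as the paper: each invocation of $\edgeoracle{e,\cdot,\pi}$ pushes $e$ onto the stack and yields exactly one $(v,\pi)$-query-path ending at $e$ in one of the two directions, giving $Q(e,v,\pi)=Q(\vec{e},v,\pi)+Q(\cev{e},v,\pi)$, and the identity follows by summing over $v$. Your extra observation that the second argument of the call determines the direction is a fine (slightly more explicit) way of stating the same bookkeeping fact.
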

\begin{proof}
	Recall from the definition that $Q(e, v, \pi)$ is the number of times that edge oracle $\edgeoracle{e, \cdot, \pi}$ is called during the execution of $\vertexoracle{v, \pi}$. Every time that we call $\edgeoracle{e, \cdot, \pi}$, we add $e$ to the stack $S_{v, \pi}$ and thus get exactly one $(v, \pi)$-query-path that ends at $e$ either in the direction of $\vec{e}$ or $\cev{e}$. This implies that $Q(e, v, \pi) = Q(\vec{e}, v, \pi) + Q(\cev{e}, v, \pi)$. The observation follows since by definition $Q(\vec{e}, \pi) = \sum_v Q(\vec{e}, v, \pi)$, $Q(e, \pi) = \sum_v Q(e, v, \pi)$, and $Q(\cev{e}, \pi) = \sum_v Q(\cev{e}, v, \pi)$.
\end{proof}

In what follows, we prove the following lemma which easily implies Lemma~\ref{lem:querytoedge}.

\begin{lemma}\label{lem:querytodirectededge}
	For any arbitrarily directed edge $\vec{e} = (a, b)$, $\E_\pi[Q(\vec{e}, \pi)] = O(\log n)$.
\end{lemma}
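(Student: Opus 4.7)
The plan is to bound $\E_\pi[Q(\vec{e}, \pi)]$ via a charging argument that maps each query path ending at $\vec{e}$ in a permutation $\pi$ to a designated sibling permutation $\sigma$, and then to control the multiplicity of these charges by appealing to the parallel-depth bound for RGMM due to Fischer and Noever.

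First I will read off from Algorithms~\ref{alg:vertexoracle}--\ref{alg:edgeoracle} the necessary structure of a $(v, \pi)$-query path $\vec{P} = (\vec{e_1}, \ldots, \vec{e_k})$ ending at $\vec{e}$: the ranks strictly decrease, $\pi(e_1) > \ldots > \pi(e_k)$, and a sequence of ``survivor'' conditions must hold at each pivot vertex $u_i$ (no edge at $u_i$ with rank below $\pi(e_{i+1})$ lies in $\GMM{G, \pi}$, since the iteration loop in Algorithm~\ref{alg:edgeoracle} would otherwise have returned early). Then, for each realized pair $(\pi, \vec{P})$, I will define a charging map $\Phi(\pi, \vec{P}) := \sigma$ by a local re-arrangement of the ranks of $e_1, \ldots, e_k$ --- concretely, a one-step cyclic rotation that sends $\pi(e_k)$ to the rank-slot of $e_1$ and shifts the other ranks forward by one. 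The map $\Phi$ will be invertible given $\sigma$ together with the image of the path, so a direct double-counting identity will yield
\[
    \E_\pi[Q(\vec{e}, \pi)] \;=\; \E_\sigma[C(\sigma)], \qquad \text{where } C(\sigma) := |\Phi^{-1}(\sigma)|.
\]

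Next I will bound $\E_\sigma[C(\sigma)]$ by $O(\log n)$. A pointwise bound on $C(\sigma)$ is too weak, since (as noted in Section~\ref{sec:techniques}) some $\sigma$ attract $\Omega(n)$ charges. Instead, I plan to establish the structural claim from the introduction: whenever $C(\sigma) \geq b$, then for each $1 \leq i \leq b$ at least $i$ of the preimages $\pi \in \Phi^{-1}(\sigma)$ have parallel depth $\geq c(b - i)$ for some absolute $c > 0$. The proof of this claim will extract from $b$ distinct preimages of $\sigma$ a family of decreasing rank chains of increasing lengths inside the corresponding permutations $\pi$, forcing their parallel depth to grow. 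Combining the claim with the Fischer--Noever bound (RGMM has parallel depth $O(\log n)$ w.h.p., with polynomially-decaying tail) and with the deterministic worst-case estimate $C(\sigma) \leq O(n^2)$ that follows from Observation~\ref{obs:detupperbound}, I will partition by depth regime and truncate at $\Theta(\log n)$: the low-depth part will contribute $O(\log n)$, while the tail contributes only a $1/\poly(n)$ term. Lemma~\ref{lem:querytoedge}, and hence Theorem~\ref{thm:querycomplexity}, will then follow via Observation~\ref{obs:directed}.

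The hardest step will be the structural claim above: showing that each charge to $\sigma$ beyond the first $\Theta(\log n)$ must come from a $\pi$ whose blocker chains are proportionally deeper. This requires careful bookkeeping of how the rank rotation under $\Phi$ interacts with the survivor conditions on edges \emph{outside} $\vec{P}$, since those off-path conditions are what distinguish the various preimages of $\sigma$. Caching (Remark~\ref{rem:cache}) is essential here both for the deterministic worst-case bound and so that repeated queries of the same cached edge do not artificially inflate the charge count. I expect this delicate combinatorial link between ``many charges'' and ``deep blocker chains'' to be the technical heart of the proof.
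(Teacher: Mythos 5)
Your plan follows the paper's own argument step for step: the paper also defines the charging map by a cyclic rotation of $\pi$'s ranks along the query-path (the map $\phi(\pi,\vec{P})$), encodes the double-counting identity via a bipartite graph $H$ between permutations, splits permutations into ``likely'' and ``unlikely'' according to whether the maximum query-path length to $\vec{e}$ exceeds $\beta = \Theta(\log n)$, bounds the unlikely contribution by combining the Fischer--Noever depth bound with the deterministic $Q(e,\pi)=O(n^2)$ cap coming from caching, and bounds the likely contribution per $\sigma$. The piece you flag as ``the technical heart'' is indeed the whole game, and your proposal leaves it unproved. The paper's key combinatorial fact (Claim~\ref{cl:nobranch}) is that if $\phi(\pi,\vec{P})=\phi(\pi',\vec{P}')=\sigma$ with $\pi\neq\pi'$, then $|\vec{P}|\neq|\vec{P}'|$. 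Each charge to a fixed $\sigma$ then carries a distinct positive-integer label (its path length), so $\sigma$ receives at most $\beta$ charges from likely permutations --- exactly the per-$\sigma$ bound your truncation step needs. Your sketch, ``extract a family of decreasing rank chains of increasing lengths,'' gestures at distinctness but does not say why two same-length charges to $\sigma$ are impossible. The actual argument looks at the first edge where the two paths diverge, uses the rotation to derive an equality of ranks across $\pi$ and $\pi'$ at that branch, shows the two permutations agree on all ranks below that threshold, concludes the branching edge of one path must already lie in $\GMM{G,\pi'}$, and therefore that the edge oracle in $\pi'$ would have terminated early, invalidating $\vec{P}'$. Without this, the truncation at $\Theta(\log n)$ has nothing to stand on.

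A secondary concern is the direction of the rotation. You write that it ``sends $\pi(e_k)$ to the rank-slot of $e_1$,'' i.e.\ lowers the starting edge's rank to the minimum. The paper's $\phi$ rotates the other way: the target edge $e_k=e$ receives the \emph{highest} rank $\pi(e_1)$, and every other path edge has its rank decreased to that of its successor. The branching analysis above relies on this orientation (it is what makes $\pi$ and $\pi'$ agree on all low ranks and what places the off-path branching edge into $\GMM{G,\pi'}$). The reversed rotation is a genuinely different map, and it is not obvious that it yields the same invariants; at minimum it would require a fresh proof of the distinct-length claim.
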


\begin{proof}[Proof of Lemma~\ref{lem:querytoedge} via Lemma~\ref{lem:querytodirectededge}]
	By Observation~\ref{obs:directed}, Lemma~\ref{lem:querytodirectededge} implies Lemma~\ref{lem:querytoedge} since for any edge $e = \{u, v\}$,
$
	\E_\pi[Q(e, \pi)] = \E_\pi[Q(\vec{e}, \pi)] + \E_\pi[Q(\cev{e}, \pi)] = O(\log n) + O(\log n) = O(\log n).\qedhere
$
\end{proof}

We now turn to prove Lemma~\ref{lem:querytodirectededge} for edge $\vec{e}$ that is fixed for the rest of the proof. 

First, given a permutation $\pi \in \Pi$ and a directed path $\vec{P} = (\vec{e_1}, \ldots, \vec{e_k})$, we define $\phi(\pi, \vec{P}) \in \Pi$ to be another permutation $\sigma \in \Pi$ constructed as:
\begin{flalign*}
	(\sigma(e_1), \ldots, \sigma(e_{k-1}), \sigma(e_k)) &:= (\pi(e_2), \ldots, \pi(e_k), \pi(e_1)), \text{ and}\\
	 \sigma(e') &:= \pi(e')  \qquad \forall e' \not\in \vec{P}.
\end{flalign*}
That is, $\phi(\pi, \vec{P})$ is obtained by rotating the ranks of $\pi$ along $\vec{P}$ in the reverse direction.

Now we construct a bipartite graph $H = H(\vec{e})$ with two parts $A$ and $B$ such that $|A| = |B| = |\Pi|$. Each permutation $\pi \in \Pi$ over the edge-set of $G$ has one corresponding vertex $\pi_A$ in $A$ and one corresponding vertex $\pi_B$ in $B$. Furthermore, for any permutation $\pi \in \Pi$ and any query-path $\vec{P}=(\vec{e_1}, \ldots, \vec{e_k} = \vec{e}) \in \mathcal{Q}(\vec{e}, \pi)$, we connect vertex $\pi_A \in A$ to vertex $\sigma_B \in B$ corresponding to permutation $\sigma := \phi(\pi, \vec{P})$.

Graph $H$ is particularly constructed such that for each permutation $\pi \in \Pi$ the degree of vertex $\pi_A \in A$ equals $Q(\vec{e}, \pi)$. Namely:

\begin{observation}\label{obs:degreesarequeries}
	For any permutation $\pi \in \Pi$, $\deg_H(\pi_A) = Q(\vec{e}, \pi)$. 
\end{observation}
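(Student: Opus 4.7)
The plan is to show that each query-path in $\mathcal{Q}(\vec{e}, \pi)$ creates a distinct edge from $\pi_A$ in $H$, so that the number of such paths exactly equals the degree of $\pi_A$. By the definition of $H$, every $\vec{P} \in \mathcal{Q}(\vec{e}, \pi)$ contributes one edge from $\pi_A$ to $\phi(\pi, \vec{P})_B$, which immediately gives $\deg_H(\pi_A) \leq |\mathcal{Q}(\vec{e}, \pi)| = Q(\vec{e}, \pi)$. The remaining content is the matching lower bound, which reduces to proving that the map $\vec{P} \mapsto \phi(\pi, \vec{P})$ is injective on $\mathcal{Q}(\vec{e}, \pi)$.

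To prove injectivity, I would recover the path $\vec{P} = (\vec{e_1}, \ldots, \vec{e_k})$ from the pair $(\pi, \sigma)$ together with the prescribed terminal $\vec{e} = (a, b)$. Since $\pi$ takes pairwise distinct values on $E$, the cyclic rotation defining $\phi$ modifies the rank at every edge of $P$ (when $k \geq 2$), so the set $S := \{e' \in E : \pi(e') \neq \sigma(e')\}$ is exactly $\{e_1, \ldots, e_k\}$. The order along $P$ can then be reconstructed by a backward trace anchored at $e_k = e$: the identity $\sigma(e_k) = \pi(e_1)$ identifies $e_1$ as the unique edge in $S$ whose $\pi$-rank equals $\sigma(e)$, and each subsequent relation $\sigma(e_{i-1}) = \pi(e_i)$ pins down $e_i$ for $i = 2, \ldots, k-1$. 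The orientation of each $\vec{e_i}$ is then forced by the constraint that $\vec{P}$ be a directed path in $G$ ending at $\vec{e}$. The degenerate length-$1$ case $\vec{P} = (\vec{e})$ yields $\sigma = \pi$ and is the only length-$1$ path ending at $\vec{e}$, so injectivity is trivial there.

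The step I expect to require the most care is verifying that the backward trace never becomes ambiguous: at each iteration, the rank $\sigma(e_{i-1})$ must actually be attained as $\pi(e_i)$ by some unique edge $e_i \in S$. This follows because $\sigma$ restricted to $P$ is, by the definition of $\phi$, a cyclic shift of $\pi$ restricted to $P$, so $\{\pi(e') : e' \in S\} = \{\sigma(e') : e' \in S\}$ as sets, and $\pi$ being a bijection ensures each intermediate $\sigma$-rank is realized as a $\pi$-rank on $S$ exactly once. Once injectivity is in place, the identity $\deg_H(\pi_A) = Q(\vec{e}, \pi)$ is immediate from the construction.
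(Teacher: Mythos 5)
Your proof is correct, but it does considerably more work than the paper's. The paper treats the observation as immediate ``by construction'': $H$ is built by adding, for each query path $\vec{P} \in \mathcal{Q}(\vec{e}, \pi)$, one edge incident to $\pi_A$, so $\deg_H(\pi_A) = |\mathcal{Q}(\vec{e}, \pi)|$ is definitional---in effect $H$ is a multigraph whose edges are labelled by query paths. You instead prove that the map $\vec{P} \mapsto \phi(\pi, \vec{P})$ is injective on $\mathcal{Q}(\vec{e}, \pi)$: recover the edge set of $P$ as $\{e' : \pi(e') \neq \sigma(e')\}$, reconstruct the order via the backward trace $\pi(e_{i+1}) = \sigma(e_i)$ anchored at $e_k = e$, and note that the orientations are then forced by the fixed terminal $\vec{e}$. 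This is a clean and correct argument (the trace cannot cycle prematurely because the cyclic shift induces a single $k$-cycle on the edges of $P$), and it establishes the stronger fact that there are no parallel edges at $\pi_A$, so the conclusion holds whether $H$ is read as a multigraph or as a simple graph. It also directly justifies the phrase ``the unique $(v, \pi)$-query-path $\vec{P}$ such that $\phi(\pi, \vec{P}) = \sigma$'' invoked later in the proof of Lemma~\ref{lem:AL}, a uniqueness the paper leaves implicit for the case of a single fixed $\pi$ (Claim~\ref{cl:nobranch} only handles $\pi \neq \pi'$). The cost is extra work that the multigraph reading renders unnecessary; the benefit is a more self-contained and slightly stronger statement.
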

\begin{proof}
	By construction there is a one-to-one mapping between the edges of $\pi_A$ and query paths $\vec{P} \in \mc{Q}(\vec{e}, \pi)$. Thus $\deg_H(\pi_A) = |\mc{Q}(\vec{e}, \pi)| = Q(\vec{e}, \pi)$.
\end{proof}

Therefore to prove Lemma~\ref{lem:querytodirectededge} that $\E_\pi[Q(\vec{e}, \pi)] = O(\log n)$, it suffices to bound the average degree of graph $H$ by $O(\log n)$. In other words, it suffices to show that for a vertex $\pi_A \in A$ chosen uniformly at random, $\E_{\pi_A \sim A}[\deg_H(\pi_A)] = O(\log n)$. This is our plan for the rest of the proof.

For some large enough constant $c \geq 1$ and parameter $\beta = c \log n$, we partition $\Pi$ into two subsets of {\em likely} permutations $L$ and {\em unlikely} permutations $U$ as follows:
\begin{flalign}
	L := \left\{ \pi \in \Pi \,\,\Big\vert\,\, \max_{P \in \mc{Q}(\vec{e}, \pi)} |P| \leq \beta \right\}, \qquad U := \Pi \setminus L.\label{eq:defLU}
\end{flalign}
In words, if all query-paths in $\mc{Q}(\vec{e}, \pi)$ have length $\leq \beta$ then $\pi \in L$, and otherwise $\pi \in U$. We use $A_L$ (resp. $A_U$) to denote the set of vertices in part $A$ of graph $H$ that correspond to permutations in $L$ (resp. $U$).

We use the next two lemmas to bound the number of edges connected to $A_U$ and $A_L$ respectively.

\begin{lemma}\label{lem:AL}
	Any vertex $\sigma_B \in B$ has at most $\beta$ neighbors in $A_L$.
\end{lemma}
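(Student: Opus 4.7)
The plan is to bound the number of neighbors of a fixed $\sigma_B \in B$ in $A_L$ by $\beta$ via a uniqueness-per-length argument. Each such neighbor corresponds to a triple $(\pi, \vec{P})$ with $\pi \in L$, $\vec{P} \in \mc{Q}(\vec{e}, \pi)$, and $\phi(\pi, \vec{P}) = \sigma$. Since $\pi \in L$ forces $|\vec{P}| \leq \beta$, it suffices to prove the following: for every length $k \in \{1, 2, \ldots, \beta\}$, there is at most one valid triple whose path has length exactly $k$.

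For a fixed $k$, I would attempt to reconstruct $\vec{P} = (\vec{e_1}, \ldots, \vec{e_k} = \vec{e})$ from $\sigma$ alone by walking backward along the path, determining $e_{k-1}, e_{k-2}, \ldots, e_1$ one at a time. The key structural fact is that $\pi$ and $\sigma$ agree on every non-path edge and differ on the path only by the cyclic rotation $\pi(e_i) = \sigma(e_{i-1})$ for $i \geq 2$ and $\pi(e_1) = \sigma(e_k)$. In particular $\sigma(e_{k-1})$ is the smallest $\sigma$-rank on the path, so below this threshold the restricted greedy matchings of $\pi$ and $\sigma$ coincide. The validity condition at step $i = k-1$, namely that $a$ be unmatched among edges of $\pi$-rank less than $\pi(e_k) = \sigma(e_{k-1})$, therefore reduces to a condition on $\sigma$ alone, and I would proceed to translate the analogous conditions at smaller $i$ into explicit $\sigma$-side constraints that pin down each $e_j$.

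The main obstacle will be handling the rank range above $\sigma(e_{k-1})$, where $\pi$ places $e_k = e$ into the slot that $\sigma$ reserves for $e_{k-1}$; this swap can cascade and alter which subsequent edges join the greedy matching of $\pi$. The argument must carefully trace this divergence and verify that the path-validity conditions at each earlier step still force a unique candidate for $e_j$. Once uniqueness per length is established, summing over $k \in \{1, \ldots, \beta\}$ gives the claimed bound of $\beta$ neighbors.
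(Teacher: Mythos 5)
Your reduction of the lemma to a ``uniqueness per length'' statement is exactly the paper's Claim~\ref{cl:nobranch}, and the plan to sum over lengths $1, \ldots, \beta$ matches the paper's derivation of Lemma~\ref{lem:AL} from that claim. You also correctly identify that with the labeling $e_k = \vec{e}$, the $\sigma$-rank $\sigma(e_{k-1}) = \pi(e_k)$ is the smallest $\sigma$-rank on the path, and that $\pi$ and $\sigma$ agree on all edges of rank below this threshold, so the greedy matchings coincide on that prefix.

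However, you explicitly defer the one step that carries the entire technical weight: ``handling the rank range above $\sigma(e_{k-1})$, where \ldots{} this swap can cascade.'' That deferred step \emph{is} the proof; nothing you wrote shows how to make the reconstruction pin down $e_{k-1}, e_{k-2}, \ldots$ once the ranks climb above $\sigma(e_{k-1})$, and indeed a naive attempt stalls immediately: to validate the recursion from $e_{k-2}$ into $e_{k-1}$ one must reason about $\GMM{G,\pi}$ at ranks up to $\pi(e_{k-1}) = \sigma(e_{k-2}) > \sigma(e_{k-1})$, precisely where your agreement argument has run out. The paper gets around this not by comparing $\pi$ with $\sigma$, but by comparing two hypothetical preimages $\pi$ and $\pi'$ with $\phi(\pi,\vec P) = \phi(\pi',\vec P') = \sigma$ and $|\vec P| = |\vec P'|$: it locates the first branching index $b$ (walking from $\vec e$), proves via a four-case analysis (Claim~\ref{cl:pipi'equalonsmallranks}) that $\pi$ and $\pi'$ agree on \emph{all} ranks below $\pi(e_b)$ --- a much larger window than the one your $\pi$-versus-$\sigma$ comparison gives --- and then shows (Claims~\ref{cl:theoun12389}, \ref{cl:dgcr128397}, \ref{cl:cclbu129837}) that $e_{b+1}$ must lie in $\GMM{G,\pi'}$ with $\pi'(e_{b+1}) < \pi'(e'_b)$, so the edge oracle along $\vec P'$ would halt before reaching $e'_b$, contradicting the validity of $\vec P'$. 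Without some analogue of this enlarged-agreement observation and the branching-point contradiction, the reconstruction outline does not close.
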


\begin{lemma}\label{lem:AU}
	If constant $c$ is large enough, $|A_U| \leq m! / n^2$.
\end{lemma}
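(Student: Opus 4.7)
The plan is to reduce the bound $|A_U|\le m!/n^2$ to a tail bound on the length of the longest ``$\pi$-decreasing walk'' in a randomly edge-permuted graph, and then invoke the parallel-depth analysis of Fischer and Noever~\cite{FischerTALG}. Call a sequence $(f_1,\ldots,f_\ell)$ of edges of $G$ a \emph{$\pi$-decreasing walk} if consecutive edges share a vertex and $\pi(f_1)>\pi(f_2)>\cdots>\pi(f_\ell)$.

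The first step is the structural observation that every query-path is $\pi$-decreasing. By inspection of Algorithm~\ref{alg:edgeoracle}, any recursive call $\edgeoracle{e',\cdot,\pi}$ invoked from inside $\edgeoracle{e,\cdot,\pi}$ satisfies $\pi(e')<\pi(e)$, so if the stack $S_{v,\pi}$ contains the query-path $(\vec{e_1},\ldots,\vec{e_k})$ then $\pi(e_1)>\pi(e_2)>\cdots>\pi(e_k)$. In particular, if $\pi\in U$ then $G$ contains a $\pi$-decreasing walk of length $>\beta = c\log n$ passing through $\vec{e}$.

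The second step is to invoke the bound of Fischer and Noever~\cite{FischerTALG} (building on Blelloch, Fineman, and Shun~\cite{BlellochSPAA}) that for a uniformly random edge permutation $\pi$ of an $n$-vertex graph, the length of the longest $\pi$-decreasing walk exceeds $C\log n$ with probability at most $n^{-\Omega(C)}$. Choosing the constant $c$ in $\beta = c\log n$ large enough therefore yields $\Pr_{\pi\sim\Pi}[\pi\in U]\le 1/n^2$. Since there is a bijection between permutations in $U$ and vertices of $A_U$, we have $|A_U|=|U|\le m!/n^2$, which is the lemma.

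The only nonroutine step is importing the right black-box tail bound: a first-moment estimate is far from enough, because the expected number of length-$\beta$ $\pi$-decreasing walks through a fixed edge is roughly $\Delta^\beta/\beta!$, which is $\gg 1$ whenever $\Delta = \omega(\log n)$. The cancellation that drives this expectation down to an $n^{-c}$ tail must come from the global structural constraints exploited in~\cite{FischerTALG,BlellochSPAA}; I would therefore either cite their theorem directly or adapt their ``deletion-and-charging'' style argument to the edge-walk formulation used here (since as noted above, decreasing-walk length in our sense is exactly the dependency depth of greedy maximal matching).
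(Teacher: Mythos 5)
Your first step is correct and matches Observation~\ref{obs:querypathdecreasing} of the paper: query-paths have strictly decreasing $\pi$-ranks along them, so a permutation in $U$ witnesses a $\pi$-decreasing path of length $> \beta$ through $\vec{e}$. The gap is in the second step. You assert, as a black box attributable to Fischer--Noever~\cite{FischerTALG}, that the longest $\pi$-decreasing walk in a randomly edge-permuted graph exceeds $C\log n$ with probability $n^{-\Omega(C)}$. No such theorem exists, and the statement is in fact false. What \cite{FischerTALG} (and \cite{BlellochSPAA}) prove is an $O(\log n)$ high-probability bound on the \emph{parallel round complexity} $\rho(G,\pi)$ of randomized greedy MM/MIS, which is a genuinely smaller quantity. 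For instance, in $K_n$ with a uniformly random edge ordering, the longest $\pi$-increasing (equivalently decreasing) simple path has length $\Theta(n)$ with high probability, while $\rho(K_n,\pi) = O(\log n)$. Your parenthetical claim that ``decreasing-walk length in our sense is exactly the dependency depth of greedy maximal matching'' is precisely where the argument breaks: a decreasing walk is a much weaker object than a dependency chain, because nothing forces any edge on a decreasing walk to actually enter the matching or to be the \emph{reason} its predecessor is excluded.

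The paper's proof supplies the missing bridge with Claim~\ref{cl:queryvsround}, which shows that for any \emph{query-path} $P$, one has $\rho(G,\pi) \geq \lfloor |P|/2 \rfloor$. This is where the specific structure of query-paths (as opposed to arbitrary decreasing paths) is essential: the edge oracle aborts a branch the moment it finds a matched lower-rank neighbor, so a query-path can only be long if, at every other step, the edge it is standing on is not yet resolvable at the current parallel round. The proof of Claim~\ref{cl:queryvsround} encodes exactly this: it shows $\rho(e_i) \geq \rho(e_{i-2}) + 1$ along the query path by deriving a contradiction (the query would have terminated early) if some edge $f$ adjacent to $e_i$ were already settled in the matching by round $\rho(e_{i-2})$. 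Without this claim or a substitute for it, you have no way to connect query-path length to the quantity that \cite{FischerTALG} actually controls, and the reduction does not go through. If you want to salvage your outline, replace ``longest $\pi$-decreasing walk'' with ``longest query-path'' and then prove the analogue of Claim~\ref{cl:queryvsround}; a first-moment or union-bound argument over decreasing walks, which you correctly observe would fail, is not the right tool either way.
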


Let us first see how Lemmas~\ref{lem:AL} and \ref{lem:AU} prove Lemma~\ref{lem:querytodirectededge} (and thus  Theorem~\ref{thm:querycomplexity} as discussed before). We then turn to prove these two claims.

\begin{proof}[Proof of Lemma~\ref{lem:querytodirectededge} via Lemmas~\ref{lem:AL} and \ref{lem:AU}.]
	We first upper bound the size of the edge-set $E(H)$ of $H$ by $O(m! \log n)$. Since each vertex $\pi_A \in A$ has degree $O(n^2)$ by Observation~\ref{obs:detupperbound}, and that by Lemma~\ref{lem:AU}, $|A_U| \leq m!/n^2$, the number of edges connected to $A_U$ is $\frac{m!}{n^2} \cdot O(n^2) = O(m!)$. Moreover, by Lemma~\ref{lem:AL}, each vertex $\sigma_B \in B$ has at most $\beta$ neighbors in $A_L$. Since $H$ is bipartite and every edge of $A_L$ goes to $B$, the total number of edges connected to $A_L$ can be upper bounded by $|B| \cdot \beta = m! \cdot c \log n = O(m! \log n)$. Since $A_L$ and $A_U$ partition $A$ and that every edge of $H$ has one endpoint in $A$, we get $|E(H)| = O(m! \log n) + O(m!) = O(m! \log n)$.
	
	Now if we pick a vertex $\pi_A$ from $A$, it has expected degree $\frac{|E(H)|}{|A|} = \frac{O(m! \log n)}{m!} = O(\log n)$. By Observation~\ref{obs:degreesarequeries}, this implies $\E_{\pi \in \Pi}[Q(\vec{e}, \pi)] = \E_{\pi_A \sim A}[\deg_H(\pi_A)] = O(\log n)$.
\end{proof}

\subsection{Proof of Lemma~\ref{lem:AL}}\label{sec:AL}

We prove the following statement which we show suffices to prove Lemma~\ref{lem:AL}.

\begin{claim}\label{cl:nobranch}
	Let $\pi, \pi' \in \Pi$ with $\pi \not= \pi'$, let $\vec{P}$ and $\vec{P}'$ respectively be $(v, \pi)$- and $(v', \pi')$-query-paths both ending at $\vec{e}$ for some $v, v' \in V$, and suppose that $\phi(\pi, \vec{P}) = \phi(\pi', \vec{P}') = \sigma$. Then $|\vec{P}| \not= |\vec{P}'|$.
\end{claim}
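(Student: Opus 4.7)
\textbf{Proof plan for Claim~\ref{cl:nobranch}.}

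The plan is to argue by contradiction: suppose $|\vec{P}| = |\vec{P}'| =: k$. I will show that this forces $\vec{P} = \vec{P}'$ as ordered sequences of directed edges, which in turn forces $\pi = \pi'$ — contradicting the hypothesis $\pi \neq \pi'$. The second implication is automatic: off of $\vec{P}$ we have $\pi = \sigma = \pi'$, and on $\vec{P}$ both $\pi$ and $\pi'$ are recovered from $\sigma$ via the same inverse rotation formula, so $\vec{P}=\vec{P}'$ gives $\pi=\pi'$.

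To establish $\vec{P} = \vec{P}'$, I will proceed by \emph{downward} induction on the position index, showing $\vec{e_j} = \vec{e_j'}$ for $j = k, k-1, \ldots, 1$. The base case $j=k$ is given. For the inductive step, assume $\vec{e_l} = \vec{e_l'}$ for all $l > j$; then the vertex $x_j$ (the source of the common edge $\vec{e_{j+1}} = \vec{e_{j+1}'}$) is fixed, and both $\vec{e_j}, \vec{e_j'}$ are directed edges ending at $x_j$. Suppose toward a contradiction that $\vec{e_j} \neq \vec{e_j'}$. Unfolding the two rotation identities gives $\sigma(e_j) = \pi(e_{j+1})$ and $\sigma(e_j') = \pi'(e_{j+1})$; since $\sigma$ is a bijection and $e_j \neq e_j'$, we may assume WLOG $\sigma(e_j) < \sigma(e_j')$, i.e.\ $\pi(e_{j+1}) < \pi'(e_{j+1})$.

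The crucial ingredient is now the \emph{validity} of the two snapshots, which goes beyond just having strictly decreasing ranks along the path. For $\vec{P}$ to arise as the content of the stack during some execution of $\vertexoracle{v,\pi}$, when the oracle is about to push $e_{j+1}$ onto $e_j$ it must already have processed — and received $\false$ from — every edge at $x_j$ whose $\pi$-rank is smaller than $\pi(e_{j+1})$. By Observation~\ref{obs:queries-correctness} this means: \emph{no} edge at $x_j$ with $\pi$-rank less than $\pi(e_{j+1})$ lies in $\GMM{G,\pi}$. The analogous statement holds for $\vec{P}'$ at $x_j$ with respect to $\pi'$. I plan to compare these two conditions, exploiting that $\pi$ and $\pi'$ coincide on all edges outside $\vec{P}\cup\vec{P}'$ (and, via the induction hypothesis, on the shared suffix), so that the relevant neighborhoods of $x_j$ can be compared directly. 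The edge $e_j$ itself (and, symmetrically, $e_j'$) will play a pivotal role: either it lies outside the other path and its rank in the other permutation equals $\sigma(e_j)$, or it lies on the other path at an earlier position and its rank is obtained by a controlled shift — in both cases one obtains a concrete edge at $x_j$ that contradicts one of the two validity conditions, given $\sigma(e_j) < \sigma(e_j')$.

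The main technical obstacle is the case analysis for how $\vec{P}$ and $\vec{P}'$ may overlap beyond the common suffix $(\vec{e_{j+1}},\ldots,\vec{e_k})$. The cleanest sub-case is when this is the \emph{only} overlap, since then $e_j \notin \vec{P}'$ gives $\pi'(e_j) = \sigma(e_j)$ and the validity constraint at level $j$ in $\pi'$ directly places $e_j$ in a set of edges that must avoid $\GMM{G,\pi'}$. Extending this to the general case requires a careful bookkeeping of the cyclic rank-shifts induced by $\phi$, together with the observation that $\pi(e_k)$ is the smallest $\pi$-rank along $\vec{P}$ (so the rotation moves its large image to $e$ and controls how edges at $x_j$ reshuffle). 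Once this step is carried out, the induction closes and $\vec{P} = \vec{P}'$, completing the proof.
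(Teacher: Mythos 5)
Your route is essentially the paper's --- take the largest index $j$ with $\vec{e_j}\neq\vec{e'_j}$ (your induction is a wrapper around exactly this), read off $\sigma(e_j)=\pi(e_{j+1})$ and $\sigma(e'_j)=\pi'(e_{j+1})$ from the rotation, assume WLOG $\sigma(e_j)<\sigma(e'_j)$, and aim to contradict the validity of one of the two query-paths --- but the two pieces you leave for later are where the actual content lives, and your sketch of them is not quite right. From $\pi'(e_j)=\sigma(e_j)<\pi'(e_{j+1})$ and validity of $\vec{P}'$ at level $j$ you correctly deduce $e_j\notin\GMM{G,\pi'}$. That is a \emph{fact}, not a contradiction: nothing you have argued says $e_j$ ought to be matched in $\GMM{G,\pi'}$. (In passing, your sub-case ``$e_j$ lies on $\vec{P}'$'' never occurs: $\vec{P}'$ is a simple path through $x_j$ whose only two edges at $x_j$ are $e'_j$ and $e'_{j+1}=e_{j+1}$, both distinct from $e_j$, so $\pi'(e_j)=\sigma(e_j)$ unconditionally.) The contradiction requires going one level deeper: $e_j\notin\GMM{G,\pi'}$ means some $f\in\GMM{G,\pi'}$ incident to $e_j$ with $\pi'(f)<\pi'(e_j)=\pi(e_{j+1})$ blocks it, and the goal is to transport this $f$ over to $\pi$ and show that \emph{$f$} violates the validity of $\vec{P}$ (at level $j$ or $j-1$, depending on which endpoint of $e_j$ it touches). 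Your plan never introduces $f$, and ``$e_j$ contradicts a validity condition'' on its own does not hold.

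For that transport you need more than ``$\pi=\pi'$ off $\vec{P}\cup\vec{P}'$,'' because $f$ may well lie on one of the paths, and whether $f\in\GMM{G,\pi}$ is determined by the entire initial segment of $\pi$ up to rank $\pi(f)$, not by a local comparison at $x_j$. What is actually required is the rank-level statement that $\pi$ and $\pi'$ have identical initial segments strictly below rank $\pi(e_{j+1})$ --- equivalently, whenever $\pi(g)\neq\pi'(g)$, both $\pi(g)\geq\pi(e_{j+1})$ and $\pi'(g)\geq\pi(e_{j+1})$. Proving this is a four-way case analysis on the location of $g$ (strict shared suffix $e_{j+2},\ldots,e_k$; the edge $e_{j+1}$ itself; the unshared prefix $e_1,\ldots,e_j$ of $\vec{P}$; the unshared prefix $e'_1,\ldots,e'_j$ of $\vec{P}'$), driven by monotonicity of ranks along each query-path, the rotation formula, and the WLOG. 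Once it is in place, $\pi'(f)<\pi(e_{j+1})$ gives $\pi(f)=\pi'(f)$ and --- since the greedy matching depends only on that initial segment --- $f\in\GMM{G,\pi}$, which then collides with the validity of $\vec{P}$. Until this rank-level lemma and the chase through $f$ are carried out, the argument does not close; they are precisely the ``bookkeeping'' you defer, and they are the proof.
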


Let us first see how Claim~\ref{cl:nobranch} suffices to prove Lemma~\ref{lem:AL}:

\begin{proof}[Proof of Lemma~\ref{lem:AL} via Claim~\ref{cl:nobranch}]
	Let us take an arbitrary vertex $\sigma_B$ in part $B$ of graph $H$. For any edge $\{\pi_A, \sigma_B\}$ in $H$, by construction of $H$ there must be a unique $(v, \pi)$-query-path $\vec{P}$ such that $\phi(\pi, \vec{P}) = \sigma$ where here $\pi, \sigma \in \Pi$ are the permutations corresponding to $\pi_A$ and $\sigma_B$ respectively. Let us define the label $\chi(\pi_A, \sigma_B)$ of edge $\{ \pi_A, \sigma_B \}$ to be the length $|\vec{P}|$ of this query-path $\vec{P}$. Claim~\ref{cl:nobranch} essentially implies that all the edges of $\sigma_B$ receive different labels. On the other hand, if $\pi_A$ is a neighbor of $\sigma_B$ and $\pi_A \in A_L$, then by definition (\ref{eq:defLU}) of set $L$, $\chi(\pi_A, \sigma_B) \leq \beta$. The uniqueness of the integer labels and the upper bound of $\beta$ imply together that $\sigma_B$ can have at most $\beta$ neighbors in $A_L$. The proof of Lemma~\ref{lem:AL} is thus complete.
\end{proof}

We now turn to prove Claim~\ref{cl:nobranch}.

Let $\pi, \pi', \vec{P}, \vec{P}', v, v', \sigma$ be as defined in Claim~\ref{cl:nobranch} and assume for contradiction that $|\vec{P}| = |\vec{P}'|$. First observe that if in addition to their lengths, the edges traversed by the two paths $\vec{P}, \vec{P}'$ are also the same  i.e., $\vec{P} = \vec{P}'$, then by definition of the mapping $\phi$, we have $\phi(\pi, \vec{P}) = \phi(\pi', \vec{P}')$ iff $\pi = \pi'$ which contradicts the assumption $\pi \not= \pi'$ of Claim~\ref{cl:nobranch}. So let us assume that $\vec{P} \not= \vec{P}'$ but $|\vec{P}| = |\vec{P}'|$. This implies that the two paths must ``branch'' at least once. 

\begin{figure}[t]
\begin{tikzpicture}
    \node[anchor=south west,inner sep=0] at (0,0) {\includegraphics{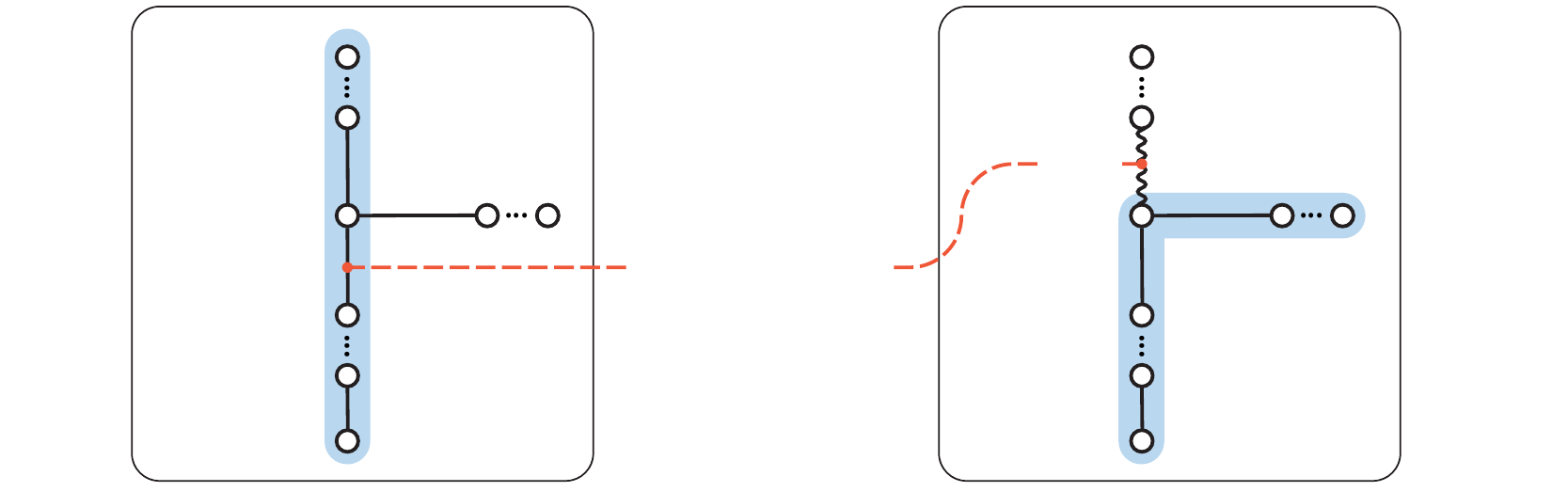}};
    \node at (2.9,1) {$e_1 = e'_1$};
    \node at (2.9,2.4) {$e_b = e'_b$};
	\node at (11.5,1) {$e_1 = e'_1$};
    \node at (11.5,2.4) {$e_b = e'_b$};
    \node at (3.1,3.5) {$e_{b+1}$};
    \node at (11.7,3.5) {$e_{b+1}$};
	\node at (13.15,3.3) {$e'_{b+1}$};
	\node at (4.55,3.3) {$e'_{b+1}$};
	\node at (4.3, 0.5) {$P$};
	\node at (12.95, 0.5) {$P'$};
	\node at (6, 4.8) {$\pi$};
	\node at (14.75, 4.8) {$\pi'$};
	\node at (4.2, 4.85) {$v$};
	\node at (12.7, 4.85) {$v$};
	\node at (6, 3.35) {$v'$};
	\node at (14.55, 3.35) {$v'$};
	\node[text width=3cm, align=center] at (8.2, 2.2) {$\pi(e_b) = \pi'(e_{b+1})$\\ (Claim~\ref{cl:theoun12389})};
\end{tikzpicture}
\caption{On the right hand side, we have permutation $\pi'$ and query-path $P'$ is highlighted. On the left hand side, we have permutation $\pi$ and path $P$ is highlighted. Our arguments essentially show that $e_{b+1}$ must belong to matching $\GMM{G, \pi'}$. We also show that $\pi'(e_{b+1}) < \pi'(e'_b)$. Therefore, $\edgeoracle{e'_{b+1}, \cdot, \pi'}$ should terminate (returning $\false$) before calling $\edgeoracle{e'_b, \cdot, \pi'}$. This means $P'$ cannot be a valid query-path in $\pi'$ and this is our desired contradiction which proves Claim~\ref{cl:nobranch}.}
\label{fig:piandpi'}
\end{figure}

It would be convenient to define $\cev{P} = (\cev{e}_1, \cev{e}_2, \ldots, \cev{e}_k)$ and $\cev{P}' = (\cev{e}'_1, \cev{e}'_2, \ldots, \cev{e}'_k)$ to be respectively the same as $\vec{P} = (\vec{e_k}, \ldots, \vec{e_1})$ and $\vec{P}' = (\vec{e_k}', \ldots, \vec{e_1}')$ except that their edges are traversed in the reverse direction. Since both $\vec{P}$ and $\vec{P}'$ end at $\vec{e}$ (by definition of query-paths), both $\cev{P}$ and $\cev{P}'$ must start from $\cev{e} = \cev{e_1} = \cev{e}_1'$. Let $(b + 1)$ be the smallest index where $\cev{e}_{b+1} \not= \cev{e}'_{b+1}$. That is, we have $e_1 = e'_1, \ldots, e_b = e'_b$ and $e_{b+1} \not= e'_{b+1}$ (see Figure~\ref{fig:piandpi'}). Observe that $b + 1 \geq 2$ since $\cev{e}_1 = \cev{e}'_1 = \cev{e}$.

Let us make the following assumption that comes without loss of generality  as we have not distinguished $\pi$ and $\pi'$ in any other way up to this point of the analysis.

\begin{assumption}\label{ass:PP'}
	$\pi(e_b) \leq \pi'(e'_b)$.
\end{assumption}

\begin{observation}\label{obs:querypathdecreasing}
	It holds that $\pi(e_1) < \pi(e_2) < \ldots < \pi(e_k)$ and $\pi'(e'_1) < \pi'(e'_2) < \ldots < \pi'(e'_k)$.
\end{observation}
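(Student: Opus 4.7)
The plan is to read the statement off directly from the stack discipline of the edge oracle (Algorithm~\ref{alg:edgeoracle}). The crucial point is that $\edgeoracle{e, u, \pi}$ only ever makes a recursive call to $\edgeoracle{e', \cdot, \pi}$ for an edge $e'$ incident to $u$ with $\pi(e') < \pi(e)$; no call is ever issued to an edge of equal or higher rank. Consequently, along the stack $S_{v,\pi}$ at any moment during the execution of $\vertexoracle{v,\pi}$, the $\pi$-ranks of the stacked edges form a strictly decreasing sequence as one walks from the first edge pushed (the one touching $v$) down to the most recently pushed edge (which, for a query-path ending at $\vec e$, is precisely $e$).

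Under the indexing convention used at this point of the proof -- where $e_1 = e$ is the last edge pushed and $e_k$ is the first edge pushed, incident to $v$ -- this monotonicity rewrites to the asserted inequality $\pi(e_1) < \pi(e_2) < \cdots < \pi(e_k)$. I would formalize this by a one-line induction on $i$: for each $2 \le i \le k$, the call $\edgeoracle{e_{i-1}, \cdot, \pi}$ is issued from inside $\edgeoracle{e_i, \cdot, \pi}$, so the description of Algorithm~\ref{alg:edgeoracle} forces $\pi(e_{i-1}) < \pi(e_i)$; chaining these across $i=2,\ldots,k$ gives the full chain. The second half, $\pi'(e'_1) < \pi'(e'_2) < \cdots < \pi'(e'_k)$, is obtained by replaying the same argument with $(\pi, \vec P)$ replaced by $(\pi', \vec P')$.

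I do not anticipate any real obstacle: the observation is essentially a structural restatement of how recursive calls in Algorithm~\ref{alg:edgeoracle} are generated. The only subtlety worth flagging is to keep the reindexing convention of this section firmly in mind -- namely, that $e_1$ refers to the common endpoint $e$ at the far end of the query-path (as used in the branching discussion around Figure~\ref{fig:piandpi'}), not the edge incident to $v$. Reading the indices in the other direction would flip the inequality and create only apparent trouble.
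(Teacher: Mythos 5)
Your proposal is correct and matches the paper's own argument: both hinge on the fact that Algorithm~\ref{alg:edgeoracle} only recurses on edges of strictly lower $\pi$-rank, so ranks along the stack are strictly monotone, and you correctly account for the reversed indexing introduced just before the branching analysis, under which $e_1 = e$ is the most recently pushed edge. Nothing is missing.
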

\begin{proof}
	This holds because $\vec{P}$ is a query-path in $\pi$ and $\vec{P}'$ is a query-path in $\pi'$. Recall that query-paths correspond to recursions in the stack and Algorithm~\ref{alg:edgeoracle} only recurses on edges with lower rank than the current edge. Hence the edges along a query path must be decreasing in rank.
\end{proof}

\begin{claim}\label{cl:theoun12389}
	It holds that $\pi(e_b) = \pi'(e_{b+1})$. 
\end{claim}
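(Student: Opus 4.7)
The plan is to compute $\sigma(e_{b+1})$ in two independent ways using $\sigma = \phi(\pi, \vec{P}) = \phi(\pi', \vec{P}')$ and then equate the two expressions.

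First, unpacking the definition of $\phi(\pi, \vec{P})$ in the $\cev{P}$ indexing used in the statement (so $e_1 = e$ and subsequent $e_i$'s lie further from $e$ along $\vec{P}$) amounts to shifting $\pi$-ranks along $\vec{P}$ by one step: $\sigma(e_i) = \pi(e_{i-1})$ for every $i \geq 2$. In particular, $\sigma(e_{b+1}) = \pi(e_b)$. The bulk of the argument is devoted to deriving the matching expression $\sigma(e_{b+1}) = \pi'(e_{b+1})$.

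The latter follows once I show that $e_{b+1} \notin \vec{P}'$, because then $\sigma$ and $\pi'$ agree on $e_{b+1}$ (the rotation $\phi(\pi', \vec{P}')$ only affects edges of $\vec{P}'$). I would establish $e_{b+1} \notin \vec{P}'$ via a case split on the hypothetical position $j$ of $e_{b+1}$ within $\cev{P}'$. Positions $j \in \{1, \dots, b\}$ are ruled out because those positions of $\cev{P}'$ hold the shared edges $e_1, \dots, e_b$, so having $e_{b+1}$ coincide with any of them would repeat an edge on $\vec{P}$, contradicting the strict monotonicity of $\pi$-ranks along $\vec{P}$ (Observation~\ref{obs:querypathdecreasing}). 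The position $j = b+1$ is excluded by the defining property of the branching index, $e_{b+1} \neq e'_{b+1}$.

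The remaining and principal case is $e_{b+1} = e'_j$ for some $j \geq b+2$. Applying $\phi(\pi', \vec{P}')$ then gives $\sigma(e_{b+1}) = \pi'(e'_{j-1})$, which combined with $\sigma(e_{b+1}) = \pi(e_b)$ yields $\pi(e_b) = \pi'(e'_{j-1})$. Since the $\pi'$-ranks along $\cev{P}'$ are strictly monotone (again Observation~\ref{obs:querypathdecreasing}) and $j - 1 \geq b + 1$, we have $\pi'(e'_{j-1}) > \pi'(e'_b)$; combined with $e'_b = e_b$, this reads $\pi(e_b) > \pi'(e_b)$, directly contradicting Assumption~\ref{ass:PP'}. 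This last case is really the only obstacle in the proof, and Assumption~\ref{ass:PP'} is precisely the ingredient needed to close it.
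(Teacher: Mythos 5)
Your proof is correct and follows the same overall strategy as the paper: compute $\sigma(e_{b+1})$ once via $\phi(\pi,\vec{P})$ to get $\pi(e_b)$, and once via $\phi(\pi',\vec{P}')$ to get $\pi'(e_{b+1})$, the latter requiring that $e_{b+1}\notin P'$. The one place you diverge is in how you establish $e_{b+1}\notin P'$. The paper dispatches this in a single line by appealing to the topological fact that $\cev{P}$ and $\cev{P}'$ are paths sharing a prefix and branching at $e_b=e'_b$: since $e_{b+1}$ touches the branch vertex and is neither $e'_b$ nor $e'_{b+1}$, simplicity of $P'$ rules it out. You instead run an explicit case split on the hypothetical position $j$ of $e_{b+1}$ in $\cev{P}'$, and in the only nontrivial case $j\geq b+2$ you derive $\pi(e_b)=\pi'(e'_{j-1})>\pi'(e'_b)\geq\pi(e_b)$, a contradiction via Observation~\ref{obs:querypathdecreasing} and Assumption~\ref{ass:PP'}. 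This buys you something: your argument does not presuppose that query-paths are vertex-simple, only that ranks are strictly monotone along them, so it is self-contained where the paper's one-liner is implicitly leaning on simplicity. The tradeoff is that you pull Assumption~\ref{ass:PP'} into the proof of this claim (the paper first invokes it only in the subsequent claims), but that is harmless since the assumption is already fixed at this point.
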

\begin{proof}
	First, since $b+1 \geq 2$, we have $e_{b+1} \not= e_1$. Therefore by definition of $\phi(\pi, \vec{P})$, which rotates the ranks of $\pi$ in the reverse direction of $\vec{P}$ (i.e., in direction of $\cev{P}$), $\phi(\pi, \vec{P})(e_{b+1}) = \pi(e_{b})$. On the other hand, note that $e_{b+1} \not\in P'$ since $\cev{P}$ and $\cev{P'}$ branch after edge $e_b = e'_b$ and they are paths --- this implies that $\phi(\pi', \vec{P}')(e_{b+1}) = \pi'(e_{b+1})$. Combined with our assumption that $\phi(\pi, \vec{P}) = \phi(\pi', \vec{P}') = \sigma$, this implies the desired equality $\pi'(e_{b+1}) = \pi(e_{b})$.
\end{proof}

\begin{claim}\label{cl:dgcr128397}
	$\pi'(e_{b+1}) < \pi'(e'_b)$.
\end{claim}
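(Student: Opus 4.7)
The plan is to derive Claim~\ref{cl:dgcr128397} by simply chaining Assumption~\ref{ass:PP'} with Claim~\ref{cl:theoun12389}, and then upgrading the resulting weak inequality to a strict one by using that $\pi'$ is a permutation (hence injective).

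First I would note that Assumption~\ref{ass:PP'} gives $\pi(e_b) \leq \pi'(e'_b)$, and Claim~\ref{cl:theoun12389} gives the equality $\pi'(e_{b+1}) = \pi(e_b)$. Substituting, one immediately obtains the weak bound
\[
\pi'(e_{b+1}) \;=\; \pi(e_b) \;\leq\; \pi'(e'_b).
\]

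Next, I would argue that $e_{b+1} \neq e'_b$. Recall that $e_{b+1}$ is the edge appearing immediately after $e_b$ in the path $\cev{P} = (\cev{e}_1,\ldots,\cev{e}_k)$, so $e_{b+1}$ and $e_b$ are two distinct consecutive edges of a path. Since $e'_b = e_b$ by the definition of the branching index $b$, we conclude $e_{b+1} \neq e'_b$. Because $\pi'$ is a bijection on the edge set, distinct edges must receive distinct ranks, so $\pi'(e_{b+1}) \neq \pi'(e'_b)$. Combined with the weak inequality above, this forces $\pi'(e_{b+1}) < \pi'(e'_b)$, as claimed.

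I do not anticipate any real obstacle here: both hypotheses are already in hand, and the only subtlety is the injectivity step that converts $\leq$ into $<$. The payoff of this short claim, as visible from Figure~\ref{fig:piandpi'} and the surrounding discussion, is that it sets up the eventual contradiction by showing that in the permutation $\pi'$ the edge $e_{b+1}$ is ranked strictly below $e'_b$; together with the forthcoming argument that $e_{b+1} \in \GMM{G,\pi'}$, this will prevent $\edgeoracle{e'_{b+1},\cdot,\pi'}$ from ever recursing into $e'_b$, contradicting the validity of the query-path $\vec{P}'$.
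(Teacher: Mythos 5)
Your proof is correct and follows exactly the same route as the paper: chain Claim~\ref{cl:theoun12389} with Assumption~\ref{ass:PP'} to get the weak inequality, then invoke injectivity of $\pi'$ on distinct edges to make it strict. The only difference is that you spell out the distinctness $e_{b+1} \neq e'_b$ (via $e'_b = e_b$ and consecutiveness along the path) where the paper states it without elaboration.
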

\begin{proof}
	From Claim~\ref{cl:theoun12389} we know $\pi'(e_{b+1}) = \pi(e_b)$. Combined with Assumption~\ref{ass:PP'} that $\pi(e_b) \leq \pi'(e'_b)$, this implies $\pi'(e_{b+1}) \leq \pi'(e'_b)$. The equality can be ruled out since $\pi'$ is a permutation and distinct edges $e_{b+1}$ and $e'_b$ cannot be assigned the same rank. As such, $\pi'(e_{b+1}) < \pi'(e'_b)$.
\end{proof}

\begin{claim}\label{cl:pipi'equalonsmallranks}
	If $\pi(f) \not= \pi'(f)$ for some edge $f$, then $\pi(f) \geq \pi(e_b)$ and $\pi'(f) \geq \pi(e_b)$. In other words, the two permutations $\pi$ and $\pi'$ are identical on all ranks smaller than $\pi(e_b)$.
\end{claim}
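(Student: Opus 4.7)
The plan is to exploit the structural fact that $\phi(\pi,\vec{P}) = \phi(\pi',\vec{P}') = \sigma$ forces both $\pi$ and $\pi'$ to agree with $\sigma$ on every edge outside their respective paths. Concretely, $\pi(f) = \sigma(f)$ for $f \notin P$ and $\pi'(f) = \sigma(f)$ for $f \notin P'$, so any $f$ with $\pi(f) \neq \pi'(f)$ must lie in $P \cup P'$. The rest of the proof is a case analysis on where $f$ sits relative to the two paths.

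First consider $f \in P \setminus P'$. Write $f = e_i$; since $e_j = e'_j$ for every $j \leq b$, we must have $i \geq b+1$. Observation~\ref{obs:querypathdecreasing} gives $\pi(f) = \pi(e_i) > \pi(e_b)$. For the bound on $\pi'(f)$, use $f \notin P'$ to write $\pi'(f) = \sigma(f) = \sigma(e_i)$, and the definition of $\phi$ (rotation of the ranks along $\vec{P}$ in the reverse direction) to identify $\sigma(e_i) = \pi(e_{i-1})$; since $i-1 \geq b$, Observation~\ref{obs:querypathdecreasing} then yields $\pi(e_{i-1}) \geq \pi(e_b)$. The case $f \in P' \setminus P$ is completely symmetric, producing $\pi'(f), \pi(f) \geq \pi'(e'_b)$, which is then converted into a bound in terms of $\pi(e_b)$ via Assumption~\ref{ass:PP'}.

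Finally, for $f \in P \cap P'$, write $f = e_i = e'_j$. Because $P$ and $P'$ have distinct edges and share exactly the prefix $e_1 = e'_1, \ldots, e_b = e'_b$, the only possibilities are $i = j \leq b$ or $i, j \geq b+1$. In the latter case Observation~\ref{obs:querypathdecreasing} together with Assumption~\ref{ass:PP'} give both required bounds directly: $\pi(f) = \pi(e_i) > \pi(e_b)$ and $\pi'(f) = \pi'(e'_j) > \pi'(e'_b) \geq \pi(e_b)$. In the case $i = j \leq b-1$, I will actually rule out the possibility $\pi(f) \neq \pi'(f)$: the rotation identities give $\sigma(e_{i+1}) = \pi(e_i)$ and $\sigma(e'_{i+1}) = \pi'(e'_i)$, and since $i+1 \leq b$ we have $e_{i+1} = e'_{i+1}$, which forces $\pi(e_i) = \pi'(e'_i) = \pi'(e_i)$. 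This leaves only $i = j = b$, where $\pi(f) = \pi(e_b)$ trivially meets its bound and $\pi'(f) = \pi'(e'_b) \geq \pi(e_b)$ is exactly Assumption~\ref{ass:PP'}.

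The main obstacle is carefully tracking the effect of $\sigma$ on the shared prefix of the two paths in the $P \cap P'$ case — once one notices that the consistency condition $\sigma(e_{j}) = \sigma(e'_{j})$ along this prefix ties $\pi$ and $\pi'$ together there, all remaining steps reduce to mechanical application of Observation~\ref{obs:querypathdecreasing}, the definition of $\phi$, and Assumption~\ref{ass:PP'}.
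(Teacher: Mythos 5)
Your proof is correct, and it reaches the same conclusion via a genuinely different case decomposition than the paper's. The paper splits by the position of $f$ in a single path: (i) $f \in \{e_1,\dots,e_{b-1}\}$, (ii) $f = e_b$, (iii) $f = e_i$ with $i \geq b+1$, (iv) $f = e'_i$ with $i \geq b+1$; cases (iii) and (iv) do not assume anything about whether $f$ lies on the other path, and each is closed using the one-sided monotonicity fact $\phi(\pi', \vec{P'})(e) \leq \pi'(e)$ for all $e \neq e_1$ (and its mirror), which transfers the lower bound from $\sigma(f)$ to $\pi'(f)$. You instead decompose by set membership ($P\setminus P'$, $P'\setminus P$, $P\cap P'$), which lets you replace that inequality with an \emph{equality} $\sigma(f)=\pi'(f)$ when $f\notin P'$, and a direct application of Observation~\ref{obs:querypathdecreasing} plus Assumption~\ref{ass:PP'} when $f\in P'$. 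The price is that you have to argue the intersection $P\cap P'$ only occurs at indices $i=j\leq b$ or $i,j\geq b+1$, which you do correctly using that each path consists of distinct edges and that the two paths agree on the first $b$ edges. (A minor wording caveat: the paths need not share \emph{exactly} the prefix --- they can re-intersect later --- but your actual deduction only uses distinctness within each path and agreement on the first $b$ edges, so this does not affect correctness.) Your handling of the subcase $i=j\leq b-1$, showing $\pi(f)=\pi'(f)$ so the hypothesis is vacuous, matches the paper's Case~(i) verbatim.
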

\begin{proof}
	If $f \not \in P \cup P'$, then clearly $\pi'(f) = \pi(f) = \sigma(f)$ since $\phi(\pi, \vec{P}) = \phi(\pi', \vec{P}') = \sigma$ and $\phi$ only changes the ranks of edges along the query-path that it is given. So we can assume $f \in P \cup P'$; there are therefore four possible scenarios:
	
	\textbf{Case (1)} $f \in \{e_1, \ldots, e_{b-1}\}$: For any $i \in \{2, \ldots, b\}$, to have $\phi(\pi, \vec{P})(e_i) = \phi(\pi', \vec{P}')(e_i)$, it is necessary that $\pi(e_{i-1}) = \pi'(e_{i-1})$. Thus, in this case $\pi(f) = \pi'(f)$.
	
	\textbf{Case (2)} $f = e_b$: We have $\pi(f) = \pi(e_b)$ since $f = e_b$ and we have $\pi'(f) = \pi'(e_b) \geq \pi(e_b)$ by Assumption~\ref{ass:PP'}. So the claim holds in this case.	
	
	\textbf{Case (3)} $f = e_i$ for some $e_i \in \{e_{b+1}, \ldots, e_k\}$:  Observation~\ref{obs:querypathdecreasing} already implies $\pi(f) > \pi(e_b)$ in this case; it remains to prove $\pi'(f) \geq \pi(e_b)$. First, $\phi(\pi, \vec{P})(f) = \pi(e_{i-1}) \geq \pi(e_b)$ by construction of $\phi$ (as $i \not= 1$) and Observation~\ref{obs:querypathdecreasing} (as $i-1 \geq b$). Therefore, from $\phi(\pi, \vec{P}) = \phi(\pi', \vec{P'})$ (as assumed in Claim~\ref{cl:nobranch}) we get $\phi(\pi', \vec{P'})(f) \geq \pi(e_b)$. Given that $\phi(\pi', \vec{P'})(e) \leq \pi'(e)$ for every edge $e \not= e_1$ by construction of $\phi$ and monotonicity of $\pi'$ along $P'$ (Observation~\ref{obs:querypathdecreasing}) this means $\pi'(f) \geq \pi(e_b)$.
	
	\textbf{Case (4)} $f = e'_i$ for some $e'_i \in \{e'_{b+1}, \ldots, e'_k\}$: Oservation~\ref{obs:querypathdecreasing} and Assumption~\ref{ass:PP'} together imply $\pi'(f) > \pi'(e'_b) \geq \pi(e_b)$ in this case; it remains to prove $\pi(f) \geq \pi(e_b)$. First, $\phi(\pi', \vec{P'})(f) = \pi'(e'_{i-1}) \geq \pi'(e'_b)$ by construction of $\phi$ (as $i \not= 1$) and Observation~\ref{obs:querypathdecreasing} (as $i-1 \geq b$). Thus, from $\phi(\pi, \vec{P}) = \phi(\pi', \vec{P'})$ (as assumed in Claim~\ref{cl:nobranch}) we get $\phi(\pi, \vec{P})(f) \geq \pi'(e'_b)$. Given that $\phi(\pi, \vec{P})(e) \leq \pi(e)$ for every edge $e \not= e_1$ by construction of $\phi$ and monotonicity of $\pi$ along $P$ (Observation~\ref{obs:querypathdecreasing}) this means $\pi(f) \geq \pi'(e'_b) \geq \pi(e_b)$ where the last inequality follows from Assumption~\ref{ass:PP'}.
\end{proof}

\begin{claim}\label{cl:cclbu129837}
	$e_{b+1} \in \GMM{G, \pi'}$.
\end{claim}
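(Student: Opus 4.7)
The plan is to reduce the claim $e_{b+1}\in\GMM{G,\pi'}$ to a non-membership statement about the already-analyzed matching $\GMM{G,\pi}$, and then read that statement off the recursive structure of the query-path $\vec{P}$ directly.

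For the reduction, recall that $\pi$ and $\pi'$ agree on all edges of rank strictly less than $\pi(e_b)$ by Claim~\ref{cl:pipi'equalonsmallranks}. Since greedy is deterministic, this means its first $\pi(e_b)-1$ decisions are identical under the two permutations, and hence $\GMM{G,\pi}$ and $\GMM{G,\pi'}$ agree on every edge of rank $<\pi(e_b)$. Combined with $\pi'(e_{b+1})=\pi(e_b)$ from Claim~\ref{cl:theoun12389}, showing $e_{b+1}\in\GMM{G,\pi'}$ is equivalent to showing that every edge $g$ incident to $e_{b+1}$ with $\pi(g)<\pi(e_b)$ satisfies $g\notin\GMM{G,\pi}$. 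Let $y$ denote the endpoint of $e_{b+1}$ shared with $e_b$ along $\vec{P}$, and let $x$ be the other endpoint of $e_{b+1}$, so every candidate $g$ is incident to $x$ or to $y$.

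For candidates $g$ incident to $y$: the query-path $\vec{P}$ tells us that the call $\edgeoracle{e_{b+1}, y, \pi}$ recurses into $\edgeoracle{e_b, \cdot, \pi}$. By the description of the edge oracle this iteration visits the candidate edges at $y$ in increasing order of $\pi$-rank, and the fact that it reaches $e_b$ without early termination means every such $g=(y,w)$ with $\pi(g)<\pi(e_b)$ returned $\false$. Observation~\ref{obs:queries-correctness} then gives $g\notin\GMM{G,\pi}$.

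For candidates $g$ incident to $x$, I step one frame up in the recursion. If $b+1<k$, then $\edgeoracle{e_{b+1}, y, \pi}$ is itself issued from $\edgeoracle{e_{b+2}, x, \pi}$, which iterates over edges incident to $x$ of rank $<\pi(e_{b+2})$ in increasing order. That this iteration reaches $e_{b+1}$ without terminating early means every $g=(x,w)$ with $\pi(g)<\pi(e_{b+1})$ returned $\false$, and Observation~\ref{obs:querypathdecreasing} gives $\pi(e_b)<\pi(e_{b+1})$, so this subsumes every $g$ with $\pi(g)<\pi(e_b)$. In the boundary case $b+1=k$, the parent of $\edgeoracle{e_k, y, \pi}$ is instead $\vertexoracle{v,\pi}$ with $v=x$, and the identical argument applied to the vertex oracle's iteration over edges incident to $v$ rules out the same set of $g$. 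Combining the two endpoint cases proves $e_{b+1}\in\GMM{G,\pi'}$. The main subtlety is the $x$-endpoint case: the edge oracle $\edgeoracle{e_{b+1}, y, \pi}$ only inspects edges incident to $y$, so one must look up one frame in the recursion to rule out edges on the $x$-side, and verify that the boundary case $b+1=k$ (whose parent is the vertex oracle rather than another edge oracle) is governed by the same monotonicity argument.
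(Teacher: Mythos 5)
Your proof is correct, and the underlying strategy matches the paper's: exploit Claim~\ref{cl:pipi'equalonsmallranks} to conclude that $\GMM{G,\pi}$ and $\GMM{G,\pi'}$ coincide on all edges of rank below $\pi(e_b)$, combine this with $\pi'(e_{b+1})=\pi(e_b)$ from Claim~\ref{cl:theoun12389} to reduce the statement to ``no edge incident to $e_{b+1}$ with $\pi$-rank below $\pi(e_b)$ lies in $\GMM{G,\pi}$'', and then read that off the validity of the query path $\vec{P}$. The paper phrases this as a contradiction (a hypothetical blocker $f\in\GMM{G,\pi'}$ with $\pi'(f)<\pi(e_b)$ would also lie in $\GMM{G,\pi}$ and would abort the recursion before $e_b$ is reached), whereas you argue directly; that difference is cosmetic.

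Where your writeup genuinely goes beyond the paper's text is the $x$-endpoint case. The paper simply asserts that ``$f$ and $e_b$ share the same endpoint with $e_{b+1}$'' and argues only for a blocker incident to $y$, leaving implicit why a low-rank matched edge at $x$ cannot exist. You close this by moving one frame up the recursion: the parent call $\edgeoracle{e_{b+2},x,\pi}$ (or $\vertexoracle{v,\pi}$ with $v=x$ when $b+1=k$) scans edges at $x$ in increasing rank order and reached $e_{b+1}$, so, by Observation~\ref{obs:queries-correctness}, every edge at $x$ of rank below $\pi(e_{b+1})$, hence below $\pi(e_b)$ by Observation~\ref{obs:querypathdecreasing}, is already known to lie outside $\GMM{G,\pi}$. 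This is precisely the fact the paper's assertion tacitly relies on, so your proof is not a different argument so much as a more complete rendering of the same one, and it is careful about the boundary case $b+1=k$.
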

\begin{proof}
	Assume for the sake of contradiction that $e_{b+1} \not\in \GMM{G, \pi'}$.  This means that $e_{b+1}$ must be incident to an edge $f \in \GMM{G, \pi'}$ with $\pi'(f) < \pi'(e_{b+1}) = \pi(e_b)$ (where recall the last equality follows from Claim~\ref{cl:theoun12389}). Since $\pi$ and $\pi'$ are identical for ranks smaller than $\pi(e_b)$ by  Claim~\ref{cl:pipi'equalonsmallranks}, $f \in \GMM{G, \pi'}$ implies $f \in \GMM{G, \pi}$ as well. Using this and combined with $\pi(f) = \pi'(f) < \pi(e_b)$, we show that $\vec{P}$ is not a valid query-path in permutation $\pi$ which is a contradiction. To see this, note that while running the edge oracle $\edgeoracle{e_{b+1}, \cdot, \pi}$, we should call $\edgeoracle{f, \cdot, \pi}$ before $\edgeoracle{e_b, \cdot, \pi}$ because $\pi(f) < \pi(e_b)$ and that $f$ and $e_b$ share the same endpoint with $e_{b+1}$; moreover, since $f \in \GMM{G, \pi}$, $\edgeoracle{e_{b+1}, \cdot, \pi}$ will immediately terminate and return $\false$ without calling $\edgeoracle{e_b, \cdot, \pi}$.
\end{proof}

We are now ready to finalize the proof of Claim~\ref{cl:nobranch} by showing that the assumptions above lead to a contradiction. The contradiction that we prove is that $\vec{P}'$ cannot be a valid query-path in permutation $\pi'$. To see this, recall that during the execution of $\edgeoracle{e'_{b+1}, \cdot, \pi'}$, we have to call $\edgeoracle{e_{b+1}, \cdot, \pi'}$ before $\edgeoracle{e'_b, \cdot, \pi'}$ since by Claim~\ref{cl:dgcr128397} $\pi'(e_{b+1}) < \pi'(e'_b)$. On the other hand, by Claim~\ref{cl:cclbu129837} the answer to $\edgeoracle{e_{b+1}, \pi'}$ is $\true$ and so $\edgeoracle{e'_{b+1}, \pi'}$ should terminate immediately and return $\false$ without calling $\edgeoracle{e'_b, \cdot, \pi'}$. Therefore, $\vec{P}'$ is not a valid query-path in $\pi'$ contradicting our assumption that $|\vec{P}| = |\vec{P}'|$ is possible. The proof of Claim~\ref{cl:nobranch} is thus complete. As discussed at the start of Section~\ref{sec:AL}, this also completes the proof of Lemma~\ref{lem:AL}.

\subsection{Proof of Lemma~\ref{lem:AU}}

To prove Lemma~\ref{lem:AU} we first recall a parallel implementation of the randomized greedy maximal matching algorithm and the bounds known for its {\em round-complexity}. For more details see \cite{BlellochSPAA, FischerTALG}.

\paragraph{Parallel Randomized Greedy MM:} Given a graph $G$ and a permutation $\pi$ over its edge-set, we repeat the following until $G$ becomes empty: in parallel add any ``local minimum'' edge $e$ to the matching and remove its endpoints from the graph. An edge is local minimum if its rank is smaller than that of all of its neighboring edges that remain in the graph.

It can be easily confirmed that the output of the algorithm above is exactly $\GMM{G, \pi}$. We use $\rho(G, \pi)$ to denote the round-complexity of the algorithm, i.e., the number of iterations that it takes until the graph becomes empty. 

It was shown in \cite{BlellochSPAA} that $\rho(G, \pi)$ can be bounded for a random permutation by $O(\log^2 n)$ with high probability. This was improved to $O(\log n)$ in \cite{FischerTALG}:

\begin{lemma}[{\cite{FischerTALG}}]\label{lem:parallel}
	Let $\pi$ be a permutation chosen uniformly at random over the edge-set of an $n$-vertex graph $G$. With probability at least $1-n^{-2}$, $\rho(G, \pi) = O(\log n)$.
\end{lemma}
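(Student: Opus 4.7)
The plan is to prove the lemma by identifying a combinatorial \emph{witness} for having round complexity at least $r$, and then bounding the probability of such a witness existing by a union bound that is small enough at $r = \Theta(\log n)$. This is the strategy pioneered by Blelloch, Fineman, and Shun and sharpened by Fischer and Noever.

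First, I would establish the witness characterization: if $\rho(G, \pi) \geq r$, then there exists a sequence of edges $f_1, f_2, \ldots, f_r$ such that (i) consecutive edges share a vertex, (ii) the ranks are strictly decreasing, $\pi(f_1) > \pi(f_2) > \cdots > \pi(f_r)$, and (iii) each $f_i$ is still present in the graph at round $r+1-i$ of parallel RGMM. Indeed, start from any edge $f_1$ that is processed in round $r$ (which exists since $\rho(G,\pi) \geq r$). Since $f_1$ was not a local minimum in round $r-1$, some surviving neighbor $f_2$ had smaller rank; because $f_2$ itself was not processed in round $r-2$, the same argument applies to produce $f_3$, and so on. Note that the $f_i$ are automatically distinct because the ranks are strictly decreasing, so the witness forms an edge-simple walk in $G$.

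Second, I would bound the probability of a witness existing via a union bound. For a fixed walk of length $r$, the probability over a uniform $\pi$ that the ranks along the walk are strictly decreasing is $1/r!$. A naive count of walks gives $m \cdot \Delta^{r-1}$ potential witnesses, which together with the $1/r!$ factor yields a bound of the form $(e \Delta / r)^r$ -- adequate only if $r \gg \Delta$, and therefore not strong enough when $\Delta$ is close to $n$. The Fischer--Noever refinement is to exploit the \emph{survival} requirement (iii) to drop the $\Delta^{r-1}$ factor. The idea is to canonicalize the witness: at each step, one chooses $f_{i+1}$ to be the minimum-rank \emph{surviving} neighbor of $f_i$ on the shared endpoint at round $r-i$, so that the walk is not freely chosen among all neighbors but is essentially forced once the starting edge and the permutation are fixed.

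Third, I would combine these pieces: setting $r = C \log n$ for a sufficiently large constant $C$, the resulting bound on $\Pr[\rho(G,\pi) \geq r]$ becomes at most $n^{-2}$. The main obstacle is making the canonicalization in the second step rigorous: the survival condition entangles the existence of the witness with the ranks of many edges outside the walk itself, so one cannot cleanly factor the probability. Fischer and Noever overcome this with an inductive argument that processes edges in order of increasing rank and tracks, at each step, whether the growing chain is extended, pruned, or killed; the delicate accounting that shows the expected length of the longest such chain is $O(\log n)$ is the technically hardest part of the proof, and I would follow their deferred-decision-style analysis to carry it out.
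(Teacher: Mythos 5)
The paper does not prove Lemma~\ref{lem:parallel} at all: it is imported verbatim from Fischer and Noever \cite{FischerTALG} (who improved the earlier $O(\log^2 n)$ bound of Blelloch, Fineman, and Shun \cite{BlellochSPAA}), so there is no internal proof to compare your attempt against. Your outline does faithfully reproduce the high-level strategy of the cited work: the witness characterization is sound (if $\rho(G,\pi)\geq r$ there is a chain $f_1,\ldots,f_r$ of edges, consecutive ones sharing an endpoint, with strictly decreasing ranks and the stated survival property, hence an edge-simple walk), the $1/r!$ probability for a fixed walk is correct, and you correctly diagnose that the naive union bound over roughly $m\Delta^{r-1}$ walks only yields a round bound of order $\Delta+\log n$ rather than $O(\log n)$.

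As a standalone proof, however, the proposal has a genuine gap exactly where you acknowledge it: everything beyond the easy $O(\Delta+\log n)$ bound is the elimination of the $\Delta^{r-1}$ factor, and your canonicalization step is not actually carried out. Once you restrict attention to the canonical witness (the minimum-rank surviving neighbor at each step), the event that this particular walk is the witness is no longer a pure ordering event on the $r$ edges of the walk --- the survival condition depends on the ranks of edges off the walk --- so the clean $1/r!$ factorization breaks, and the careful deferred-decisions accounting of Fischer and Noever is precisely the missing argument. Since the paper itself treats the result as a black box, the appropriate move here is simply to cite \cite{FischerTALG} (greedy maximal matching under a random edge order is greedy MIS on the line graph of $G$, which has at most $n^2$ vertices, so their $O(\log)$ depth bound is still $O(\log n)$ with the claimed failure probability); your write-up should therefore be presented as a citation plus proof sketch, not as a self-contained proof.
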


We prove the following:

\begin{claim}\label{cl:queryvsround}
	Let $P$ be any query-path in $G$ for permutation $\pi$, then $\rho(G, \pi) \geq \lfloor \frac{|P|}{2} \rfloor$.
\end{claim}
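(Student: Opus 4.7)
The plan is to lower-bound $\rho(G,\pi)$ by tracking the round at which each edge of $P$ is removed during the parallel algorithm. For every edge $e$ of $G$, let $r(e)$ denote the round in which $e$ leaves the graph (by being matched itself, or by losing an endpoint to a matched edge); then $\rho(G,\pi)=\max_{e} r(e)$. By the recursion rule of Algorithm~\ref{alg:edgeoracle}, consecutive edges $e_i,e_{i+1}$ of $P$ share a vertex and satisfy $\pi(e_{i+1})<\pi(e_i)$, so each $e_{i+1}$ is a strictly smaller-rank neighbor of $e_i$. It will suffice to prove $r(e_1)\geq \lfloor k/2\rfloor$, since $\rho(G,\pi)\geq r(e_1)$.

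The technical core is the inequality $r(e_{i+1})\leq r(e_i)$ for every $i\in\{1,\ldots,k-1\}$, together with the characterization that equality (``sticky'' $i$) can occur only when $e_{i+1}$ is matched and it is the matching of $e_{i+1}$ that deletes $e_i$. I prove this by case-analyzing how $e_i$ dies at round $r(e_i)$. If $e_i$ itself is matched, then $e_{i+1}$ (a strictly smaller-rank neighbor) must already be dead by round $r(e_i)-1$, and the inequality is strict. Otherwise $e_i$ is killed at round $r(e_i)$ by some matched edge $f$ incident to one of its endpoints. The case $f=e_{i-1}$ is immediately impossible, since a matched $e_{i-1}$ would force $e_i$ to die earlier. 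For outside matched $f$ I invoke the \emph{query-path property}: at each iteration endpoint $v_j$ of $\edgeoracle{e_j,\cdot,\pi}$ (and at the starting vertex $v=v_0$), every incident edge of rank below $\pi(e_{j+1})$ returned $\false$ in the oracle execution and is therefore outside $\GMM{G,\pi}$. Applied to $f$ at the back pivot $v_{i-1}$ (or at $v$ when $i=1$), this forces $\pi(f)\geq \pi(e_i)$; but then $e_i$ is a strictly smaller-rank neighbor of the local-minimum $f$, so $e_i$ should already be dead by round $r(f)-1=r(e_i)-1$, a contradiction. The only remaining possibilities are $f=e_{i+1}$---the sticky equality case---and an outside matched $f$ at the forward pivot $v_i$, where the analogous bound $\pi(f)\geq \pi(e_{i+1})$ makes $e_{i+1}$ die strictly before $r(e_i)$.

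Once this monotone chain is in place, the rest is combinatorial. Since sticky $i$ forces $e_{i+1}\in\GMM{G,\pi}$, two consecutive sticky indices $i,i+1$ would require the adjacent edges $e_{i+1},e_{i+2}$ to both be matched, which is impossible. Hence the sticky indices form an independent set in the path on $\{1,\ldots,k-1\}$, so there are at least $\lfloor (k-1)/2\rfloor$ non-sticky indices, each contributing a drop of at least one to the telescoping sum $r(e_1)-r(e_k)=\sum_{i=1}^{k-1}(r(e_i)-r(e_{i+1}))$. Combined with the trivial bound $r(e_k)\geq 1$, this yields $\rho(G,\pi)\geq r(e_1)\geq 1+\lfloor (k-1)/2\rfloor\geq \lfloor k/2\rfloor$. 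The main obstacle is exactly the back-pivot case in the second paragraph: without the query-path property, an outside matched edge at $v_{i-1}$ with $\pi(f)<\pi(e_i)$ could innocuously delete $e_i$ without touching $e_{i+1}$, and the monotonicity of $r(\cdot)$ along $P$ would simply fail; it is the fact that $P$ records a real execution trace of the edge oracle---so every smaller-rank edge at each pivot has already been certified to be outside $\GMM{G,\pi}$---that supplies the needed rank lower bound on $f$ and rescues the chain.
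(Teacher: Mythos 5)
Your proof is correct and takes essentially the same approach as the paper: both track the parallel round at which each path edge is removed, invoke the query-path property (that lower-rank edges at each pivot have already been certified to lie outside $\GMM{G,\pi}$) to rule out low-rank matched killers, and conclude that the round must strictly drop over every two consecutive edges of the path. Your decomposition into a non-increasing chain with no two consecutive ``sticky'' equalities is a mild repackaging of the paper's direct claim that $\rho(e_i) \geq \rho(e_{i-2}) + 1$.
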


Claim~\ref{cl:queryvsround} suffices to prove Lemma~\ref{lem:AU} as proved next.

\begin{proof}[Proof of Lemma~\ref{lem:AU} via Claim~\ref{cl:queryvsround}]
	For any permutation $\pi \in U$, by definition (\ref{eq:defLU}) there is at least one query-path of length at least $\beta + 1$ in permutation $\pi$. This implies by Claim~\ref{cl:queryvsround} that $\rho(G, \pi) \geq \lfloor \frac{\beta + 1}{2} \rfloor$ for any $\pi \in U$. If we set the constant $c$ in $\beta = c \log n$ to be sufficiently large, we can use Lemma~\ref{lem:parallel} to bound the probability of this event by $\leq 1/n^2$. Thus, $|U|/|\Pi| \leq 1/n^2$ which implies $|U| \leq |\Pi|/n^2 = m!/n^2$. The lemma follows noting that $|A_U| = |U|$ due to the one-to-one correspondence between vertices $A_U$ and permutations in $U$ that we discussed in Section~\ref{sec:AL}.
\end{proof}

\begin{proof}[Proof of Claim~\ref{cl:queryvsround}]
	Let $P = (e_k, \ldots, e_1)$ be our query-path and recall from Observation~\ref{obs:querypathdecreasing} that $\pi(e_k) > \ldots > \pi(e_1)$. For any edge $e$, we use $\rho(e)$ to denote the round of the parallel implementation in which edge $e$ gets removed from the graph according to permutation $\pi$. We show that for any $i \in \{2, \ldots, k-1\}$, $\rho(e_i) \geq \rho(e_{i-2}) + 1$. By a simple induction, this implies $\rho(e_{k-1}) \geq \lfloor k/2 \rfloor$ and so $\rho(G, \pi) \geq \lfloor k/2 \rfloor$.
	
	Suppose for the sake of contradiction that $\rho(e_i) < \rho(e_{i-2}) + 1$ (or equivalently $\rho(e_i) \leq \rho(e_{i-2})$) for some $2 \leq i \leq k-1$. This means that at the start of round $\rho(e_i)$ both $e_i$ and $e_{i-2}$ are still in the graph. Observe that if $\rho(e_{i-1}) < \rho(e_i)$, then during round $\rho(e_{i-1})$ at least one of the endpoints of $e_{i-1}$ must be matched and so either $e_i$ or $e_{i-2}$ (or both) should also be removed from the graph which contradicts $\rho(e_{i-1}) < \rho(e_i) \leq \rho(e_{i-2})$. Therefore $\rho(e_{i-1}) \geq \rho(e_i)$ and so $e_{i-1}$ is also still in the graph along with $e_{i}$ and $e_{i-2}$ at the start of round $\rho(e_i)$. This means that $e_i$ is not a local minimum during round $\rho(e_i)$ and so it is removed in this round because some other edge $f \not= e_{i}$ joins the matching. Note also that $f \not= e_{i-1}$ since $e_{i-1}$ is incident to $e_{i-2}$ and is not a local minimum. 
	
	Let $x$ be the vertex incident to both $e_{i-1}$ and $e_i$ and let $y$ be the other endpoint of $e_i$ incident to $e_{i+1}$ (note that since $i \leq k-1$ edge $e_{i+1}$ should exist). Since $f$ is a neighbor of $e_i$, it is either connected to $x$ or $y$. We show that both cases lead to contradictions. 
	
	\textbf{Case (1) } $f$ connected to $y$: Noting that $\pi(f) < \pi(e_i)$ since $\pi(f)$ is a local minimum when $e_i$ still is in the graph, we get that $\edgeoracle{e_{i+1}, \cdot, \pi}$ calls $\edgeoracle{f, \cdot, \pi}$ before $\edgeoracle{e_i, \cdot, \pi}$. But because $f \in \GMM{G, \pi}$, $\edgeoracle{e_{i+1}, \cdot, \pi}$ would not continue to call $\edgeoracle{e_i, \cdot, \pi}$ and $P$ cannot be a valid query-path.
	
	\textbf{Case (2)} $f$ connected to $x$: Since $f$ is a local minimum when $e_{i-1}$ still exists in the graph, $\pi(f) < \pi(e_{i-1})$. This implies that $\edgeoracle{e_i, \cdot, \pi}$ should call $\edgeoracle{f, \cdot, \pi}$ before $\edgeoracle{e_{i-1}, \cdot, \pi}$. But because $f \in \GMM{G, \pi}$, $\edgeoracle{e_i, \cdot, \pi}$ would not continue to call $\edgeoracle{e_{i-1}, \cdot, \pi}$ and so, again, $P$ cannot be a valid query-path.
	
	The contradictions above rule out the possibility of our assumption that $\rho(e_i) < \rho(e_{i-2}) + 1$ and so we indeed get $\rho(e_i) \geq \rho(e_{i-2}) + 1$ for all $2 \leq i \leq k-1$. As discussed, this implies $\rho(G, \pi) \geq \lfloor k/2 \rfloor$ completing the proof. 
\end{proof}

\begin{remark}\label{rem:MIS}
	Claim~\ref{cl:queryvsround} shows that for any permutation $\pi$, the maximum query length in the random greedy maximal matching algorithm is asymptotically upper bounded by the parallel round-complexity of this algorithm for the same permutation. One may wonder if this also holds for the randomized greedy maximal independent set (MIS) algorithm which processes the {\em vertices} in a random order and adds each encountered feasible vertex to the independent set greedily (see \cite{YoshidaSTOC09,BlellochSPAA}). Interestingly, the answer turns out to be negative. See Figure~\ref{fig:MIS}.
\end{remark}

\begin{figure}
\begin{tikzpicture}
    \node[anchor=south west,inner sep=0] at (0,0) {\includegraphics{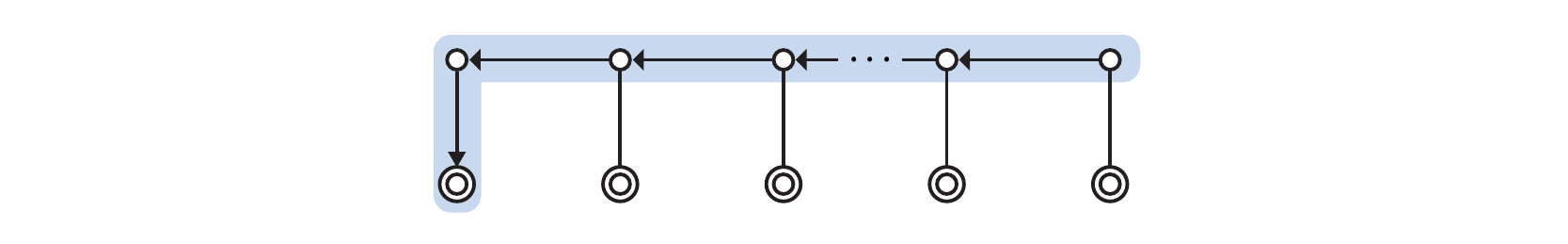}};
    \newcommand\XA{0.11}

    \node at (5.07-\XA,0.2) {$1$};
    \node at (5.07-\XA,2.5) {$2$};
    \node at (6.82-\XA,0.2) {$3$};
    \node at (6.82-\XA,2.5) {$4$};
    \node at (8.57-\XA,0.2) {$5$};
    \node at (8.58-\XA,2.5) {$6$};
    \node at (10.36-\XA,0.2) {$n-3$};
    \node at (10.36-\XA,2.5) {$n-2$};
    \node at (12.11-\XA,0.2) {$n-1$};
    \node at (12.11-\XA,2.5) {$n$};
    \node at (14.4-\XA,2) {(assume $n$ is even)};
\end{tikzpicture}
\caption{In this permutation, all the vertices with odd ranks join the MIS in round one and the whole graph becomes empty immediately, hence the parallel depth is 1. However, the query process for the vertex with rank $n$ first goes through all even nodes, thus has length $\geq n/2$.}
\label{fig:MIS}
\end{figure}

\section{The Final Algorithms for the Adjacency List Query Model}\label{sec:adjlist}

In this section, we show how the oracle analysis of Section~\ref{sec:query} can lead to our claimed bounds of Theorems~\ref{thm:adjlist-multiplicative} and \ref{thm:adjlist-additive} in the adjacency list query model.

As before, let $G=(V, E)$ be an arbitrary graph with $n$ vertices, $m$ edges, maximum degree $\Delta$, and average degree $\bar{d}$. Having defined and analyzed the oracle calls of the GMM algorithm in Section~\ref{sec:query}, we now employ the standard recipe of the literature in estimating the size of MCM or MVC. We sample a number of random vertices and simulate the vertex oracle of Algorithm~\ref{alg:vertexoracle} on each. If our sample size is sufficiently large, the fraction of matched sampled vertices is a good estimate of the fraction of vertices in the graph that are matched by GMM. To formalize this, we first show how to simulate the vertex and edge oracles of Section~\ref{sec:query} using adjacency list queries.

First, to generate the random permutation $\pi$, one can for each edge $e$ sample an independent rank $\sigma(e)$ which is a real in $[0, 1]$ chosen uniformly at random, and then obtain $\pi$ by sorting the edges in the increasing order of their ranks. This way we can expose the random permutation ``on the fly'' only where it is needed, avoiding the $\Omega(m)$ time needed for generating it for all the edges. Another challenge remains though. A trivial simulation of the edge oracle $\edgeoracle{e, \pi}$ (Algorithm~\ref{alg:edgeoracle}) is to generate the rank $\sigma(e')$ of all edges $e'$ incident to $e$ upon calling $\edgeoracle{e, \pi}$. The problem with this approach is that the total number of queries in answering $\vertexoracle{v, \pi}$ can be as large as $O(T(v, \pi) \Delta)$ whereas we need a bound of $\widetilde{O}(T(v, \pi))$ for our final results. Here $T(v, \pi)$ as defined in Section~\ref{sec:query} is the number of edges on which the edge oracle is recursively called during the execution of $\vertexoracle{v, \pi}$ and the $O(\Delta)$ factor comes from querying up to $O(\Delta)$ neighbors of each such edge. To get rid of this $\Delta$ factor, the idea is to expose the neighbors of each edge in ``batches,'' only when they are needed. This leads to the following bound, a variant of which was first proved by \cite{OnakSODA12}:

\begin{lemma}[\cite{OnakSODA12}]\label{lem:linear-implementation}
	Let $v$ be an arbitrary vertex in a graph $G=(V, E)$. There is an algorithm that draws a \underline{random} permutation $\pi$ over $E$, and  determines whether $v$ is matched in $\GMM{G, \pi}$ in time $\widetilde{O}(T(v, \pi) + 1)$ having query access to the adjacency lists. The algorithm succeeds w.h.p.
\end{lemma}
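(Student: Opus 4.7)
The plan is to carry out the vertex- and edge-oracles of Algorithms~\ref{alg:vertexoracle} and \ref{alg:edgeoracle} verbatim, but to expose the random permutation $\pi$ lazily. I draw, on demand, an independent $O(\log n)$-bit Uniform$[0,1]$ rank $\sigma(e)$ for each edge $e$ that the simulation encounters, and store it in a global hash table keyed by the unordered edge; this ensures that whenever the simulation re-enters the same edge (possibly from the other endpoint, or via the caching line of Algorithm~\ref{alg:edgeoracle}), the previously sampled rank is reused. With $O(\log n)$-bit precision, a union bound over the at most $\poly(n)$ edges touched shows there are no ties with high probability, so the simulation is distributionally identical to running the oracle on a uniformly random $\pi \in \Pi$ and Observation~\ref{obs:queries-correctness} yields correctness.

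The running-time bottleneck of a naive implementation is that each call $\edgeoracle{e, u, \pi}$ must locate the neighbors of $u$ whose rank lies below $\pi(e)$, costing $\Theta(\deg(u))$ queries per call and hence $O(T(v, \pi) \cdot \Delta)$ overall, which is too weak by a factor of $\Delta$. To shave the $\Delta$, I iterate through $u$'s neighbors in rank order and reveal them one at a time, stopping as soon as the current edge-oracle call has either found a matched neighbor or exceeded the rank budget $\pi(e)$. Concretely, I maintain for each vertex $u$ that the simulation has touched: (i) a priority queue of $u$'s already-revealed neighbors keyed by rank, and (ii) a counter $d$ of its unrevealed neighbors, initialized from a single $\deg(u)$ query. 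To produce the next unrevealed neighbor with rank above the current threshold $t$, I sample the rank gap from the distribution of the minimum of $d$ Uniform$[t, 1]$ values (an inverse-transform draw, equivalently an Exponential race); if this rank already exceeds $\pi(e)$ no further unrevealed neighbors at $u$ need be exposed. Otherwise I pick a uniformly random unrevealed neighbor of $u$ via one indexed adjacency-list query, using a swap-array that keeps the unrevealed positions in a contiguous prefix of $\{1, \dots, \deg(u)\}$, assign it the drawn rank, register it in both endpoints' data structures, and recurse on the resulting edge. Merging this lazy stream with the priority queue of known neighbors at each edge-oracle call gives the exact sequence in which edges should be handed to the recursive invocations.

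For the running-time analysis, observe that every neighbor ever popped from the merged stream during the simulation triggers exactly one recursive edge-oracle call, so the total number of reveals and pops across the entire execution of $\vertexoracle{v, \pi}$ is bounded by $T(v, \pi)$. Each such operation performs $O(1)$ adjacency-list queries and $O(\log n)$ additional work for the priority queue and hash-table updates, yielding the claimed $\widetilde{O}(T(v, \pi) + 1)$ runtime, with the ``$+1$'' absorbing the initial $\deg(v)$ query and the $\widetilde{O}$ hiding the polylogarithmic factors of the dictionary structures. The w.h.p.\ qualifier records the failure events in which the tie-freeness of ranks or the indexing arithmetic above fails, each bounded by $1/\poly(n)$.

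The main obstacle will be arguing that the lazy sampler produces exactly the joint distribution of ranks assumed in Section~\ref{sec:query}. Two consistency requirements must be met: once an edge has been revealed at $u$, its rank must be reused verbatim if and when the simulation later enters it from its other endpoint $w$; and each newly revealed rank must be drawn from the correct conditional distribution given all previously revealed ranks at that vertex. The edge-keyed hash table addresses the first, and the inverse-transform gap sampler---formally equivalent to generating the order statistics of $d$ i.i.d.\ Uniform variables one at a time---addresses the second. Together they make the lazy simulation distributionally identical to running the oracles on a uniform $\pi$, and the runtime accounting above then gives the lemma.
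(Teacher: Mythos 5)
Your high-level plan — expose $\sigma$ lazily, store revealed ranks keyed by the unordered edge so the two endpoints agree, and reveal a vertex's incident edges one at a time in rank order to avoid the $\Theta(\Delta)$ cost per edge-oracle call — is the same strategy the paper uses (following \cite{OnakSODA12}). However, the per-vertex order-statistics sampler you propose is distributionally incorrect, and this is precisely the subtlety that the paper's construction is designed to handle.

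The bug is in your second ``consistency requirement,'' which you state as drawing each new rank ``from the correct conditional distribution given all previously revealed ranks \emph{at that vertex}.'' Conditioning on the local information at $u$ alone is not enough. Consider the triangle $v$--$u$--$w$--$v$ and run $\vertexoracle{v, \cdot}$. Your scheme at $v$ samples the minimum of two Uniform$[0,1]$ ranks, say $0.3$, and assigns it to $(v,u)$; your stated invariant is then that $\sigma((v,w))$ is (conditionally) Uniform$[0.3,1]$. The recursion descends through $\edgeoracle{(v,u), u}$ and $\edgeoracle{(u,w), w}$; inside the latter, $w$'s stream wants the lowest-rank unrevealed edge of $w$, which is $(v,w)$, with local threshold $t_w=0$. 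Your sampler draws its rank as a fresh Uniform$[0,1]$ (minimum of $d_w=1$ such draw), and can produce a value below $0.3$ — contradicting the invariant you already committed to at $v$. The threshold of an unrevealed edge $(u,w)$ is two-sided: $\sigma((u,w)) \geq \max(t_u, t_w)$, and your minimum-of-$d$-uniforms draw at $u$ uses only $t_u$. A closely related issue afflicts the swap-array: if $(u,w)$ was revealed from $w$'s side, $u$ generally cannot locate $w$'s index in its own adjacency list without $\Theta(\deg u)$ queries, so $u$ cannot move that position out of the contiguous ``unrevealed'' prefix, and the counter $d$ and the uniform-index draw drift out of sync with the true unrevealed set.

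The paper's construction avoids both problems by exposing ranks in fixed geometrically-growing intervals $I_0, I_1, \ldots$ rather than one order statistic at a time. To advance $v$ into interval $I_{k(v)}$, it takes a \emph{binomial} sample of the full index set $[\deg(v)]$ with the appropriate conditional probability (using the exact sublinear-time binomial samplers of \cite{BinomialSampler,BinomialSampler2}, which is what upgrades the guarantee from approximate total variation to exact), and then examines each sampled index $i$: if the $i$-th neighbor $u$ is already in $\exposed{v}$ the sample is discarded, and — crucially — if $k(u) > k(v)$ it is also discarded, because $u$'s stream has already certified that $(u,v)$ has no rank in $I_{k(v)}$. This explicit cross-endpoint check is exactly the piece your scheme is missing. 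Sampling whole indices from $[\deg(v)]$ (rather than from a locally-maintained ``unrevealed'' list) also sidesteps the adjacency-position problem, since collisions with neighbor-side reveals are simply detected and dropped. Your proposal would need to be augmented with some analogue of this rejection logic, and the running-time analysis redone to account for the rejections, before it proves the lemma.
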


We note that Lemma~\ref{lem:linear-implementation} is slightly stronger than its variant proved in \cite{OnakSODA12} where the produced answers were only approximately close, in total variation distance, to the actual distribution. We observe that this can be turned to an exact guarantee by using the exact sublinear time binomial samplers of \cite{BinomialSampler,BinomialSampler2}. For completeness, we provide the full proof of Lemma~\ref{lem:linear-implementation} in Appendix~\ref{apx:implementation}. 

\subsection{Proof of Theorem~\ref{thm:adjlist-multiplicative}: Multiplicative Approximation}

We assume, w.l.o.g., in this section that the graph has no singleton vertices, and that the average degree $\bar{d}$ and maximum degree $\Delta$ are  given. Note that we can simply query the degree of every vertex in the graph, discard all the singleton vertices, and compute $\bar{d}$ and $\Delta$ for the rest of the vertices in $O(n)$ time and queries as allowed by Theorem~\ref{thm:adjlist-multiplicative}. Hence, the assumption comes w.l.o.g. Having this assumption, the rest of the algorithm of this section runs in $\widetilde{O}(\Delta/\epsilon^2)$ time.

As discussed, the general idea is to take $k$ random vertices and run the greedy oracle of Lemma~\ref{lem:linear-implementation} on them to see what fraction of them get matched. For the guarantee of Theorem~\ref{thm:adjlist-multiplicative}, it turns out that setting $k = \widetilde{\Theta}(\Delta / \bar{d} \epsilon^2)$ suffices to see sufficiently many matched vertices. The following simple claim plays a crucial role in arguing that these many samples suffice for our purpose: 

\begin{claim}\label{cl:matching-average-deg}
	For any $n$-vertex graph $G$ of maximum degree $\Delta$ and average degree $\bar{d}$, $\mu(G) \geq \frac{n \bar{d}}{4 \Delta}$.
\end{claim}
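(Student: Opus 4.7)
The plan is to prove the claim by exhibiting a concrete matching whose size already meets the bound, namely any maximal matching in $G$; since $\mu(G)$ is at least the size of any maximal matching, this suffices.

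First I would fix an arbitrary maximal matching $M$ of $G$, and let $V(M)$ denote the set of $2|M|$ vertices incident to edges of $M$. The key structural fact I would invoke is maximality: every edge $e \in E$ must have at least one endpoint in $V(M)$, for otherwise $e$ could be added to $M$ contradicting maximality.

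Next I would bound $m$ from above via $|M|$ by a simple double counting. Each vertex $v \in V(M)$ is incident to at most $\deg(v) \leq \Delta$ edges, so the total number of edges with at least one endpoint in $V(M)$ is at most $|V(M)| \cdot \Delta = 2|M|\Delta$. Combined with the previous observation that every edge has such an endpoint, this yields
\[
m \;\leq\; 2|M| \cdot \Delta.
\]
Substituting $m = n\bar{d}/2$ and rearranging gives $|M| \geq \frac{n\bar{d}}{4\Delta}$, and since $\mu(G) \geq |M|$ the claim follows.

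There is no real obstacle here: the argument is the textbook fact that a maximal matching is a $1/2$-approximation of the maximum matching together with the elementary edge-count $m \leq \Delta \cdot |V(M)|$. I would just be a little careful to state that the graph may (and in the intended application often will) have singleton vertices filtered out beforehand, but the inequality $\mu(G) \geq n\bar{d}/(4\Delta)$ holds regardless for any simple graph of the stated parameters.
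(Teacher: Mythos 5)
Your proof is correct, but it takes a genuinely different route from the paper. The paper invokes Vizing's theorem: a $(\Delta+1)$-edge-coloring exists, so some color class is a matching of size at least $m/(\Delta+1)$, giving $\mu(G) \geq m/(\Delta+1) = (n\bar{d}/2)/(\Delta+1) \geq n\bar{d}/(4\Delta)$. You instead use the classical vertex-cover argument: a maximal matching $M$ has $V(M)$ touching every edge, and $|V(M)|\cdot\Delta \geq m$ yields $|M| \geq m/(2\Delta) = n\bar{d}/(4\Delta)$. Your argument is more elementary (no appeal to Vizing, only maximality and degree counting) and lands exactly on the claimed constant, whereas the paper's Vizing route actually proves the slightly stronger intermediate bound $\mu(G) \geq m/(\Delta+1)$ before loosening it to $n\bar{d}/(4\Delta)$. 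Both are perfectly valid; your version is the one a reader with no edge-coloring background would likely prefer, and it mirrors the $2$-approximation fact used elsewhere in the paper. One tiny nit: you write $\mu(G)\geq |M|$ but the essential inequality is already $|M|\geq n\bar{d}/(4\Delta)$, and since $|M|\leq\mu(G)$ trivially the conclusion follows; you phrased this correctly. Also note that your argument, like the paper's, is vacuous when $m=0$ (then $\bar d = 0$), so the edge case is harmless and your remark about singleton vertices is not needed for correctness of the inequality itself.
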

\begin{proof}
	By Vizing's theorem, any graph $G$ of maximum degree $\Delta$ has a proper $(\Delta+1)$-edge-coloring. Since the $m$ edges are colored only via $(\Delta+1)$ colors, there must be a color that is assigned to $\geq m/(\Delta+1)$ edges. Since the edges of any color form a matching, the graph must have a matching of size $\geq m/(\Delta + 1)$. Noting that $\bar{d} = 2m/n$, we get:
	$$
	\mu(G) \geq \frac{m}{\Delta + 1} = \frac{n \bar{d}/2}{\Delta + 1} \geq \frac{n \bar{d}}{4\Delta}.\qedhere
	$$
\end{proof}

Our starting point is the following Algorithm~\ref{alg:adjlist-multiplicative}:

\begin{algorithm}[H]\label{alg:adjlist-multiplicative}
\caption{An algorithm used for Theorem~\ref{thm:adjlist-multiplicative}, given parameter $\epsilon > 0$.}	
	$k \gets 128 \cdot 24 (\Delta \ln n) /(\epsilon^2 \bar{d}).$ \tcp*{Note that $\bar{d}$ and $\Delta$ are known to the algorithm.}
	
	Sample $k$ vertices $v_1, \ldots, v_k$ (with replacement) independently and uniformly from $V$.
	
	For each $i \in [k]$ run the algorithm of Lemma~\ref{lem:linear-implementation} on vertex $v_i$. For each $i \in [k]$ let $X_i$ be the indicator of the event that $v_i$ is matched once we run Lemma~\ref{lem:linear-implementation} for $v_i$.
	
	Let $X \gets \sum_{i=1}^k X_i$ and let $f \gets X/k$ be the fraction of vertices $v_1, \ldots, v_k$ that get matched.
	
	Let $\widetilde{\mu} \gets (1-\frac{\epsilon}{2})f n / 2$ and let $\widetilde{\nu} \gets (1+\frac{\epsilon}{2})f n$.
	
	\Return $\widetilde{\mu}$ as the estimate for $\mu(G)$ and \Return $\widetilde{\nu}$ as the estimate for $\nu(G)$.
\end{algorithm}

We start by analyzing the approximation ratio of Algorithm~\ref{alg:adjlist-multiplicative}:

\begin{lemma}\label{lem:adjlist-multiplicative-approx}
	Let $\widetilde{\mu}$ and $\widetilde{\nu}$ be the outputs of Algorithm~\ref{alg:adjlist-multiplicative}. With probability $1-2	n^{-4}$,
	$$
	(1-\epsilon) \frac{1}{2} \mu(G) \leq \widetilde{\mu} \leq \mu(G)
	\qquad\&\qquad
	\nu(G) \leq  \widetilde{\nu} \leq (1+\epsilon)2 \nu(G).
	$$
\end{lemma}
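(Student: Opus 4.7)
The plan is to introduce the underlying quantity the algorithm is estimating, namely $p := \Pr_{v \sim V,\, \pi \sim \Pi}[v \text{ is matched in } \GMM{G,\pi}]$, and first show that $p$ itself sandwiches $\mu(G)$ and $\nu(G)$. Since the number of vertices matched by any matching is twice its size, $pn = 2\E_\pi|\GMM{G,\pi}|$. Combining $|\GMM{G,\pi}|\leq \mu(G)$ with the fact that a maximal matching is a $2$-approximation to the maximum matching ($2|\GMM{G,\pi}|\geq \mu(G)$) gives $\mu(G) \leq pn \leq 2\mu(G)$. For vertex cover, the endpoints of any maximal matching form a vertex cover, so $\nu(G) \leq 2|\GMM{G,\pi}|$, yielding $\nu(G)\leq pn$; together with $\nu(G) \geq \mu(G) \geq pn/2$, we obtain $pn/2 \leq \nu(G) \leq pn$.

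Next I would concentrate $f=X/k$ around $p$. Since each $v_i$ is drawn independently and uniformly from $V$ and Lemma~\ref{lem:linear-implementation} uses a fresh independent permutation, each $X_i$ is independently Bernoulli$(p)$ and $\E[X]=pk$. The crucial quantitative input is Claim~\ref{cl:matching-average-deg}, which gives $p \geq \mu(G)/n \geq \bar{d}/(4\Delta)$; plugging this into the definition of $k$ yields
$$\E[X]=pk \;\geq\; \frac{\bar{d}}{4\Delta}\cdot \frac{128\cdot 24\,\Delta\ln n}{\epsilon^2 \bar{d}} \;=\; \frac{768\ln n}{\epsilon^2}.$$
Applying Proposition~\ref{prop:chernoff} with $\lambda = (\epsilon/4)\E[X]$ then gives $(1-\epsilon/4)p \leq f \leq (1+\epsilon/4)p$ with probability at least $1-2\exp(-\epsilon^2 pk/48) \geq 1 - 2n^{-16}$. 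A union bound over the $k$ invocations of Lemma~\ref{lem:linear-implementation} (whose failure probability can be driven to $n^{-c}$ for any fixed $c$) then gives that everything holds with probability at least $1-2n^{-4}$.

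Finally I would substitute these bounds on $f$ into $\widetilde{\mu}=(1-\epsilon/2)fn/2$ and $\widetilde{\nu}=(1+\epsilon/2)fn$ and combine with the sandwiches $pn/2 \leq \mu(G) \leq pn$ and $pn/2 \leq \nu(G) \leq pn$. Each of the four desired inequalities reduces to an elementary polynomial inequality in $\epsilon$: for $\widetilde{\mu}$, the upper bound uses $(1-\epsilon/2)(1+\epsilon/4)\leq 1$ together with $pn/2 \leq \mu(G)$, while the lower bound uses $(1-\epsilon/2)(1-\epsilon/4) \geq 1 - 3\epsilon/4 \geq 1-\epsilon$ together with $\mu(G) \leq pn$; the two bounds for $\widetilde{\nu}$ are analogous, using $(1+\epsilon/2)(1-\epsilon/4)\geq 1$ and $(1+\epsilon/2)(1+\epsilon/4) \leq 1+\epsilon$ (valid for $\epsilon\leq 1$), paired with $pn/2 \leq \nu(G) \leq pn$.

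The only non-routine ingredient is the use of Claim~\ref{cl:matching-average-deg} to lower-bound $p$ by $\bar{d}/(4\Delta)$; without it, $p$ could be polynomially small and Chernoff would demand many more samples than the $\widetilde{\Theta}(\Delta/(\bar{d}\epsilon^2))$ budgeted by the algorithm, which is exactly what lets the eventual running time in Theorem~\ref{thm:adjlist-multiplicative} scale with $\Delta$ rather than with $n$. The remainder is bookkeeping with constants.
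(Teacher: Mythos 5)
Your proposal is correct and takes essentially the same approach as the paper: both rely on Claim~\ref{cl:matching-average-deg} to ensure the fraction of matched vertices is at least $\bar{d}/(4\Delta)$, and then apply the Chernoff bound to the i.i.d.\ indicator variables $X_i$ to concentrate $f$ around $2\E_\pi|\GMM{G,\pi}|/n$. The only differences are cosmetic (you parameterize Chernoff multiplicatively via $\lambda=(\epsilon/4)\E[X]$ and introduce $p$ as an intermediate quantity, whereas the paper uses the additive threshold $\lambda=\sqrt{12\E[X]\ln n}$ and applies the degree bound afterward), and the constants you obtain are comfortably within budget.
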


\begin{proof}
Let us now measure the expected value of our estimates. From our definition, $X_i = 1$ if and only if a random vertex $v_i$ for a random permutation $\pi$ is matched in $\GMM{G, \pi}$. Since the number of vertices matched in a matching is twice the size of the matching, this implies that
$$
	\E[X_i] = \Pr_{v_i, \pi}[X_i = 1] = \frac{2\E_\pi|\GMM{G, \pi}|}{n}.
$$
As such,
\begin{equation}\label{eq:cllrcg128937}
	\E[X] = \E[X_1 + \ldots + X_k] = \frac{2 k\E_\pi|\GMM{G, \pi}|}{n}.
\end{equation}
Since $X$ is sum of independent Bernoulli random variables, by the Chernoff bound (Proposition~\ref{prop:chernoff}):
\begin{equation}\label{eq:hcll92123}
	\Pr\left[|X - \E[X]| \geq \sqrt{12 \E[X] \ln n}\right] \leq 2 \exp\left(- \frac{12 \E[X] \ln n}{3 \E[X]}\right) = 2/n^{4}.
\end{equation}
Noting from Algorithm~\ref{alg:adjlist-multiplicative} that $f \cdot n = Xn/k$, inequality (\ref{eq:hcll92123}) implies that with probability $1-2/n^4$, 
\begin{flalign*}
	f \cdot n &\in \frac{(\E[X] \pm \sqrt{12 \E[X] \ln n})n}{k}\\
	&= \frac{\E[X] n}{k} \pm \sqrt{12 \E[X] n^2 k^{-2} \ln n} \\
	&= 2\E_\pi|\GMM{G, \pi}| \pm  \sqrt{24 \E_\pi|\GMM{G, \pi}| n k^{-1} \ln n} \tag{By (\ref{eq:cllrcg128937}).}\\
	&= 2\E_\pi|\GMM{G, \pi}| \pm  \sqrt{\frac{\E_\pi|\GMM{G, \pi}| \epsilon^2 n \bar{d}}{128 \Delta}}. \tag{Since $k = 128 \cdot 24 \frac{\Delta \ln n}{\epsilon^2 \bar{d}}$.}
\end{flalign*}
Note that $\mu(G) \leq 2\E_\pi|\GMM{G, \pi}|$ and also recall from Claim~\ref{cl:matching-average-deg} that $\mu(G) \geq \frac{n \bar{d}}{4 \Delta}$. As such, we have $\frac{n \bar{d}}{4 \Delta} \leq 2\E_\pi|\GMM{G, \pi}|$. Combined with the range above, with probability $1-2/n^4$ we have
$$
f \cdot n \in 2 \E_\pi|\GMM{G, \pi}| \pm \sqrt{\frac{(\E_\pi|\GMM{G, \pi}|)^2 \epsilon^2}{16}} = \left(2 \pm \frac{\epsilon}{4}\right) \E_\pi|\GMM{G, \pi}|.
$$
Since $\widetilde{\mu} = (1-\frac{\epsilon}{2})f \cdot n/2$ and $\widetilde{\nu} = (1+\frac{\epsilon}{2}) f \cdot n$, this means
\begin{flalign}
	(1-\epsilon) \E_\pi|\GMM{G, \pi}| \leq\,\, &\widetilde{\mu} \, \leq \E_\pi|\GMM{G, \pi}|,\label{eq:lntb129387}\\
	2\E_\pi|\GMM{G, \pi}| \leq \,\, &\widetilde{\nu} \, \leq (1+\epsilon) 2\E_\pi|\GMM{G, \pi}|\label{eq:lntb129387-2}.
\end{flalign}
Next, observe that $\frac{1}{2} \mu(G) \leq \E_\pi|\GMM{G, \pi}| \leq \mu(G)$ since a maximal matching is a $2$-approximate maximum matching, and $\nu(G) \leq 2 \E_\pi|\GMM{G, \pi}| \leq 2 \nu(G)$ since the set of vertices of a maximal matching is a $2$-approximate minimum vertex cover. These, plugged into (\ref{eq:lntb129387}) and (\ref{eq:lntb129387-2}) give the desired inequalities of the lemma, completing the proof.
\end{proof}

We are now ready to prove Theorem~\ref{thm:adjlist-multiplicative}.

\begin{proof}[Proof of Theorem~\ref{thm:adjlist-multiplicative}]
	By Theorem~\ref{thm:querycomplexity}, for a random vertex $v \in V$, $\E_{v, \pi}[T(v, \pi)] = O(\bar{d} \cdot \log n)$. As such, for each vertex $v_i$ of Algorithm~\ref{alg:adjlist-multiplicative}, the algorithm of Lemma~\ref{lem:linear-implementation} takes $\widetilde{O}(\bar{d} + 1)$ expected time to determine whether it is matched in a random permutation. Since we run this algorithm for $k = \widetilde{O}(\Delta / \epsilon^2 \bar{d})$ vertices, the \underline{expected} running time of Algorithm~\ref{alg:adjlist-multiplicative} is $\widetilde{O}(\Delta / \epsilon^2 \bar{d}) \cdot \widetilde{O}(\bar{d} + 1) = \widetilde{O}(\Delta / \epsilon^2)$ where the latter equality crucially uses our assumption of the start of the section that there are no singleton vertices in the graph, which implies $1/\bar{d} = O(1)$.	
	
	To achieve the high probability bound on the time-complexity, we run $\Theta(\log n)$ instances of Algorithm~\ref{alg:adjlist-multiplicative} in parallel and return the output of the instance that terminates first. By Markov's inequality, each instance terminates in time $\widetilde{O}(\Delta/\epsilon^2)$ with a constant probability. As such, at least one of the instances terminates in $\widetilde{O}(\Delta/\epsilon^2)$ time with probability $1-1/\poly(n)$. Recall also that we spent $O(n)$ time at the start of the section to throw away singleton vertices and compute $\bar{d}$ and $\Delta$. As such, the total time complexity is, w.h.p., $O(n) + \widetilde{O}(\Delta/\epsilon^2)$. On the other hand, since the approximation ratio guarantee of Lemma~\ref{lem:adjlist-multiplicative-approx} holds with probability $1-1/\poly(n)$ for each instance, {\em all} $O(\log n)$ instances (including the one that terminates first) achieve the claimed approximation with a high probability of $1-1/\poly(n)$.
\end{proof}

\subsection{Proof of Theorem~\ref{thm:adjlist-additive}: Multiplicative-Additive Approximation}

The algorithm we use for Theorem~\ref{thm:adjlist-additive} is similar to Algorithm~\ref{alg:adjlist-multiplicative} for Theorem~\ref{thm:adjlist-multiplicative} except that the additive $\epsilon n$ error of Theorem~\ref{thm:adjlist-additive} allows us to take $\widetilde{O}(1/\epsilon^2)$ sample vertices instead of $\widetilde{O}(\Delta/\epsilon^2\bar{d})$ as in Algorithm~\ref{alg:adjlist-multiplicative}. Formally, we use the following Algorithm~\ref{alg:adjlist-additive}:

\begin{algorithm}[H]\label{alg:adjlist-additive}
\caption{An algorithm used for Theorem~\ref{thm:adjlist-additive}, given parameter $\epsilon > 0$.}	
	$k \gets 16 \cdot 24 \ln n/ \epsilon^2.$
	
	Sample $k$ vertices $v_1, \ldots, v_k$ (with replacement) independently and uniformly from $V$.
	
	For each $i \in [k]$ run the algorithm of Lemma~\ref{lem:linear-implementation} on vertex $v_i$. For each $i \in [k]$ let $X_i$ be the indicator of the event that $v_i$ is matched once we run Lemma~\ref{lem:linear-implementation} for $v_i$.
	
	Let $X \gets \sum_{i=1}^k X_i$ and let $f \gets X/k$ be the fraction of vertices $v_1, \ldots, v_k$ that get matched.
	
	Let $\widetilde{\mu} \gets \frac{f n}{2} - \frac{\epsilon}{2} n$ and let $\widetilde{\nu} \gets f n + \frac{\epsilon}{4} n$.
	
	\Return $\widetilde{\mu}$ as the estimate for $\mu(G)$ and \Return $\widetilde{\nu}$ as the estimate for $\nu(G)$.
\end{algorithm}

\begin{lemma}\label{lem:adjlist-additive-approx}
	Let $\widetilde{\mu}$ and $\widetilde{\nu}$ be the outputs of Algorithm~\ref{alg:adjlist-additive}. With probability $1-2	n^{-4}$,
	$$
	\frac{1}{2} \mu(G) - \epsilon n \leq \widetilde{\mu} \leq \mu(G)
	\qquad\&\qquad
	\nu(G) \leq  \widetilde{\nu} \leq 2 \nu(G) + \epsilon n.
	$$
\end{lemma}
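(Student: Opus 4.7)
The plan is to mirror the argument used for Lemma~\ref{lem:adjlist-multiplicative-approx} almost verbatim, with the key simplification that the additive $\epsilon n$ slack removes the need to invoke Claim~\ref{cl:matching-average-deg}. In particular, I will again use the fact that each $X_i$ is an independent Bernoulli variable with $\E[X_i] = 2\E_\pi|\GMM{G,\pi}|/n$, so that $\E[X] = 2k\E_\pi|\GMM{G,\pi}|/n$, and then concentrate $X$ around this mean via Proposition~\ref{prop:chernoff}.

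First I would apply the Chernoff bound exactly as in (\ref{eq:hcll92123}) to conclude that with probability at least $1-2/n^4$,
$$
|X-\E[X]| \le \sqrt{12\,\E[X]\ln n}.
$$
Dividing by $k/n$, this turns into a deviation bound on $fn = Xn/k$ around $2\E_\pi|\GMM{G,\pi}|$ of magnitude $\sqrt{12\,\E[X]\ln n}\cdot n/k = \sqrt{24\,\E_\pi|\GMM{G,\pi}|\cdot n\ln n/k}$. Plugging in the choice $k = 16\cdot 24\ln n/\epsilon^2$ gives a deviation of $\tfrac{\epsilon}{4}\sqrt{\E_\pi|\GMM{G,\pi}|\cdot n}$, which is at most $\epsilon n/4$ since $\E_\pi|\GMM{G,\pi}|\le n/2$ (by AM--GM, or simply $\sqrt{ab}\le \max(a,b)\le n$). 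So with probability $\ge 1-2/n^4$,
$$
fn \in 2\E_\pi|\GMM{G,\pi}| \pm \tfrac{\epsilon}{4} n.
$$

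Next I would substitute this range into the definitions $\widetilde{\mu} = fn/2 - \epsilon n/2$ and $\widetilde{\nu} = fn + \epsilon n/4$, yielding
$$
\widetilde{\mu} \in \E_\pi|\GMM{G,\pi}| - \tfrac{\epsilon n}{2} \pm \tfrac{\epsilon n}{8},
\qquad
\widetilde{\nu} \in 2\E_\pi|\GMM{G,\pi}| + \tfrac{\epsilon n}{4} \pm \tfrac{\epsilon n}{4}.
$$
Finally, using the standard $2$-approximation inequalities $\tfrac{1}{2}\mu(G) \le \E_\pi|\GMM{G,\pi}| \le \mu(G)$ and $\nu(G) \le 2\E_\pi|\GMM{G,\pi}| \le 2\nu(G)$, I can read off the desired bounds: the upper estimate on $\widetilde{\mu}$ is at most $\mu(G)-\tfrac{3\epsilon n}{8}\le \mu(G)$, the lower is at least $\tfrac{1}{2}\mu(G)-\tfrac{5\epsilon n}{8} \ge \tfrac{1}{2}\mu(G)-\epsilon n$; and symmetrically $\widetilde{\nu}\ge 2\E_\pi|\GMM{G,\pi}|\ge \nu(G)$ while $\widetilde{\nu}\le 2\nu(G)+\tfrac{\epsilon n}{2}\le 2\nu(G)+\epsilon n$.

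There is essentially no technical obstacle beyond checking constants; the proof does not require Claim~\ref{cl:matching-average-deg}, because with only $\widetilde{O}(1/\epsilon^2)$ samples the Chernoff fluctuation of $fn$ translates directly into an additive $O(\epsilon n)$ error, independent of how small $\mu(G)$ is. The only thing to be a bit careful with is that the deviation bound $\sqrt{\E_\pi|\GMM{G,\pi}|\cdot n}\le n$ should be established without relying on $\E[X]$ being large (in contrast to the multiplicative case where one crucially used $\E_\pi|\GMM{G,\pi}|\ge \Omega(n\bar{d}/\Delta)$), but this is immediate from $\E_\pi|\GMM{G,\pi}|\le n/2$.
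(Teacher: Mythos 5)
Your proposal is correct and follows essentially the same chain of reasoning as the paper: Chernoff on $X$ exactly as in (\ref{eq:hcll92123}), translation into a $\pm\epsilon n/4$ window for $fn$ around $2\E_\pi|\GMM{G,\pi}|$ using $\E_\pi|\GMM{G,\pi}| \leq n$ (you sharpen this slightly to $\leq n/2$, which is fine but not needed), and then the maximal-matching $2$-approximation inequalities; the constants all check out.
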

\begin{proof}
	Observe that inequalities (\ref{eq:cllrcg128937}) and (\ref{eq:hcll92123}) that we proved for Algorithm~\ref{alg:adjlist-multiplicative}  hold for Algorithm~\ref{alg:adjlist-additive} too for exactly the same reasons. Particularly, inequality (\ref{eq:hcll92123}) implies that with probability $1-2/n^4$, 
\begin{flalign*}
	f \cdot n &\in \frac{(\E[X] \pm \sqrt{12 \E[X] \ln n})n}{k}\\
	&= \frac{\E[X] n}{k} \pm \sqrt{12 \E[X] n^2 k^{-2} \ln n} \\
	&= 2\E_\pi|\GMM{G, \pi}| \pm  \sqrt{24 \E_\pi|\GMM{G, \pi}| n k^{-1} \ln n} \tag{By (\ref{eq:cllrcg128937}).}\\
	&= 2\E_\pi|\GMM{G, \pi}| \pm  \sqrt{\E_\pi|\GMM{G, \pi}| \epsilon^2 n/16}. \tag{Since $k = 16 \cdot 24 \frac{\ln n}{\epsilon^2}$.}\\
	&\in 2\E_\pi|\GMM{G, \pi}| \pm  \epsilon n /4. \tag{Since $\E_\pi|\GMM{G, \pi}| \leq n$.}
\end{flalign*}
Combined with $\widetilde{\mu} = \frac{f n}{2} - \frac{\epsilon}{2} n$ and $\widetilde{\nu} = f n + \frac{\epsilon }{4} n$, this implies that, w.h.p., 
\begin{flalign}
	\E_\pi|\GMM{G, \pi}| - \epsilon n \leq\,\, &\widetilde{\mu} \, \leq \E_\pi|\GMM{G, \pi}|,\label{eq:bbcclltt}\\
	2\E_\pi|\GMM{G, \pi}| \leq \,\, &\widetilde{\nu} \, \leq 2\E_\pi|\GMM{G, \pi}|\label{eq:bbcclltt-2} + \epsilon n.
\end{flalign}
Plugging $\frac{1}{2} \mu(G) \leq \E_\pi|\GMM{G, \pi}| \leq \mu(G)$ and $\nu(G) \leq 2 \E_\pi|\GMM{G, \pi}| \leq 2 \nu(G)$ into (\ref{eq:bbcclltt}) and (\ref{eq:bbcclltt-2}) completes the proof.
\end{proof}

\begin{proof}[Proof of Theorem~\ref{thm:adjlist-additive}]
	As discussed in the proof of Theorem~\ref{thm:adjlist-multiplicative}, each call to Lemma~\ref{lem:linear-implementation} for a randomly chosen vertex takes $\widetilde{O}(\bar{d} + 1)$ expected time. Since $k = \widetilde{O}(1/\epsilon^2)$, Algorithm~\ref{alg:adjlist-additive} takes $\widetilde{O}((\bar{d} + 1)/\epsilon^2)$ expected time in total.
	
	To achieve the high probability bound on the time-complexity, as in the proof of Theorem~\ref{thm:adjlist-multiplicative}, we run $\Theta(\log n)$ instances of Algorithm~\ref{alg:adjlist-additive} in parallel and return the output of the instance that terminates first. By Markov's inequality, each instance terminates in time $\widetilde{O}((\bar{d} + 1)/\epsilon^2)$ with a constant probability. As such, at least one of the instances terminates in $\widetilde{O}((\bar{d} + 1)/\epsilon^2)$ time with probability $1-1/\poly(n)$. On the other hand, since the approximation guarantee of Lemma~\ref{lem:adjlist-additive-approx} holds with probability $1-1/\poly(n)$ for each instance, {\em all} $O(\log n)$ instances (including the one that terminates first) achieve a $(2, \epsilon n)$-approximation with  probability $1-1/\poly(n)$.
\end{proof}

\section{The Final Algorithm for the Adjacency Matrix Query Model}\label{sec:adjmatrix}
In this section, we prove Theorem~\ref{thm:adjmatrix}. The first challenge is that Lemma~\ref{lem:linear-implementation} is based on adjacency list queries. As such, we need a way of implementing these adjacency list queries in the adjacency matrix model. To do this, we transform the input graph $G$ to another graph $H$ where the adjacency list queries to $H$ can be implemented efficiently via adjacency matrix queries to $G$ and at the same time the oracle calls on $H$ suffice to approximate the size of MCM and MVC for $G$. 

We note that another such reduction was given before by \cite{OnakSODA12}. However, the reduction of \cite{OnakSODA12} is not applicable in our case since it, crucially, adds parallel edges and self-loops to $H$ while Theorem~\ref{thm:querycomplexity} is proved for simple graphs.

\def\x{3.85}
\begin{figure}[t]
\centering
\includegraphics[scale=0.9]{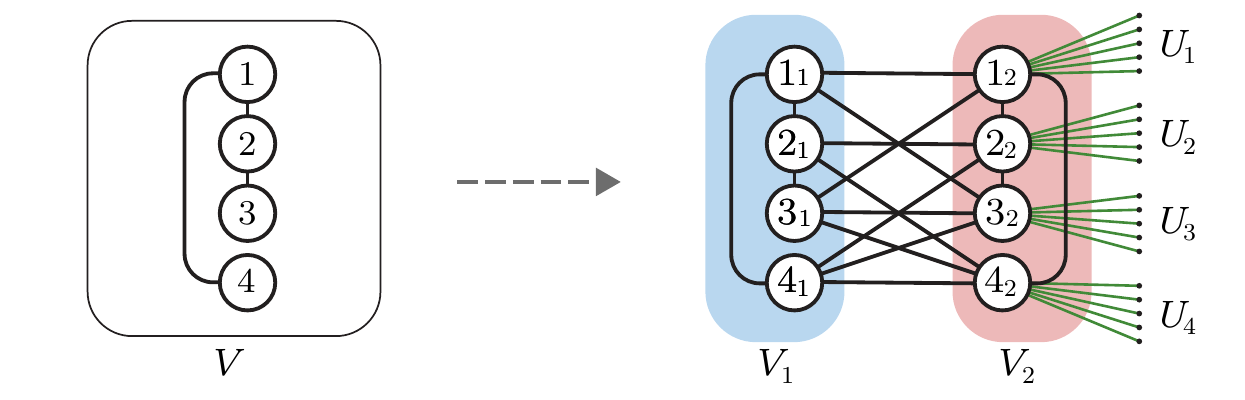}
\caption{An example of our reduction.} 
\label{fig:matrix}
\end{figure}

We first make a mild assumption that $\epsilon$ in Theorem~\ref{thm:adjmatrix} is at least $1/n$, noting that otherwise $\widetilde{O}(n/\epsilon^3)$ is large enough to query the whole graph, making Theorem~\ref{thm:adjmatrix} trivial.

Let us now formalize the construction of graph $H=(V_H, E_H)$ from the input graph $G=(V, E)$ (see Figure~\ref{fig:matrix} for an illustration). Throughout this section, we continue to use $n$ to denote the number of vertices in the original graph $G$. Letting $s := 10 n/\epsilon$, the construction is as follows:
\begin{itemize}
	\item As for the vertex-set $V_H$ we have $V_H = V_1 \cup V_2 \cup U_1 \cup \ldots \cup U_n$ where:
	\begin{itemize}
		\item $V_1 := \{1_1, 2_1, \ldots, n_1\}$ and $V_2 := \{1_2, 2_2, \ldots, n_2\}$ are two copies of $V$ both with size $n$.
		\item For each $i \in [n]$, subset $U_i = \{1'_i, 2'_i, \ldots, s'_i \}$ has size $s$.
	\end{itemize}
	Note in particular that $H$ has $2n + ns = \Theta(n^2/\epsilon)$ vertices.
	\item We define the edge-set $E_H$ by specifying the adjacency lists of the vertices in $V_H$ while being careful that each edge $(u, v) \in E_H$ appears in the adjacency lists of both $u$ and $v$. See Figure~\ref{fig:matrix} for reference.
	\begin{itemize}
		\item Let $v \in [n]$. Vertex $v_1 \in V_1$ has degree exactly $n$ in $H$. For any $i \in [n]$, if $(v, i) \in E$ then the $i$-th neighbor of $v_1$ is vertex $i_1 \in V_1$, otherwise it is vertex $i_2 \in V_2$.
		\item Let $v \in [n]$. Vertex $v_2 \in V_2$ has degree exactly $n + s$. For any $i \in [n]$, if $(v, i) \in E$ then the $i$-th neighbor of $v_2$ is vertex $i_2 \in V_2$, otherwise it is vertex $i_1 \in V_1$. The last $s$ vertices in the adjacency list of $v_2$ are the vertices in $U_{v}$.
		\item Each vertex $u \in U_v$ for any $v \in [n]$ has exactly one neighbor $v_2 \in V_2$.
	\end{itemize}
	Note that $H[V_1]$ and $H[V_2]$ are both isomorphic to $G$.
\end{itemize}

The following observation follows immediately from the construction above and the way neighbors of each vertex are ordered in their adjacency lists:

\begin{observation}\label{obs:matrix-adjlistqueries}
	For any vertex $v \in V_H$, $\deg_{H}(v)$ does not depend on the edges in $G$ and is, therefore, known a priori with zero queries to $G$. Furthermore, for any $v \in V_H$ and any $i \in [\deg_{H}(v)]$ one can determine the $i$-th edge of $v$ in $H$ by making at most one adjacency matrix query to $G$.
\end{observation}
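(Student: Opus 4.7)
The plan is to prove both assertions by a direct case analysis on which of the three classes the vertex $v \in V_H$ belongs to: $v \in V_1$, $v \in V_2$, or $v \in U_j$ for some $j \in [n]$. The construction specifies the adjacency list of each vertex class separately, so handling the three cases independently covers all of $V_H$.

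For the first claim, I read the degree directly off the construction: every $v_1 \in V_1$ has $n$ neighbors (one entry for each $i \in [n]$, landing either in $V_1$ or $V_2$); every $v_2 \in V_2$ has $n+s$ neighbors ($n$ entries indexed by $i \in [n]$, plus the $s$ vertices of $U_v$); and every $u \in U_j$ has exactly one neighbor, namely $j_2$. In each case the count is a fixed function of $n$ and $s$ with no dependence on $E(G)$, giving the first statement with no queries to $G$.

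For the second claim, I consider the $i$-th neighbor in the same three cases. If $v = v_1 \in V_1$ and $i \in [n]$, then the $i$-th neighbor is $i_1$ or $i_2$ according to whether the pair $(v,i)$ is an edge in $G$, which is resolved by exactly one adjacency matrix query to $G$. If $v = v_2 \in V_2$, the same dichotomy applies for $i \in [n]$ using one matrix query, while for $i \in \{n+1, \ldots, n+s\}$ the $i$-th neighbor is a prescribed vertex of $U_v$ (the $(i-n)$-th one), requiring zero queries. Finally, if $v \in U_j$, the unique neighbor is $j_2$ and is known with zero queries.

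There is no real obstacle here, as the statement is a direct consequence of the definition; the only care needed is to verify that the deterministic ordering of the adjacency lists (especially the placement of $U_v$ at the end of $v_2$'s list) makes the $i$-th neighbor unambiguous and computable from a single matrix query, which the construction already guarantees.
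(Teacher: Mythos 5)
Your proof is correct and is precisely the case analysis the paper intends; the paper itself offers no separate argument, stating only that the observation "follows immediately from the construction above and the way neighbors of each vertex are ordered in their adjacency lists." Your three-way split on $v \in V_1$, $v \in V_2$, or $v \in U_j$, with the sub-split of $v_2$'s index range into $[n]$ versus $\{n+1,\ldots,n+s\}$, is exactly the verification the paper is implicitly invoking.
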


Let us use $T_{H}(v, \pi)$ instead of $T(v, \pi)$ (defined in Section~\ref{sec:query} and used in Lemma~\ref{lem:linear-implementation}) to emphasize that we run RGMM on graph $H$ and not $G$ in this section.

\begin{claim}\label{cl:query-size-from-V1}
Let $\pi$ be a random permutation over the edge-set $E_H$ of $H$. For a vertex $v$ chosen uniformly at random from $V_1$ and independently from $\pi$,
$$
\E_{v \sim V_1, \pi}[T_{H}(v, \pi)] = \widetilde{O}(n/\epsilon).
$$
\end{claim}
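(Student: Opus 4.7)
The plan is to apply Theorem~\ref{thm:querycomplexity} to the constructed graph $H$ and then convert the uniform‐over‐$V_H$ bound into a uniform‐over‐$V_1$ bound by paying the ratio $|V_H|/|V_1|$.

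First I would count the edges of $H$ precisely. The within-$V_1$ edges correspond one-to-one with edges of $G$, and similarly for within-$V_2$, contributing $2m$ edges in total. The $V_1$-$V_2$ edges are exactly the pairs $\{v_1, i_2\}$ with $(v, i) \notin E(G)$, and by counting ordered pairs these give $n^2 - 2m$ edges. Finally the $V_2$-$U$ edges contribute $ns$. Thus $|E(H)| = n^2 + ns = n(n+s)$ and $|V_H| = n(s+2)$, so
\[
\bar{d}_H \;=\; \frac{2|E(H)|}{|V_H|} \;=\; \frac{2(n+s)}{s+2}.
\]
Since $s = 10n/\epsilon$ and I may assume $\epsilon \leq 1$ (so $s \geq n$), this immediately gives $\bar{d}_H = O(1)$. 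I should also note that $H$ is simple (no self-loops or parallel edges), which is required for Theorem~\ref{thm:querycomplexity}; this follows because every entry in every adjacency list points to a distinct vertex different from the owner.

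Next I would invoke Theorem~\ref{thm:querycomplexity} applied to $H$ with a uniformly random vertex of $V_H$ and a uniformly random permutation $\pi$ over $E_H$, yielding
\[
\E_{v \sim V_H,\,\pi}\bigl[T_H(v,\pi)\bigr] \;=\; O(\bar{d}_H \cdot \log |V_H|) \;=\; O(\log(n/\epsilon)).
\]
Multiplying by $|V_H|$, the unnormalized sum satisfies
\[
\sum_{v \in V_H} \E_\pi\bigl[T_H(v,\pi)\bigr] \;=\; O\!\left(\tfrac{n^2}{\epsilon} \cdot \log(n/\epsilon)\right).
\]
Since $V_1 \subseteq V_H$ and every term in the sum is nonnegative, the restriction to $V_1$ can only decrease this sum, so dividing by $|V_1| = n$ gives
\[
\E_{v \sim V_1,\,\pi}\bigl[T_H(v,\pi)\bigr] \;\leq\; \frac{1}{n}\sum_{v \in V_H} \E_\pi\bigl[T_H(v,\pi)\bigr] \;=\; O\!\left(\tfrac{n}{\epsilon} \log(n/\epsilon)\right) \;=\; \widetilde{O}(n/\epsilon),
\]
using the standing assumption $\epsilon \geq 1/n$ to absorb $\log(n/\epsilon) = O(\log n)$ into the $\widetilde{O}$.

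The only subtle step is the edge count establishing $\bar{d}_H = O(1)$, since all the rest of the bound is driven by the fact that $H$ has roughly $(n/\epsilon)$ times as many vertices as $V_1$, which costs exactly that factor when averaging is restricted to $V_1$. The construction's key design feature is precisely that the $U_i$-pads inflate $|V_H|$ enough to pin the average degree of $H$ at a constant, so that Theorem~\ref{thm:querycomplexity} yields $\widetilde{O}(1)$ before paying the $|V_H|/|V_1|$ ratio.
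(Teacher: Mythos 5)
Your proposal is correct and takes essentially the same approach as the paper: apply Theorem~\ref{thm:querycomplexity} to $H$, convert the per-vertex expectation over $V_H$ into a sum, and then restrict the (nonnegative) sum to $V_1$ and divide by $|V_1| = n$. The only cosmetic difference is that you compute $|E_H| = n^2 + ns$ and $\bar{d}_H = O(1)$ exactly, while the paper keeps the ratio $|E_H|/|V_H|$ symbolic and cancels at the end; your explicit check that $H$ is simple is a nice touch but does not change the argument.
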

\begin{proof}
	By the construction, graph $H$ has $|E_H| = O(n^2 + ns) = \widetilde{\Theta}(n^2/\epsilon)$ edges and $|V_H| = 2n + n s = \Theta(n^2/\epsilon)$ vertices. Applying Theoroem~\ref{thm:querycomplexity}, for a vertex $v \in V_H$ chosen uniformly at random, we get 
	$$
	\E_{v \sim V_H, \pi}[T_{H}(v, \pi)] = O\left(\frac{|E_H|}{|V_H|} \log |V_H|\right).
	$$
	Instead of querying a random vertex $v \in V_H$, suppose that we sum over all of them. This gives:
	\begin{equation}\label{eq:llrcggjjht198273}
	\sum_{v \in V_H} \E_\pi[T_{H}(v, \pi)] = |V_H| \cdot \E_{v \sim V_H, \pi}[T_{H}(v, \pi)] = O\left(|E_H| \log |V_H|\right)  = \widetilde{O}(n^2/\epsilon).
	\end{equation}
	Finally, since $|V_1| = n$, for a vertex $v$ chosen uniformly at random from $V_1$, we have
	$$
	\E_{v \sim V_1, \pi}[T_{H}(v, \pi)] \leq \left(\sum_{v \in V_H} \E_\pi[T_{H}(v, \pi)]\right)/|V_1| \stackrel{(\ref{eq:llrcggjjht198273})}{=} \widetilde{O}(n^2/\epsilon) / n = \widetilde{O}(n/\epsilon).\qedhere
	$$
\end{proof}

Now that we know queries starting from a random vertex in $V_1$ can be implemented efficiently in $\widetilde{O}(n/\epsilon)$ expected time, the question is how can we use these queries to get our estimators for the size of maximum matching and minimum vertex cover of the original graph $G$. 

For any permutation $\pi$ over $E_H$, let us define
$$
M_1(\pi) := \GMM{H, \pi} \cap (V_1 \times V_1)
$$
to be the set of edges in matching $\GMM{H, \pi}$ with both endpoints in $V_1$. The following Claim~\ref{cl:M-is-a-good-estimator} shows that the expected value of $M_1(\pi)$ for a random permutation $\pi$ can be used for approximating both maximum matching and the minimum vertex cover of $G$. 

The idea behind Claim~\ref{cl:M-is-a-good-estimator} is as follows. Note that by construction, $H[V_1]$ is isomorphic to $G$ and so for any $\pi$, there is a matching in $G$ that corresponds to $M_1(\pi)$. However, the vertices in $V_1$ can also be matched to the vertices in $V_2$ in graph $H$, and so $M_1(\pi)$ is not necessarily a {\em maximal} matching of $H[V_1]$. The key insight, however, is that the vast majority of vertices $v_2 \in V_2$ will actually be matched to their neighbors in $U_v$ in a RGMM of $H$. Hence, $M_1(\pi)$ for a random permutation $\pi$ is indeed close to a maximal matching of $H[V_1]$ and its size can be used to approximate both the size of MCM and that of the MVC of $G$. Formally:

\begin{claim}\label{cl:M-is-a-good-estimator}
	It holds that
	$$
	\frac{1}{2} \mu(G) - \frac{\epsilon}{20} n \leq \E_\pi|M_1(\pi)| \leq \mu(G) \qquad \& \qquad \nu(G) - \frac{\epsilon}{10} n \leq 2\E_\pi|M_1(\pi)| \leq 2\nu(G).
	$$
\end{claim}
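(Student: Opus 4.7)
The plan is to decompose $\GMM{H,\pi}$ into its edges inside $V_1$ (giving $M_1(\pi)$), its edges inside $V_2$, its bipartite edges between $V_1$ and $V_2$, and its edges incident to the pendant bundles $U_v$. The strategy will be to use the oversized bundles $U_v$ to force almost every $v_2 \in V_2$ to be matched to a pendant, so that on the $V_1$ side the RGMM on $H$ looks essentially like a maximal matching of $H[V_1] \cong G$. Both upper bounds $\E|M_1(\pi)| \leq \mu(G)$ and $2\E|M_1(\pi)| \leq 2\nu(G)$ then come for free pointwise, since $H[V_1]$ is isomorphic to $G$ by construction, so $M_1(\pi)$ is always a matching of $G$ and hence $|M_1(\pi)| \leq \mu(G) \leq \nu(G)$.

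The heart of the argument will be controlling $M_{12}(\pi) := \GMM{H,\pi} \cap (V_1 \times V_2)$. For each $v \in [n]$, the vertex $v_2$ has exactly $n+s$ incident edges in $H$, of which $s = 10n/\epsilon$ connect to pendants in $U_v$. By symmetry of the uniform random $\pi$, the minimum-$\pi$-rank edge at $v_2$ lies in $U_v$ with probability $s/(n+s) = 10/(10+\epsilon) \geq 1 - \epsilon/10$. The key deterministic observation I will prove is that whenever this event occurs, the corresponding edge $(v_2, u)$ with $u \in U_v$ is forced into $\GMM{H,\pi}$: the pendant $u$ has degree $1$ in $H$, and every other incident edge of $v_2$ has strictly larger rank, so no earlier edge in the greedy ordering can block it. Hence, if $B(\pi) \subseteq V_2$ denotes the set of $v_2$'s not matched to a pendant, $\E|B(\pi)| \leq \epsilon n/10$; and since every edge of $M_{12}(\pi)$ has its $V_2$-endpoint in $B(\pi)$, this gives $\E|M_{12}(\pi)| \leq \epsilon n/10$.

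To close the lower bounds, I will invoke maximality of $\GMM{H,\pi}$. Because $H[V_1]$ is literally a copy of $G$ and $\GMM{H,\pi}$ is maximal on $H$, no edge of $G$ can have both endpoints unmatched in $V_1$, so the matched set $V_1^{\mathrm{mat}}(\pi) \subseteq V_1$ is a vertex cover of $G$. A simple count then gives
$$
2|M_1(\pi)| + |M_{12}(\pi)| \;=\; |V_1^{\mathrm{mat}}(\pi)| \;\geq\; \nu(G) \;\geq\; \mu(G).
$$
Taking expectations and plugging in $\E|M_{12}(\pi)| \leq \epsilon n/10$ yields both $2\E|M_1(\pi)| \geq \nu(G) - \epsilon n/10$ and $\E|M_1(\pi)| \geq \mu(G)/2 - \epsilon n/20$, which is exactly the claim.

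The main obstacle is really the pendant-forcing step in the middle paragraph: one has to verify that the event ``the minimum-rank incident edge of $v_2$ is a pendant edge'' is not merely likely but deterministically puts that edge into $\GMM{H,\pi}$, which in turn hinges on the pendants having degree \emph{exactly} one in $H$ (so that no competing low-rank edge can preempt the match). Everything else is pointwise or a direct application of maximality of the greedy matching.
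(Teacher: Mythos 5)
Your proof is correct and follows essentially the same approach as the paper: the same bipartite decomposition of $\GMM{H,\pi}$, the same deterministic pendant-forcing argument (a degree-$1$ vertex in $U_v$ whose edge has minimum rank at $v_2$ must be matched), and the same observation that maximality of $\GMM{H,\pi}$ makes the matched set in $V_1$ a vertex cover of $H[V_1]\cong G$. The one small difference is that you obtain both lower bounds from the single counting identity
$$
2|M_1(\pi)| + |M_{12}(\pi)| = |V_1^{\mathrm{mat}}(\pi)| \geq \nu(G) \geq \mu(G),
$$
using $\nu(G) \geq \mu(G)$ to pass from the vertex-cover bound to the matching bound, whereas the paper proves the matching lower bound separately by noting that $M_1(\pi)$ is a \emph{maximal} matching of $H[V_1\setminus B_1(\pi)]\cong G[V\setminus B_1(\pi)]$ and invoking the $2$-approximation property of maximal matchings. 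Since $|M_{12}(\pi)|$ coincides with the paper's $|B_1(\pi)|$, the resulting inequalities are identical; your presentation is slightly more unified, the paper's slightly more self-contained, but the substance is the same.
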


\begin{proof}
	Let us use $B_1(\pi)$ to denote the set of vertices in $V_1$ that are matched to $V_2$ in $\GMM{H, \pi}$. We first show that for every permutation $\pi$ over $E_H$, it holds that
	\begin{equation}\label{eq:llcgggbmbwx123897}
	\frac{1}{2} (\mu(G) - |B_1(\pi)|) \leq |M_1(\pi)| \leq \mu(G) \qquad \& \qquad \nu(G) - |B_1(\pi)| \leq 2|M_1(\pi)| \leq 2\nu(G).
	\end{equation}
	
	For the first inequality, observe that $M_1(\pi)$ is a matching of $H[V_1]$ and $H[V_1]$ is isomorphic to $G$ by the construction; thus, clearly $|M_1(\pi)| \leq \mu(G)$.
	
	For the second inequality, observe that $M_1(\pi)$ is a maximal matching of $H[V_1 \setminus B_1(\pi)]$ which is isomorphic to $G[V \setminus B_1(\pi)]$. Since a maximal matching is at least half the size of a maximum matching, we get $|M_1(\pi)| \geq \frac{1}{2} \mu(G[V \setminus B_1(\pi)]) \geq \frac{1}{2} (\mu(G) - |B_1(\pi)|)$.
	
	For the third inequality, note that since $M_1(\pi)$ is a maximal matching of $H[V_1 \setminus B_1(\pi)]$, the set of vertices matched by it covers all edges in $H[V_1 \setminus B_1(\pi)]$. Adding the vertices in $B_1(\pi)$ to this set, we cover all edges in $H[V_1]$. Hence $\nu(H[V_1]) \leq 2|M_1(\pi)|+|B_1(\pi)|$. The inequality then follows from $H[V_1]$ being isomorphic to $G$.
	
	For the fourth inequality, note that since $M_1(\pi)$ is a matching in $H[V_1]$ and each vertex can cover at most one edge of it, we have $|M_1(\pi)| \leq \nu(H[V_1])$. Given that $H[V_1]$ is isomorphic to $G$, we thus get $|M_1(\pi)| \leq \nu(G)$. Multiplying through by a factor of 2 and adding $|B_1(\pi)|$ to both sides, we get $2|M_1(\pi)| + |B_1(\pi)| \leq 2\nu(G) + |B_1(\pi)|$.

	Now, observe that in any permutation $\pi$, if the lowest rank neighbor of a vertex $v_2 \in V_2$ is connected to $U_v$, then this edge is in matching $\GMM{\pi}$ since the vertices in $U_v$ have degree exactly one. Moreover, since each vertex $v_2 \in V_2$ has degree exactly $n+ s$ and $s$ of these neighbors are to $U_v$, the minimum rank edge of $v_2$ is indeed connected to $U_v$ with probability $\frac{s}{n + s} = \frac{10n/\epsilon}{n + 10n/\epsilon} \geq 1-\epsilon/10$. As such the number of vertices in $V_2$ that are matched to $V_1$ is at most $\frac{\epsilon}{10}|V_2| = \frac{\epsilon}{10}n$ in expectation taken over $\pi$. This, in turn, implies that $\E_\pi|B_1(\pi)| \leq \frac{\epsilon}{10}n$. Taking expectation over a random $\pi$ in (\ref{eq:llcgggbmbwx123897}) and plugging this upper bound for $\E_\pi|B_1(\pi)|$ proves the claim.
\end{proof}

Finally, the algorithm we use for Theorem~\ref{thm:adjmatrix} is formalized below as Algorithm~\ref{alg:adjmatrix}.

\begin{algorithm}
	Let $H=(V_H, E_H)$ be the graph described above (we do not explicitly construct $H$ here).
		
	$k \gets 16 \cdot 24 \ln n/ \epsilon^2.$
	
	Sample $k$ vertices $v^1, \ldots, v^k$ (with replacement) independently and uniformly from $V_1$.
	
	For each $i \in [k]$ run the algorithm of Lemma~\ref{lem:linear-implementation} on vertex $v^i$, for graph $H$. Let $X_i$ be the indicator of the event that $v^i$ is matched \underline{to another vertex in $V_1$}.
	
	Let $X \gets \sum_{i=1}^k X_i$ and let $f \gets X/k$ be the fraction of $v_1, \ldots, v_k$ that get matched to $V_1$.

	Let $\widetilde{\mu} \gets \frac{f n}{2} - \frac{\epsilon}{2} n$ and $\widetilde{\nu} \gets f n + \frac{\epsilon}{2} n$.
	
	\Return $\widetilde{\mu}$ as the estimate for $\mu(G)$ and \Return $\widetilde{\nu}$ as the estimate for $\nu(G)$.

	\caption{An algorithm used for Theorem~\ref{thm:adjmatrix}, given parameter $\epsilon > 0$.}
	\label{alg:adjmatrix} 
\end{algorithm}

Let us analyze the approximation ratio of Algorithm~\ref{alg:adjmatrix}.

\begin{lemma}\label{lem:adjmatrix-approx}
	Let $\widetilde{\mu}$ and $\widetilde{\nu}$ be the outputs of Algorithm~\ref{alg:adjmatrix}. With probability $1-2	n^{-4}$,
	$$
	\frac{1}{2} \mu(G) - \epsilon n \leq \widetilde{\mu} \leq \mu(G)
	\qquad\&\qquad
	\nu(G) \leq  \widetilde{\nu} \leq 2 \nu(G) + \epsilon n.
	$$
\end{lemma}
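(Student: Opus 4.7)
The plan is to mirror the proof of Lemma~\ref{lem:adjlist-additive-approx}, but estimate $\E_\pi|M_1(\pi)|$ in place of $\E_\pi|\GMM{G,\pi}|$, and then invoke Claim~\ref{cl:M-is-a-good-estimator} at the end to translate back to $\mu(G)$ and $\nu(G)$.

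First, I would observe that since $v^i$ is uniform over $V_1$ and $|V_1|=n$, and since each edge of $M_1(\pi)$ matches exactly two vertices of $V_1$,
$$
\E[X_i] \;=\; \Pr_{v^i,\pi}[\,v^i \text{ is matched to another vertex of }V_1\,] \;=\; \frac{2\,\E_\pi|M_1(\pi)|}{n}.
$$
Hence $\E[X] = \frac{2k\,\E_\pi|M_1(\pi)|}{n}$. Since $X$ is a sum of independent Bernoullis, the Chernoff bound (Proposition~\ref{prop:chernoff}) gives
$$
\Pr\!\left[\,|X-\E[X]| \geq \sqrt{12\E[X]\ln n}\,\right] \leq 2/n^4.
$$

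Next, using $fn = Xn/k$ and plugging in $k = 16\cdot 24\ln n/\epsilon^2$ exactly as in the proof of Lemma~\ref{lem:adjlist-additive-approx}, with probability at least $1-2/n^4$ we obtain
$$
fn \;\in\; 2\E_\pi|M_1(\pi)| \;\pm\; \sqrt{\E_\pi|M_1(\pi)|\,\epsilon^2 n/16} \;\subseteq\; 2\E_\pi|M_1(\pi)| \;\pm\; \epsilon n/4,
$$
where the last inclusion uses $\E_\pi|M_1(\pi)| \leq \mu(G) \leq n$. Therefore, on this high-probability event,
$$
\widetilde\mu \;=\; \frac{fn}{2} - \frac{\epsilon}{2}n \;\in\; \E_\pi|M_1(\pi)| - \frac{\epsilon}{2}n \;\pm\; \frac{\epsilon}{8}n,
\qquad
\widetilde\nu \;=\; fn + \frac{\epsilon}{2}n \;\in\; 2\E_\pi|M_1(\pi)| + \frac{\epsilon}{2}n \;\pm\; \frac{\epsilon}{4}n.
$$

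Finally, I would combine these ranges with the two-sided bounds of Claim~\ref{cl:M-is-a-good-estimator}. For $\widetilde\mu$: the upper bound $\E_\pi|M_1(\pi)|\leq \mu(G)$ gives $\widetilde\mu \leq \mu(G) - \frac{3\epsilon}{8}n \leq \mu(G)$, while the lower bound $\E_\pi|M_1(\pi)|\geq \frac{1}{2}\mu(G)-\frac{\epsilon}{20}n$ yields $\widetilde\mu \geq \frac{1}{2}\mu(G) - (\frac{1}{20}+\frac{1}{2}+\frac{1}{8})\epsilon n \geq \frac{1}{2}\mu(G)-\epsilon n$. Similarly for $\widetilde\nu$: $2\E_\pi|M_1(\pi)|\leq 2\nu(G)$ gives $\widetilde\nu \leq 2\nu(G) + \frac{3\epsilon}{4}n \leq 2\nu(G)+\epsilon n$, and $2\E_\pi|M_1(\pi)|\geq \nu(G)-\frac{\epsilon}{10}n$ gives $\widetilde\nu \geq \nu(G) + (\frac{1}{4}-\frac{1}{10})\epsilon n \geq \nu(G)$. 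This yields the stated inequalities with probability $1-2/n^4$.

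The proof is essentially a direct adaptation: no step is a real obstacle, since Claim~\ref{cl:M-is-a-good-estimator} has already absorbed all the subtleties of the adjacency-matrix-to-adjacency-list reduction (in particular the $\frac{\epsilon}{10}n$ leakage from vertices in $V_2$ that fail to match into their private $U_v$ cluster). The only thing to be careful about is that the additive slack in the definitions of $\widetilde\mu$ and $\widetilde\nu$ (the $\frac{\epsilon}{2}n$ shifts) is tuned so that after adding the Chernoff deviation $\frac{\epsilon}{4}n$ and the $\frac{\epsilon}{20}n$ / $\frac{\epsilon}{10}n$ losses from Claim~\ref{cl:M-is-a-good-estimator}, the final errors still fit inside the target $(2,\epsilon n)$ approximation — which is exactly what the choice $\frac{\epsilon}{2}n$ achieves.
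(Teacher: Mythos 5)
Your proof is correct and follows essentially the same route as the paper's: compute $\E[X_i]=2\E_\pi|M_1(\pi)|/n$, apply the Chernoff bound to get $fn\in 2\E_\pi|M_1(\pi)|\pm\epsilon n/4$ with probability $1-2/n^4$, and then combine with Claim~\ref{cl:M-is-a-good-estimator}. The explicit accounting of the constant factors ($\frac{1}{20}+\frac{1}{2}+\frac{1}{8}<1$ and $\frac{1}{4}-\frac{1}{10}>0$) is a nice elaboration of the paper's terser final step, but it is the same argument.
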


\begin{proof}
Let us now measure the expected value of our estimates. From our definition, $X_i = 1$ if and only if a random vertex $v^i \in V_1$ for a random permutation $\pi$ is matched to another vertex of $V_1$ in $\GMM{H, \pi}$. Recall that we defined $M_1(\pi)$ to be the set of edges in $\GMM{H, \pi} \cap (V_1 \times V_1)$. Combined with the fact that the number of vertices matched in a matching is twice the size of the matching, this implies that
$$
	\E[X_i] = \Pr_{v^i, \pi}[X_i = 1] = \frac{2\E_\pi|M_1(\pi)|}{|V_1|} = \frac{2\E_\pi|M_1(\pi)|}{n}.
$$
As such,
\begin{equation}\label{eq:ajcllrcg128937}
	\E[X] = \E[X_1 + \ldots + X_k] = \frac{2 k\E_\pi|M_1(\pi)|}{n}.
\end{equation}
Since $X$ is sum of independent Bernoulli random variables, by the Chernoff bound (Proposition~\ref{prop:chernoff}):
\begin{equation}\label{eq:ajhcll92123}
	\Pr\left[|X - \E[X]| \geq \sqrt{12 \E[X] \ln n}\right] \leq 2 \exp\left(- \frac{12 \E[X] \ln n}{3 \E[X]}\right) = 2/n^{4}.
\end{equation}
Noting from Algorithm~\ref{alg:adjmatrix} that $f \cdot n = Xn/k$, inequality (\ref{eq:ajhcll92123}) implies that with probability $1-2/n^4$, 
\begin{flalign*}
	f \cdot n &\in \frac{(\E[X] \pm \sqrt{12 \E[X] \ln n})n}{k}\\
	&= \frac{\E[X] n}{k} \pm \sqrt{12 \E[X] n^2 k^{-2} \ln n} \\
	&= 2\E_\pi|M_1(\pi)| \pm  \sqrt{24 \E_\pi|M_1(\pi)| n k^{-1} \ln n} \tag{By (\ref{eq:ajcllrcg128937}).}\\
	&= 2\E_\pi|M_1(\pi)| \pm  \sqrt{\E_\pi|M_1(\pi)| \epsilon^2 n/16} \tag{Since $k = 16 \cdot 24 \frac{\ln n}{\epsilon^2}$.}\\
	&\in 2\E_\pi|M_1(\pi)| \pm  \frac{\epsilon n}{4}.\tag{Since $\E_\pi|M_1(\pi)| \leq |V_1| = n$.}
\end{flalign*}

Combined with $\frac{1}{2} \mu(G) - \frac{\epsilon}{20} n \leq \E_\pi|M_1(\pi)| \leq \mu(G)$ from Claim~\ref{cl:M-is-a-good-estimator}, and $\widetilde{\mu} = \frac{f n}{2} - \frac{\epsilon}{2} n$, the range above for $fn$ implies $\frac{1}{2} \mu(G) - \epsilon n < \widetilde{\mu} < \mu(G)$. 

Combined with $\nu(G) - \frac{\epsilon}{10} n \leq 2\E_\pi|M_1(\pi)| \leq 2\nu(G)$ from Claim~\ref{cl:M-is-a-good-estimator} and $\widetilde{\nu} = f n + \frac{\epsilon}{2}n$, the range above for $fn$ implies $\nu(G) < \widetilde{\nu} < 2 \nu(G) + \epsilon n$.
\end{proof}

\begin{proof}[Proof of Theorem~\ref{thm:adjmatrix}]
	By Claim~\ref{cl:query-size-from-V1} each call to Lemma~\ref{lem:linear-implementation} for a random vertex from $V_1$ leads to $\widetilde{O}(n/\epsilon)$ adjacency list queries to $H$. By Observation~\ref{obs:matrix-adjlistqueries}, each of these adjacency list queries to $H$, can be implement with one adjacency matrix query to $G$ in $O(1)$ time. The total expected number of adjacency matrix queries to $G$ and the time-complexity of the algorithm, is therefore, $k \cdot \widetilde{O}(n/\epsilon) = \widetilde{O}(n/\epsilon^3)$.
	
	To achieve the high probability bound on the time-complexity, as in the proofs of Theorems~\ref{thm:adjlist-multiplicative} and \ref{thm:adjlist-additive}, we run $\Theta(\log n)$ instances of Algorithm~\ref{alg:adjmatrix} in parallel and return the output of the instance that terminates first. By Markov's inequality, each instance terminates in time $\widetilde{O}(n/\epsilon^3)$ with a constant probability. As such, at least one of the instances terminates in $\widetilde{O}(n/\epsilon^3)$ time with probability $1-1/\poly(n)$. On the other hand, since the approximation guarantee of Lemma~\ref{lem:adjmatrix-approx} holds with probability $1-1/\poly(n)$ for each instance, {\em all} $O(\log n)$ instances (including the one that terminates first) achieve a $(2, \epsilon n)$-approximation with probability $1-1/\poly(n)$.
\end{proof}


\section{Acknowledgements}

I thank Sanjeev Khanna for many helpful discussions, and additionally thank Yu Chen, Sanjeev Khanna, and Zihan Tan for insightful comments. 

\bibliographystyle{plain}
\bibliography{references}

\begin{thebibliography}{10}

\bibitem{AlonRVX12}
Noga Alon, Ronitt Rubinfeld, Shai Vardi, and Ning Xie.
\newblock {Space-Efficient Local Computation Algorithms}.
\newblock In {\em Proceedings of the Twenty-Third Annual {ACM-SIAM} Symposium
  on Discrete Algorithms, {SODA} 2012, Kyoto, Japan, January 17-19, 2012},
  pages 1132--1139.

\bibitem{AssadiCK19}
Sepehr Assadi, Yu~Chen, and Sanjeev Khanna.
\newblock {Sublinear Algorithms for $(\Delta+1)$ Vertex Coloring}.
\newblock In {\em Proceedings of the Thirtieth Annual {ACM-SIAM} Symposium on
  Discrete Algorithms, {SODA} 2019, San Diego, California, USA, January 6-9,
  2019}, pages 767--786. {SIAM}, 2019.

\bibitem{BehnezhadDELMS19SPAA}
Soheil Behnezhad, Laxman Dhulipala, Hossein Esfandiari, Jakub Lacki, Vahab~S.
  Mirrokni, and Warren Schudy.
\newblock {Massively Parallel Computation via Remote Memory Access}.
\newblock In {\em The 31st {ACM} on Symposium on Parallelism in Algorithms and
  Architectures, {SPAA} 2019, Phoenix, AZ, USA, June 22-24, 2019}, pages
  59--68. {ACM}, 2019.

\bibitem{BehnezhadDELMS20VLDB}
Soheil Behnezhad, Laxman Dhulipala, Hossein Esfandiari, Jakub Lacki, Vahab~S.
  Mirrokni, and Warren Schudy.
\newblock {Parallel Graph Algorithms in Constant Adaptive Rounds: Theory meets
  Practice}.
\newblock {\em Proc. {VLDB} Endow.}, 13(13):3588--3602, 2020.

\bibitem{BlellochSPAA}
Guy~E. Blelloch, Jeremy~T. Fineman, and Julian Shun.
\newblock {Greedy Sequential Maximal Independent Set and Matching are Parallel
  on Average}.
\newblock In {\em 24th {ACM} Symposium on Parallelism in Algorithms and
  Architectures, {SPAA} '12, Pittsburgh, PA, USA, June 25-27, 2012}, pages
  308--317, 2012.

\bibitem{BinomialSampler2}
Karl Bringmann, Fabian Kuhn, Konstantinos Panagiotou, Ueli Peter, and Henning
  Thomas.
\newblock {Internal {DLA:} Efficient Simulation of a Physical Growth Model -
  (Extended Abstract)}.
\newblock In {\em Automata, Languages, and Programming - 41st International
  Colloquium, {ICALP} 2014, Copenhagen, Denmark, July 8-11, 2014, Proceedings,
  Part {I}}, pages 247--258, 2014.

\bibitem{CharikarMT20}
Moses Charikar, Weiyun Ma, and Li{-}Yang Tan.
\newblock {Unconditional Lower Bounds for Adaptive Massively Parallel
  Computation}.
\newblock In {\em {SPAA} '20: 32nd {ACM} Symposium on Parallelism in Algorithms
  and Architectures, Virtual Event, USA, July 15-17, 2020}, pages 141--151.
  {ACM}, 2020.

\bibitem{ChazelleRT05}
Bernard Chazelle, Ronitt Rubinfeld, and Luca Trevisan.
\newblock {Approximating the Minimum Spanning Tree Weight in Sublinear Time}.
\newblock {\em {SIAM} J. Comput.}, 34(6):1370--1379, 2005.

\bibitem{ChenICALP20}
Yu~Chen, Sampath Kannan, and Sanjeev Khanna.
\newblock {Sublinear Algorithms and Lower Bounds for Metric {TSP} Cost
  Estimation}.
\newblock In {\em 47th International Colloquium on Automata, Languages, and
  Programming, {ICALP} 2020, July 8-11, 2020, Saarbr{\"{u}}cken, Germany
  (Virtual Conference)}, pages 30:1--30:19, 2020.

\bibitem{CzumajS09MST}
Artur Czumaj and Christian Sohler.
\newblock {Estimating the Weight of Metric Minimum Spanning Trees in Sublinear
  Time}.
\newblock {\em {SIAM} J. Comput.}, 39(3):904--922, 2009.

\bibitem{CzumajS10}
Artur Czumaj and Christian Sohler.
\newblock {Sublinear-time Algorithms}.
\newblock In {\em Property Testing - Current Research and Surveys}, volume 6390
  of {\em Lecture Notes in Computer Science}, pages 41--64. Springer, 2010.

\bibitem{CzumajS20NearestNeighbor}
Artur Czumaj and Christian Sohler.
\newblock {Sublinear Time Approximation of the Cost of a Metric $k$-Nearest
  Neighbor Graph}.
\newblock In {\em Proceedings of the 2020 {ACM-SIAM} Symposium on Discrete
  Algorithms, {SODA} 2020, Salt Lake City, UT, USA, January 5-8, 2020}, pages
  2973--2992. {SIAM}, 2020.

\bibitem{EsfandiariM18}
Hossein Esfandiari and Michael Mitzenmacher.
\newblock {Metric Sublinear Algorithms via Linear Sampling}.
\newblock In {\em 59th {IEEE} Annual Symposium on Foundations of Computer
  Science, {FOCS} 2018, Paris, France, October 7-9, 2018}, pages 11--22. {IEEE}
  Computer Society, 2018.

\bibitem{BinomialSampler}
Martin Farach{-}Colton and Meng{-}Tsung Tsai.
\newblock {Exact Sublinear Binomial Sampling}.
\newblock {\em Algorithmica}, 73(4):637--651, 2015.

\bibitem{Feige06}
Uriel Feige.
\newblock {On Sums of Independent Random Variables with Unbounded Variance and
  Estimating the Average Degree in a Graph}.
\newblock {\em {SIAM} J. Comput.}, 35(4):964--984, 2006.

\bibitem{FischerTALG}
Manuela Fischer and Andreas Noever.
\newblock {Tight Analysis of Parallel Randomized Greedy {MIS}}.
\newblock {\em {ACM} Trans. Algorithms}, 16(1):6:1--6:13, 2020.

\bibitem{GhaffariUittoSODA19}
Mohsen Ghaffari and Jara Uitto.
\newblock {Sparsifying Distributed Algorithms with Ramifications in Massively
  Parallel Computation and Centralized Local Computation}.
\newblock In {\em Proceedings of the Thirtieth Annual {ACM-SIAM} Symposium on
  Discrete Algorithms, {SODA} 2019, San Diego, California, USA, January 6-9,
  2019}, pages 1636--1653. {SIAM}, 2019.

\bibitem{GoldreichR08}
Oded Goldreich and Dana Ron.
\newblock {Approximating Average Parameters of Graphs}.
\newblock {\em Random Struct. Algorithms}, 32(4):473--493, 2008.

\bibitem{KapralovKSSODA14}
Michael Kapralov, Sanjeev Khanna, and Madhu Sudan.
\newblock {Approximating matching size from random streams}.
\newblock In {\em Proceedings of the Twenty-Fifth Annual {ACM-SIAM} Symposium
  on Discrete Algorithms, {SODA} 2014, Portland, Oregon, USA, January 5-7,
  2014}, pages 734--751. {SIAM}, 2014.

\bibitem{KapralovSODA20}
Michael Kapralov, Slobodan Mitrovic, Ashkan Norouzi{-}Fard, and Jakab Tardos.
\newblock {Space Efficient Approximation to Maximum Matching Size from Uniform
  Edge Samples}.
\newblock In {\em Proceedings of the 2020 {ACM-SIAM} Symposium on Discrete
  Algorithms, {SODA} 2020, Salt Lake City, UT, USA, January 5-8, 2020}, pages
  1753--1772, 2020.

\bibitem{KarloffSV10}
Howard~J. Karloff, Siddharth Suri, and Sergei Vassilvitskii.
\newblock {A Model of Computation for MapReduce}.
\newblock In {\em Proceedings of the Twenty-First Annual {ACM-SIAM} Symposium
  on Discrete Algorithms, {SODA} 2010, Austin, Texas, USA, January 17-19,
  2010}, pages 938--948. {SIAM}, 2010.

\bibitem{KuhnMW-JACM16}
Fabian Kuhn, Thomas Moscibroda, and Roger Wattenhofer.
\newblock {Local Computation: Lower and Upper Bounds}.
\newblock {\em J. {ACM}}, 63(2):17:1--17:44, 2016.

\bibitem{NguyenOnakFOCS08}
Huy~N. Nguyen and Krzysztof Onak.
\newblock {Constant-Time Approximation Algorithms via Local Improvements}.
\newblock In {\em 49th Annual {IEEE} Symposium on Foundations of Computer
  Science, {FOCS} 2008, October 25-28, 2008, Philadelphia, PA, {USA}}, pages
  327--336, 2008.

\bibitem{OnakSODA12}
Krzysztof Onak, Dana Ron, Michal Rosen, and Ronitt Rubinfeld.
\newblock {A Near-Optimal Sublinear-Time Algorithm for Approximating the
  Minimum Vertex Cover Size}.
\newblock In {\em Proceedings of the Twenty-Third Annual {ACM-SIAM} Symposium
  on Discrete Algorithms, {SODA} 2012, Kyoto, Japan, January 17-19, 2012},
  pages 1123--1131, 2012.

\bibitem{ParnasRon07}
Michal Parnas and Dana Ron.
\newblock {Approximating the Minimum Vertex Cover in Sublinear Time and a
  Connection to Distributed Algorithms}.
\newblock {\em Theor. Comput. Sci.}, 381(1-3):183--196, 2007.

\bibitem{YoshidaSTOC09}
Yuichi Yoshida, Masaki Yamamoto, and Hiro Ito.
\newblock {An improved constant-time approximation algorithm for maximum
  matchings}.
\newblock In {\em Proceedings of the 41st Annual {ACM} Symposium on Theory of
  Computing, {STOC} 2009, Bethesda, MD, USA, May 31 - June 2, 2009}, pages
  225--234, 2009.

\end{thebibliography}

\appendix

\clearpage
\section{Implementation Details}\label{apx:implementation}

Here we prove Lemma~\ref{lem:linear-implementation} using similar ideas to \cite[Section~4]{OnakSODA12}. Our modifications to the proof of  \cite{OnakSODA12} are straightforward and we claim no novelty in this section. As already discussed in Section~\ref{sec:adjlist}, the first (standard) idea is to generate the random permutation $\pi$ locally by sampling an independent rank $\sigma_e$ for each edge $e$ which is a real in $[0, 1]$ chosen uniformly at random, and forming $\pi$ by sorting the edges in the increasing order of their ranks.

\newcommand{\exposed}[1]{\ensuremath{\normalfont \texttt{exposed}(#1)}}
\newcommand{\nextint}[1]{\ensuremath{k(#1)}}
\newcommand{\opennext}[1]{\ensuremath{\normalfont\texttt{expose\_next}(#1)}}
\newcommand{\lowest}[1]{\ensuremath{\normalfont\texttt{lowest}(#1)}}
\newcommand{\exposedcount}[1]{\ensuremath{\normalfont\texttt{count}(#1)}}

Now suppose w.l.o.g. that $\Delta$ is a power of 2 and consider the following sub-intervals of $[0, 1)$:\footnote{If $\Delta$ is not a power of two, just take the smallest power of two that is larger than $\Delta$.}
$$
I_0 := \left[0, 1/\Delta\right), \qquad I_1 := \left[1/\Delta, 2/\Delta\right), \ldots, I_i := \left[2^{i-1}/\Delta, 2^i/\Delta\right), \ldots, I_{\log_2 \Delta} := \left[0.5, \,\, 1\right),
$$ 
and define $s_i$ such that $I_i = [s_i, s_{i+1})$. For any vertex $v$, we maintain the following data structures:
\begin{itemize}[topsep=2pt, itemsep=0pt]
	\item \exposed{v}: This is a subset of the neighbor set $N(v)$ of $v$ in graph $G$ that is initially empty. For any $u \in \exposed{v}$, vertex $v$ knows the rank $\sigma_{(v, u)}$ of the edge between them (but $v$ may not know the location of $u$ in its adjacency list as this edge may have been notified to $v$ by $u$). We store $\exposed{v}$ in two binary search trees, one indexed by the vertex IDs and the other indexed by the rank of the edges.
	\item \nextint{v}: This is an integer in $\{0, \ldots, (\log_2 \Delta) + 1\}$ which is initially 0 for every vertex $v \in V$. For any $v \in V$ and any edge  $e=(v, u)$ of $v$ with rank $\sigma_e \in I_0 \cup \ldots \cup I_{k(v)-1}$, it will hold that $u \in \exposed{v}$ and so $v$ is ``aware'' of this edge $e$ and its rank. Since initially $\nextint{v} = 0$ and $s_0 = 0$ for all $v \in V$, the vertices do not need to be aware of any of their edges initially.
	\item \exposedcount{v, i}: The number of neighbors $u$ of $v$ in $\exposed{v}$ such that $\sigma_{(v, u)} \in I_i$.
\end{itemize}

Armed with these data structures, we will describe a procedure $\lowest{v, i}$ which returns a vertex $w$ where $(v, w)$ is the $i$-th lowest rank edge of $v$. We first formalize how we can implement the oracles using this procedure and then formalize it.

\begin{algorithm}[H]\label{alg:imp-vertexoracle}
\caption{Implementation of the vertex oracle $\vertexoracle{v}$.}
	$j \gets 1$.
	
	$w \gets \lowest{v, j}$.
	
	\While{$j \leq \deg(v)$}{
		\lIf{$\edgeoracle{(v, w), w} = \true$}{\Return \true}
		
		$j \gets j + 1$.
		
		$w \gets \lowest{v, j}$.
	}
	\Return \false
\end{algorithm}

\smallskip

\begin{algorithm}[H]\label{alg:imp-edgeoracle}
\caption{Implementation of the edge oracle $\edgeoracle{e=(u, v), u}$.}

	\lIf{we have already computed $\edgeoracle{e, u}$}{\Return the computed answer.}\label{line:cache}
	
	$j \gets 1$.
	
	$w \gets \lowest{u, j}$.
	
	\While{$w \not= v$}{
			\lIf{$\edgeoracle{(u, w), w} = \true$}{\Return \false.}
			
			$j \gets j + 1$, 
			
			$w \gets \lowest{u, j}$.
	}
	\Return \true
\end{algorithm}

%
%
%

Next, we turn to formalize how we implement function $\lowest{v, i}$. To do this, we need a procedure $\opennext{v}$ which increases $k(v)$ by one and ensures that the invariants of the data structures continue to hold. That is, to increase $k(v)$ by one, we first make sure that $v$ is made aware of all of its edges in interval $I_{k(v)}$ by having them stored in \exposed{v}. Once we have $\opennext{v}$, procedure $\lowest{v, i}$ can be implemented as follows: 

\begin{algorithm}[H]\label{alg:lowest}
\caption{$\lowest{v, i}$}	
	\lIf{$i \geq \deg(v)$}{\Return $\emptyset$.}
	\While{$\exposedcount{v, 1} + \ldots + \exposedcount{v, k(v)-1} < i$}{
		\opennext{v}.
	}
	
	Let $(v, w)$ be the $i$-th lowest-rank edge of $v$ in $\exposed{v}$. This can be found in $O(\log \Delta)$ time using the BST that $\exposed{v}$ is stored in which is indexed by the ranks.
	 
	\Return $w$.
\end{algorithm}

\paragraph{Procedure $\opennext{v}$:} To expose all the edges of $v$ in interval $I_{k(v)}$, we first take a subsample $S(v, k(v))$ of $[\deg(v)]$ by including each element independently with prob. $p := |I_{\nextint{v}}|/(1 - s_{\nextint{v}})$. Note that $p$ is the probability that the rank of an edge is in $I_{\nextint{v}}$ conditioned on that its rank is in $I_{k(v)} \cup \ldots \cup I_{\log_2 \Delta}$. The goal is to assign a uniform rank from $I_{\nextint{v}}$ to any edge of $v$ whose index in $v$'s adjacency list is sampled in $S(v, k(v))$, but we have to be careful that these ranks do not contradict the previously drawn ranks. To do this, for each sampled index $i \in S(v, k(v))$, we query the $i$-th neighbor $u$ of $v$. If $u \in \exposed{v}$, we do not reveal a new rank for $(u, v)$ as $\sigma((u, v))$ is already drawn. Otherwise, we consider the two cases $k(u) \leq k(v)$ and $k(u) \geq k(v) + 1$ individually. In the former case, we go ahead and draw a rank $\sigma((u, v)) \in I_{\nextint{v}}$ uniformly at random, add $u$ to $\exposed{v}$, increase $\exposedcount{v, k(v)}$ by one, add $v$ to $\exposed{u}$, and increase $\exposedcount{u, k(v)}$ by one. In the latter case, however, the fact that $k(u) \geq k(v) + 1$ implies that $u$ is already aware of all of its edges with ranks in $I_{k(v)}$ which combined with $u \not\in \exposed{v}$ implies that $(u, v)$ should not have a rank in $I_{k(v)}$. Hence, in this case we do not reveal a new rank. At the end, we increase $k(v)$ by one. It can be confirmed that with this process, the rank $\sigma(e)$ of any edge $e$ in the graph is drawn independently and uniformly at random from $[0, 1]$.

To implement \opennext{v} efficiently, instead of forming the subsample $S(v, k(v))$ of $[\deg(v)]$ by going over the elements in $[\deg(v)]$ one by one and sampling each independently with probability $p$ which takes $\Theta(\deg(v))$ time, we first draw its size $|S(v, k(v))|$ from the binomial distribution $B(\deg(v), p)$. This can be done in $\polylog n$ time both in expectation and with probability $1-1/\poly(n)$, using the sublinear time {\em exact} binomial samplers of \cite{BinomialSampler,BinomialSampler2}.\footnote{Particularly, see Theorem~2 of \cite{BinomialSampler} and the paragraph next to it which refers to \cite{BinomialSampler2}.} Then, we pick an $|S(v, k(v))|$-size subset $S(v, k(v))$ of $[\deg(v)]$ uniformly at random which can be done in $\widetilde{O}(|S(v, k(v))|)$ time. 

We now turn to analyze the time-complexity of the algorithm. We start with two claims.

\begin{claim}\label{cl:exposenext-time}
	Consider a vertex $v$ and suppose that we run $\opennext{v}$ when $k(v)$ equals some number $k$; then procedure $\opennext{v}$ takes only  $\widetilde{O}(\deg_{k-1}(v) + 1)$ time where $\deg_i(v)$ is the number of edges of $v$ that end up receiving ranks in $I_i$ (in case $k=0$ define $\deg_{-1}(v) = 0$). This guarantee holds, w.h.p., throughout the algorithm for all calls to $\opennext{v}$ and all $v \in V$.
\end{claim}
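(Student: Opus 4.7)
The plan is to decompose the running time of a single call to $\opennext{v}$ made at the moment $k(v) = k$ into three components, express the dominant cost in terms of $|S(v, k)|$, and then relate $|S(v, k)|$ to $\deg_{k-1}(v)$ via two Chernoff bounds followed by a union bound over all $(v, k)$ pairs.

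First, I would observe that such a call runs in $\widetilde{O}(|S(v, k)| + 1)$ time: drawing $|S(v, k)|$ from $B(\deg(v), p)$ with $p = |I_k|/(1 - s_k)$ takes $\polylog n$ time w.h.p.\ using the exact binomial samplers of \cite{BinomialSampler, BinomialSampler2}; selecting a uniformly random $|S(v, k)|$-size subset of $[\deg(v)]$ takes $\widetilde{O}(|S(v, k)|)$ time; and for each sampled index $i$, querying the $i$-th neighbor $u$ of $v$, testing $u \in \exposed{v}$, comparing $k(u)$ with $k(v)$, and performing the constant number of BST insertions/updates on $\exposed{v}, \exposed{u}, \exposedcount{v, k}, \exposedcount{u, k}$ costs $O(\log n)$ per index. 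Hence the per-call time is $\widetilde{O}(|S(v, k)| + 1)$.

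Next, I would show $|S(v, k)| = \widetilde{O}(\deg_{k-1}(v) + 1)$ w.h.p. The key computation is
\[
\E[|S(v, k)|] = \deg(v) \cdot \frac{|I_k|}{1 - s_k}.
\]
Using $|I_k| = 2|I_{k-1}|$ for $k \geq 2$ together with $1 - s_k \geq 1/2$ (since $s_k \leq s_{\log_2 \Delta} = 1/2$), one gets $\E[|S(v, k)|] \leq 4\deg(v)|I_{k-1}| = 4\E[\deg_{k-1}(v)]$; the boundary cases $k \in \{0, 1\}$ are similar, with in particular $\E[|S(v, 0)|] \leq \deg(v)/\Delta \leq 1$. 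A Chernoff bound (Proposition~\ref{prop:chernoff}) then gives $|S(v, k)| \leq O(\E[|S(v, k)|] + \log n)$ w.h.p. To replace $\E[\deg_{k-1}(v)]$ by $\deg_{k-1}(v)$ itself, I would case-split: if $\E[\deg_{k-1}(v)] \geq C \log n$ for a sufficiently large constant $C$, a second Chernoff bound gives $\deg_{k-1}(v) \geq \frac{1}{2}\E[\deg_{k-1}(v)]$ w.h.p., yielding $\E[|S(v, k)|] = O(\deg_{k-1}(v))$; otherwise $\E[|S(v, k)|] = O(\log n)$ and is absorbed into the additive $+1$ slack. Either way, $|S(v, k)| = O(\deg_{k-1}(v) + \log n) = \widetilde{O}(\deg_{k-1}(v) + 1)$ w.h.p. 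A union bound over the $\leq n(\log_2 \Delta + 1)$ pairs $(v, k)$, each failure event having probability $1/\poly(n)$, promotes this to hold for all calls and all vertices simultaneously throughout the algorithm.

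The main obstacle I anticipate is the two-sided randomness in the final inequality: both $|S(v, k)|$ and $\deg_{k-1}(v)$ are random, so a naive Markov-style argument only delivers $\E[|S(v, k)|] = O(\E[\deg_{k-1}(v)])$, and the case-split above is needed to convert this into a pointwise bound involving the realization of $\deg_{k-1}(v)$. A minor additional wrinkle is that the binomial sampler itself only achieves $\polylog n$ runtime w.h.p.\ rather than deterministically; this contributes another $1/\poly(n)$ failure probability per call, which is absorbed by the same union bound.
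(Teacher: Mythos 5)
Your proof is correct and follows essentially the same route as the paper's: bound the per-call time of $\opennext{v}$ by $\widetilde{O}(|S(v,k)|+1)$, compare $\E|S(v,k)|$ to $\E[\deg_{k-1}(v)]$ via the geometric interval lengths and $1-s_k\geq 1/2$, then apply Chernoff and a union bound over the $O(n\log\Delta)$ pairs $(v,k)$. The only difference is that you spell out the two-sided case-split on whether $\E[\deg_{k-1}(v)]\geq\Theta(\log n)$, which the paper compresses into a single sentence ("both are sums of independent Bernoullis and concentrate"); your version is just the more explicit rendering of the same argument.
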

\begin{proof}
	As discussed, $\opennext{v}$ runs in time $\widetilde{O}(|S(v, k)|)$. Moreover,
	$$
	\E|S(v, k)| = \frac{|I_{k}|}{1 - s_{k}} \deg(v) \leq 2|I_{k}| \deg(v),
	$$
	where the inequality follows from $s_{k(v)} \leq 1/2$ which itself follows from $k(v) \leq \log_2 \Delta$, noting that if $k(v) = \log_2 \Delta + 1$ we do not call $\opennext{v}$. On the other hand, we have
	$$
	\E[\deg_{k-1}(v)] = |I_{k-1}|\deg(v) \geq (|I_k| \deg(v) - 1) / 2.
	$$
	Since $S(v, k)$ and $\deg_k(v)$ are both sums of independent Bernoulli random variables, Chernoff bound asserts they are highly concentrated around their expectations. A union bound over all $n$ vertices and the $O(\log \Delta)$ times that we may call $\opennext{\cdot}$ for each finishes the proof.
\end{proof}

\begin{claim}\label{cl:lowest-time}
Consider a vertex $v$ and suppose that we call procedures $\lowest{v, 1}, \ldots, \lowest{v, j}$ for some $1 \leq j \leq \deg(v)$. The total time spent implementing all these $j$ calls is upper bounded by $\widetilde{O}(j)$. The guarantee holds, w.h.p., for all $v \in V$ and all $j$ throughout the algorithm. 	
\end{claim}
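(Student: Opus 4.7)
The plan is to split the total work into (a) the per-call BST lookups of $\lowest{v,\cdot}$ and (b) the $\opennext{v}$ invocations triggered by the while loop, and to amortize the latter against $j$ via the loop guard. Each of the $j$ calls ends with a single $O(\log \Delta)$ binary search on the rank-indexed copy of $\exposed{v}$, contributing $\widetilde{O}(j)$ in total; the guard itself can be tested in $O(1)$ by maintaining the running sum $\exposedcount{v, 0}+\ldots+\exposedcount{v, k(v)-1}$ as a single integer that is updated inside $\opennext{v}$.

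For the $\opennext{v}$ work, let $k^*$ denote the value of $k(v)$ at the end of the sequence of $j$ calls. Since $k(v)$ begins at $0$ and each $\opennext{v}$ increments it by exactly one, the total number of $\opennext{v}$ invocations is exactly $k^* \leq \log_2 \Delta + 1$. By Claim~\ref{cl:exposenext-time}, with high probability the invocation that starts with $k(v)=k$ takes $\widetilde{O}(\deg_{k-1}(v)+1)$ time, so summing over $k=0,\ldots,k^*-1$ the total $\opennext{v}$ cost is $\widetilde{O}\!\left(\sum_{k=0}^{k^*-2}\deg_k(v) + \log \Delta\right)$, using $\deg_{-1}(v)=0$.

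The key remaining step is to bound $\sum_{k=0}^{k^*-2}\deg_k(v)$ by $j$, which is where the while-loop guard does the work. By construction of $\opennext{v}$, once the call at level $k$ completes every edge of $v$ whose rank lies in $I_k$ has been exposed, so $\exposedcount{v,k}=\deg_k(v)$ from that point onwards. The last $\opennext{v}$ call, which raised $k(v)$ from $k^*-1$ to $k^*$, only fired because the guard $\exposedcount{v,0}+\ldots+\exposedcount{v,k^*-2}<i$ held for some $i\leq j$ at that moment; hence $\sum_{k=0}^{k^*-2}\deg_k(v)<j$. Putting everything together yields total time $\widetilde{O}(j+\log \Delta)=\widetilde{O}(j)$. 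The high probability guarantee of Claim~\ref{cl:exposenext-time} already holds jointly for every $v\in V$ and every one of the $O(\log \Delta)$ calls per vertex, so the bound extends uniformly to all $v$ and all $j$ without an additional union bound. The only subtlety worth being careful with is verifying the invariant $\exposedcount{v,k}=\deg_k(v)$ after $\opennext{v}$ completes at level $k$; this is immediate from the subsampling procedure combined with the exact binomial sampler, both of which are described just before the statement of the claim.
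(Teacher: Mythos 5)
Your proof is correct and follows essentially the same structure as the paper's: both split the cost into the work done directly inside $\lowest{v,\cdot}$ (which is $\widetilde{O}(j)$ since there are $j$ calls and $K \le \log_2\Delta + 1$ levels) and the cost of the at most $K$ invocations of $\opennext{v}$, and both then bound the latter by invoking Claim~\ref{cl:exposenext-time} and using the while-loop guard to show $\sum_{k \le K-2}\deg_k(v) < j$. Your observation that the guard can be tested in $O(1)$ via a running sum is a small implementation-level refinement over the paper's $O(K)$-per-evaluation accounting, but both reach the same $\widetilde{O}(j)$ bound.
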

\begin{proof}
	Let $K$ be the final value of $k(v)$ after running $\lowest{v, 1}, \ldots, \lowest{v, j}$. The condition of the while-loop in Algorithm~\ref{alg:lowest} throughout all these $j$ calls to $\lowest{v, \cdot}$ is evaluated at most $K+j$ times since at most $K$ turn out to be true and exactly $j$ are false. Each evaluation takes $O(K)$ time. As such, the total running time spent directly in $\lowest{v, 1}, \ldots, \lowest{v, j}$ (i.e., ignoring the time spent in $\opennext{v}$) is $O((K + j)K) = \widetilde{O}(j)$ since $K \leq \log_2 \Delta + 1$.
	
	By Claim~\ref{cl:exposenext-time}, the $K$ calls to $\opennext{v}$ take $\widetilde{O}(\deg_{-1}(v) + \ldots + \deg_{K-2}(v))$ time. On the other hand, $\deg_{-1}(v) + \ldots + \deg_{K-2}(v) < j$ or otherwise, after the first $K-1$ calls to $\opennext{v}$, we get $\exposedcount{v, 0} + \ldots + \exposedcount{v, K-2} \geq j$ and so we do not call $\opennext{v}$ another time. Hence, the overall time-complexity is indeed $\widetilde{O}(j)$.
\end{proof}

Now we turn to analyze the total time-complexity of the algorithm. Consider a call to the edge oracle \edgeoracle{e=(u, v), u} of Algorithm~\ref{alg:imp-edgeoracle} and let $j$ be the final value of $j$ in Algorithm~\ref{alg:imp-edgeoracle}. Observe that the time spent directly in $\edgeoracle{e, u}$ is $O(j)$. We also call $\lowest{u, 1}, \ldots, \lowest{u, j-1}$ and generate $O(j)$ recursive calls to Algorithm~\ref{alg:imp-edgeoracle}. As we showed in Claim~\ref{cl:lowest-time}, the total time needed to execute the calls to $\lowest{u, 1}, \ldots, \lowest{u, j}$ is $\widetilde{O}(j)$. Hence, the time spent directly in $\edgeoracle{e, u}$ as well as the cost of calling $\lowest{\cdot, \cdot}$ by it is near-linear in the number of new recursive calls to Algorithm~\ref{alg:imp-edgeoracle} generated directly by \edgeoracle{e, u}. Similarly, the time spent directly in the vertex oracle $\vertexoracle{v}$ and its calls to $\lowest{\cdot, \cdot}$ is near-linear in the number of edge-oracle calls generated directly by $\vertexoracle{v}$. As such, the total time-complexity of the algorithm is near-linear in the total number of calls to the edge-oracle of Algorithm~\ref{alg:imp-edgeoracle}.

To finalize the proof, let $\pi$ be the permutation corresponding to $\sigma$ obtained by sorting the edges in the increasing order of their $\sigma$-ranks. Suppose that we run the vertex oracle $\vertexoracle{v, \sigma}$ in Algorithm~\ref{alg:imp-vertexoracle} and also run the original vertex oracle $\vertexoracle{v}$ of Algorithm~\ref{alg:vertexoracle} with respect to $\pi$. Observe that the number of recursive calls to Algorithm~\ref{alg:imp-edgeoracle} via Algorithm~\ref{alg:imp-vertexoracle} is exactly equal to the number of recursive calls to Algorithm~\ref{alg:edgeoracle} via Algorithm~\ref{alg:vertexoracle} which is $T(v, \pi)$. By the discussion of the previous paragraph, this means that the algorithm has time complexity $\widetilde{O}(T(v, \pi))$.

\end{document}